\def\mO{{\mathrm O}}
\def\noi{\noindent}
\def\sm{\setminus}
\def\med{\medskip}
\def\depart{\varphi}
\def\size{{\it size}}
\def\transit{\tau}
\def\transituni{\mu}
\def\capa{c}
\def\source{s}
\def\sink{t}
\def\W{W}
\def\P{P}
\def\Q{\rho}
\def\RA{RP}
\def\opt{\mathsf{opt}}
\def\tt{\theta}
\def\T{T}
\def\I{{\cal I}}
\def\LP{LP}
\newcommand\pro{product\xspace}
\newcommand\prods{products\xspace}
\newcommand\house{warehouse\xspace}
\newcommand\houses{warehouses\xspace}
\newcommand\outdegree{carry-out\xspace}
\newcommand\indegree{carry-in\xspace}
\newcommand\mainpro{reallocation\ problem\xspace}
\newcommand\dmgraph{demand graph\xspace}
\newcommand{\alg}[1]{\textsc{#1}}   % Format for names of algorithms
\newcommand{\nop}[1]{}
\newcommand{\ishii}[1]{{\color{black} #1}}
\title{Reallocation Problems with Minimum Completion Time} %TODO Please add
\author{Toshimasa Ishii}{Graduate School of Economics and Business
Administration, Hokkaido University, Sapporo, 060-0809,
Japan}{ishii@econ.hokudai.ac.jp}{}{}
\author{Jun Kawahara
%\footnote{Optional footnote, e.g. to mark corresponding author}
}{Graduate School of Informatics, Kyoto University, Kyoto, 606-8501, Japan}
{jkawahara@i.kyoto-u.ac.jp}
{}{}
\author{Kazuhisa Makino
%\footnote{Optional footnote, e.g. to mark corresponding author}
}{RIMS, Kyoto University, Kyoto, 606-8502, Japan}
{makino@kurims.kyoto-u.ac.jp}{}{}
\author{Hirotaka Ono
%\footnote{Optional footnote, e.g. to mark corresponding author}
}{Graduate School of Informatics, 
  Nagoya University, Nagoya 464-8601, Japan}
{ono@nagoya-u.jp}{}{}
\authorrunning{T. Ishii et al.} %TODO mandatory. First: Use abbreviated first/middle names. Second (only in severe cases): Use first author plus 'et al.'
\keywords{Algorithm, NP-hardness, Capacity augmentation, Approximation, Scheduling, and Bin packing} %TODO mandatory; please add comma-separated list of keywords
\begin{document}

\maketitle

%TODO mandatory: add short abstract of the document
\begin{abstract}
Reallocation scheduling is one of the most fundamental problems in various areas such as supply chain management, logistics, and transportation science. In this paper, we introduce the reallocation problem that models the scheduling in which products are with fixed cost, non-fungible, and reallocated  in parallel, 
and comprehensively study the complexity of 
the problem under various settings of the transition time, product size, and capacities. 
We show that  the problem can be solved in polynomial time
 for a fundamental setting where the product size and transition time  are both uniform.
 We also show that the feasibility of the problem is NP-complete even for little more general settings, which implies that no polynomial-time algorithm constructs a feasible schedule of the problem unless P$=$NP. 
% inapproximability of the problem, i.e., no algorithm exists with approximation guarantee. 
 We then consider the relaxation of the problem, which we call the capacity augmentation, 
 and derive a reallocation schedule feasible with the augmentation such that 
 the completion time is at most the optimal of the original problem. 
 When the warehouse capacity is sufficiently large, 
  we design constant-factor approximation algorithms under all the settings. 
  We also show the relationship between the reallocation problem and 
  the bin packing problem when the warehouse and carry-in capacities are sufficiently large. 
\end{abstract}

%\newpage

\section{Introduction}\label{intro:sec}

{\large\bf Problem setting}: Suppose that there are several warehouses that store many products (or items). 
Some of the  products are already stored at the designated warehouses, and the others are stored at tentative warehouses. Such temporally stored  products should be reallocated to designated warehouses. Namely, each  product $p$ at the tentative warehouse $s(p)$ is required to be reallocated to the designated warehouse $t(p)$. To reallocate  product $p$, it takes a certain length of time $\tau(p)$, called transit time of $p$.  
Each  product $p$ also has size $size(p)$,  and each warehouse has three kinds of capacities, that is, (1) the capacity of warehouse itself,  (2) carry-in size capacity, and (3) carry-out size capacity. Capacity (1) restricts the total size of  products stored in a warehouse at each moment. Capacities (2) and (3) restrict the total size of  products that are simultaneously carried in and out, respectively. In this setting, we consider the problem of finding a reallocation schedule with minimum completion time. 
 
As an illustrative  example of our problem, let us consider the following scenario. 
There are 6  products $p_1, p_2, p_3, p_4, p_5$ and $p_6$ of sizes $1, 3, 5, 6, 3$ and $4$, respectively. 
Two warehouses $W_1$ and $W_2$ have the warehouse capacities $20$ and $10$, carry-in capacities $6$ and $5$, and carry-out capacities $5$ and $5$, respectively. 
The transit time satisfies  $\tau(p_i)=1$ for every $i$ except $i=6$, and $\tau(p_6)=2$.  
Initially,  products $p_1, p_2, p_3$ and $p_4$ are stored in $W_1$, 
which satisfies the warehouse capacity constraint, since  their total size $15$ is smaller than the warehouse capacity $20$ of $W_1$.
Products $p_5$ and $p_6$ are initially stored in $W_2$, which also satisfies the warehouse capacity constraint. Suppose that $p_1, p_2$ and $p_3$ are designated to be stored in $W_2$, whereas $p_4$, $p_5$ and $p_6$ are designated to be stored in $W_1$. 
Note that all  products can be stored in the designated warehouses, since the designated allocation satisfies the warehouse capacity constraint. 
%The total size $13$ of $p_4, p_5, p_6$ is less than 20 the capacity of $W_1$, 
%and  the total size $9$ of $p_1,p_2,p_3$ is also less than the capacity $10$ of $W_2$; the total capacities of $W_1$ and $W_2$ are satisfied in the designated warehouses. 
We also note that $p_1,p_2$ and $p_3$ cannot be moved simultaneously, since the carry-out capacity constraint is violated. Figure \ref{fig:example} depicts the initial and target configurations of this example.  
\begin{figure}[hbtp]
%  \begin{minipage}[b]{0.40\linewidth}
%    \centering
%    \includegraphics[keepaspectratio, width=60mm]{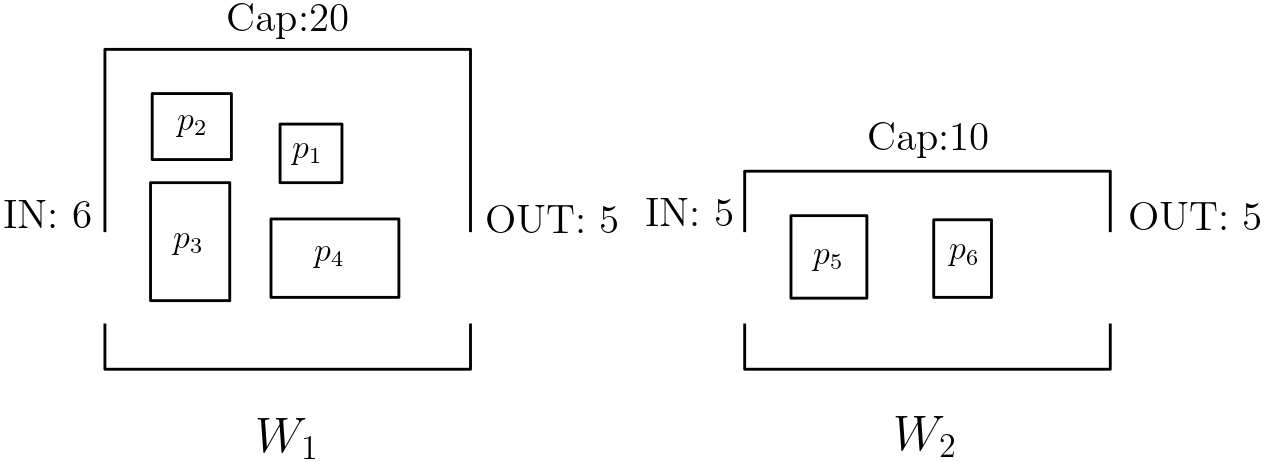}
%    \caption{Initial configuration} \label{fig:example1}
%  \end{minipage}
%  \hspace*{1cm}
%  \begin{minipage}[b]{0.40\linewidth}
%    \centering
%     \includegraphics[keepaspectratio, width=60mm]{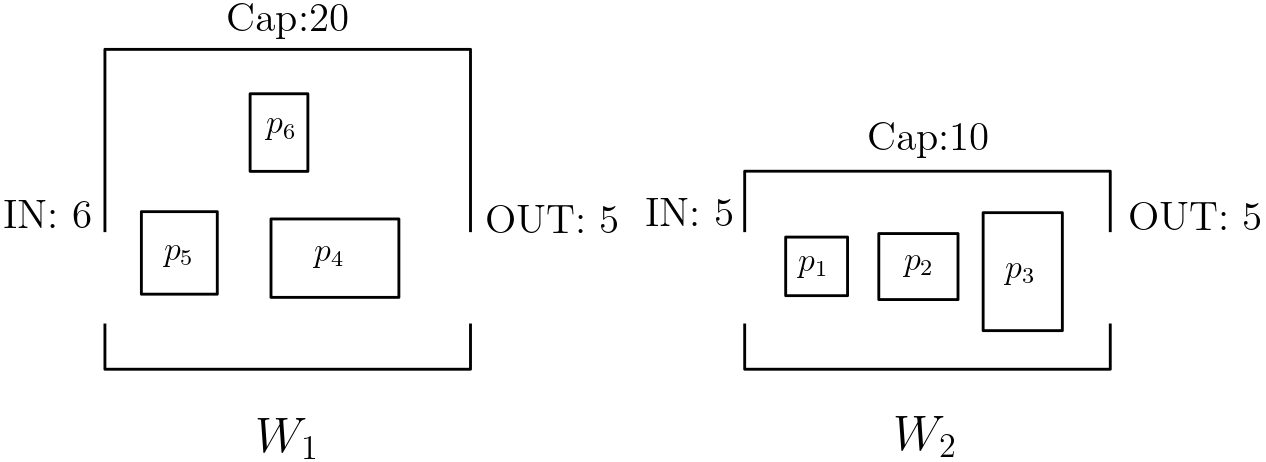}
%    \caption{Target configuration} \label{fig:example2}
%  \end{minipage}
    \centering
     \includegraphics[keepaspectratio, width=130mm]{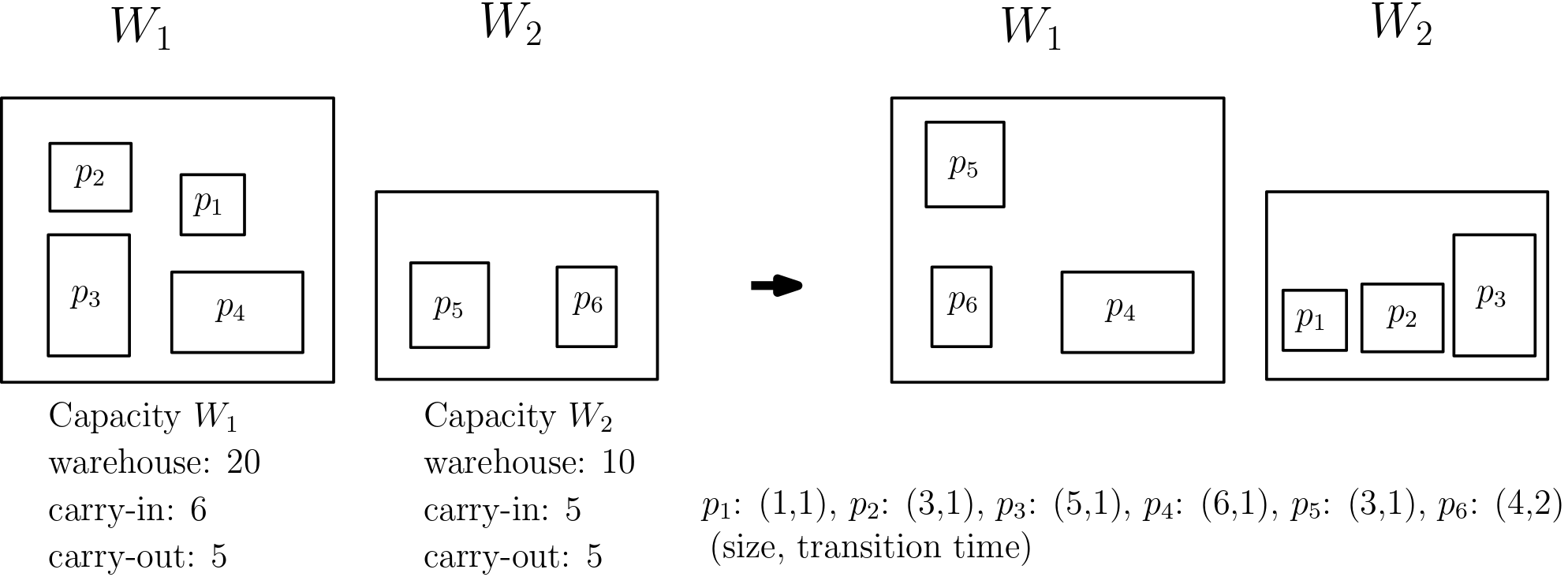}
     \caption{Example: initial (left) and target (right) configurations.} \label{fig:example}
\end{figure}

In this example, 
%a possible schedule is to carry out $p_1$ and $p_2$ from $W_1$, and $p_6$ from $W_2$ at time $0$, 
we need to move $p_1$, $p_2$ and $p_3$ to $W_2$, which should be done separately due to carry-out constraint $5$. Thus we consider to move $p_1$ and $p_2$ at time $0$ and then move $p_3$ at time $1$, for example. In such a case, we also need to move either $p_5$ or $p_6$ to $W_1$ at time $0$, otherwise it violates capacity $10$ of warehouse $W_2$ at time $1$. Thus a possible schedule is to carry out $p_1$ and $p_2$ from $W_1$, and $p_5$ from $W_2$ at time $0$, and then carry out $p_3$ from $W_1$ and $p_6$ from $W_2$ at time $1$. The completion time is $3$ in this case, because $\tau(p_6)=2$ and thus $p_6$ reaches $W_1$ at time $3$. One may consider that it is better to carry out $p_6$ instead of $p_5$ at time $0$ and then carry out $p_5$ at time $1$, but it violates carry-in constraint $6$ of $W_1$ at time $2$. By this observation, we can see that the minimum completion time is $3$ in this example. 
%----------------
%I hope I did not eliminate the sentences
%
%Please add it <= Ono sensei
%
%--------------------------

\medskip
\noindent
{\large\bf Applications and related work}: 
%\subsection*{Applications and Related work}
Reallocation scheduling is one of the most fundamental problems in various areas such as supply chain management, logistics, and transportation science.  Many models and variants of reallocation have been studied from both  theoretical and practical viewpoints. 
In fact, our problem setting is initiated by an industry-academia joint project of Advanced Mathematical Science for Mobility Society by Toyota Motor Corporation and Kyoto University~\cite{Toyota}.     
The reallocation models and their variants are categorized by the following aspects: (1) reallocation cost, (2) fungibility of  products, and (3) parallel/sequential execution.  
%(1) a cost of reallocating  products is a matter, or not, (2)  products are fungible or not, (3) reallocation can be done in a parallel way or only in a sequential way. (1) 
Our problem assumes that (1) a cost (i.e., transit time) of reallocating a product is given in advance, (2)  products are not fungible, 
and (3) reallocations are done in a parallel way, but many other settings are possible. 

For example, the dial-a-ride problem is regarded as the vehicle routing problem for reallocation~\cite{cordeau2007dial,ho2018survey}, which designs vehicle routes and schedules for customers  who request the pick-up and drop-off points. 
In the dial-a-ride problem, (1) the cost of reallocation depends on the routes, (2) customers are not fungible of course, (3) reallocation is done in a sequential way for one vehicle. 
The bike-sharing rebalancing problem is  to schedule trucks for re-distributing shared bikes with the objective of minimizing total cost~\cite{dell2014bike}. In the bike-sharing rebalancing problem, (1) the cost of reallocation also depends on the routes of trucks and (3) reallocation is done in a sequential way  for one truck again, but (2) shared bikes are fungible, i.e.,  the desired bike distribution is created without distinguishing between bikes.  
%In fact, the reallocation schedule models real-world problems discussed in a practically oriented project with Toyota Motor Corporation, which include reallocating shared cars in parking lots.    
%As an industry-academia joint project, Toyota Motor Corporation and Kyoto University recently launched a research unit of Advanced Mathematical Science for Mobility Society~\cite{Toyota}. In the project, several real-world problems other than above are realized to be modeled as reallocation scheduling, e.g., reallocating shared cars in parking slots. 
%Reallocating shared cars in parking slots is a similar problem, which is considered in an industry-academia joint project of Advanced Mathematical Science for Mobility Society by Toyota Motor Corporation and Kyoto University~\cite{Toyota}. In the problem, each car directly moves to another parking slot, whereas shared bikes are delivered by trucks along their routes. In the model, (2) cars are fungible in many cases, and thus (1) the reallocation cost depends only on the assignment of cars to the parking slots, and (3) reallocation is done in a parallel way.  

In these problems, the costs arise from rather transportation of vehicles for delivery than reallocation of  products.  
To investigate the nature of reallocation itself, it might be reasonable to assume that a cost (i.e., transit time) of reallocating a product is given in advance.   
%Not only that, the reallocation scheduling appears in various fields, such as memory allocation with limited buses and settlement fund circulation.  
Miwa and Ito \cite{miwa1996complexity,miwa1997linear,miwa2000np} focus on reallocation scheduling  under the setting: (1) a cost of reallocating  products is uniform, (2)  products are not fungible, 
and (3) reallocations are done in a sequential way, which is different from ours. 
%Miwa and Ito \cite{miwa1996complexity,miwa1997linear,miwa2000np} investigate the computational complexity of another type of reallocation scheduling, which computes how to move vacant spaces. 
Although they do not deal with non-uniform transition time, several intractability results as well as polynomial solvability ones  were obtained \cite{miwa1996complexity,miwa1997linear,miwa2000np}.  
A  problem similar to the model of Miwa and Ito was also considered in \cite{HAYAKAWA201986}, in the context of fund circulation. In \cite{hayakawa2018does,hayakawa2020liquidity}, Hayakawa pointed out the importance of controlling a payment ordering among banks in terms of the stability of economics, and introduced the problem to find a payment ordering that minimizes extra money to put in order to make payments without shortage, which can be viewed as a  reallocation scheduling problem. Here 
 the extra money corresponds to vacant spaces for the reallocation. 
From such a viewpoint, the computational complexity of the fund circulation was investigated in \cite{HAYAKAWA201986}.

\medskip
\noindent
{\large\bf Our contribution}: 
%\subsection*{Our contribution} 
In this paper, we consider the reallocation scheduling in which  products are with fixed cost,  non-fungible, and reallocated  in parallel, which we simply call the {\em reallocation problem}. %where it is simply called the {\em reallocation problem}. 
We investigate the computational complexity of the reallocation problem under various scenarios. 
We first see the most basic scenario  where both the  product size and transition time are uniform. In this scenario, we present an $\mO(\ishii{mn} \log m)$-time algorithm to find a reallocation schedule with minimum completion time, where
$m$ and $n$ denote the number of  products and warehouses, respectively.
%$m$ denotes the number of  products. %More precisely, if $\tau(p)=1$ for every  product $p$, 
The algorithm utilizes a cycle decomposition for the so-called demand graph. 
When the  product size is only uniform, the problem turns out to be NP-hard, and we show that the algorithm for the basic scenario above provides a reallocation schedule 
whose completion time is at most twice the optimal, if one of the carry-out and carry-in capacities is sufficiently large in addition.
Here we note that the carry-in and carry-out capacity constraints are always %automatically 
satisfied if the carry-in and carry-out capacities are sufficiently large, respectively. 
For more general scenarios, even the feasibility is NP-complete for very restricted settings: (1) we have only two warehouses and (2) the transit time  and  the  product size satisfy
$\transit(p)=1$
and $\size(p) \in \{1,2\}$ for every  product $p$.
Due to the hardness of the feasibility, 
%we have no choice but to give up constructing %approximation algorithms for the reallocation problem. 
%polynomial-time algorithms with approximation guarantee for the problem as it is.    
no polynomial-time algorithm can construct a feasible schedule of the problem unless P$=$NP.  
Instead, 
we admit the relaxation of capacity constraints, which we call \emph{capacity augmentation}. 
Namely, we augment the original capacities with additional ones, and try to find a reallocation schedule such that it is feasible with the augmented instance and has small completion time. 
By utilizing the  bi-criteria algorithm schemes
%iterative approximation algorithms
 for the generalized assignment problem \cite{ST93} and the $2$-sided placement problem \cite{KMRT15},  
we can find in polynomial time  a reallocation schedule such that 
(i) the completion time is at most the minimum completion time for the original problem, (ii)  warehouses can be stored at most twice of the original capacities (i.e., warehouse capacity $c(w)$ is augmented with $c(w)$ for each warehouse $w$), 
and (iii) carry-out/in capacities are enlarged to the original ones plus the largest and second largest carried-out/in product sizes (i.e., the carry-out and carry-in capacities  for each warehouse $w$ are respectively augmented with 
$\sigma^+_1(w)+\sigma^+_2(w)$ and $\sigma^-_1(w)+\sigma^-_2(w)$, where $\sigma^+_i(w)$ and $\sigma^-_i(w)$ denote the $i$-th largest size of  products that are initially and finally stored in warehouse $w$, respectively). 
In the scenario when the carry-in (resp., carry-out) capacity is sufficiently large, condition (iii) is strengthened into the original carry-out (resp., carry-in) capacity plus the largest size of carry-out (resp., carry-in)  product size. 
We remark that if both carry-in and carry-out capacities are sufficiently large, the problem turns out to be trivial, where an optimal reallocation schedule can be obtained by sending all  products in time $0$. 
 Table \ref{tab:result0} summarizes the results above, i.e., for general warehouse capacity. 
\begin{table}[thbp]
    \caption{The summary of results for general warehouse capacity.  } 
    \scriptsize
    \label{tab:result0}
    \begin{center}
    \begin{tabular}{|c|c|c|c||c|c|}\hline 
     product & transition & \multicolumn{2}{c||}{capacity} & \multirow{2}{*}{complexity $\&$} & \multirow{3}{*}{capacity augmentation$^\dagger$}\\ \cline{3 - 4}   
     size & time  & carry-out & carry-in& \multirow{2}{*}{approximability$^*$}&\\ 
     ($size(p)$) & ($\tau(p)$) & ($d^+$) & ($d^-$)  & &\\  \hline \hline 
    \multirow{2}{*}{uniform} & \multirow{2}{*}{uniform}  & \multirow{2}{*}{general} & \multirow{2}{*}{general} & $\mO(\ishii{mn \log m})$ & \multirow{2}{*}{---------}\\ 
    &&&&[Th. \ref{poly:th}]&\\\hline
    \multirow{2}{*}{uniform} & \multirow{2}{*}{general}  & general & $+\infty$ &  2-apx &\multirow{2}{*}{---------}\\  
      &  & $(+\infty$ & general) & [\ishii{Th. \ref{2apx:th}}]&\\ \hline
   \multirow{2}{*}{uniform} & \multirow{2}{*}{general} & \multirow{2}{*}{general} & \multirow{2}{*}{general} &  NPH &\multirow{2}{*}{---------}\\ 
   &&&&[\ishii{Th \ref{flowshop:th}}]&\\ \hline
      \multirow{2}{*}{general} & \multirow{2}{*}{general}    & \multirow{2}{*}{$+\infty$} & \multirow{2}{*}{$+\infty$} & $\mO(m+n)$& \multirow{2}{*}{---------} \\ 
     &&&&[\ishii{Th. \ref{trivial:th}}]&\\\hline
   &&&&&$\hat{c}=2c$\\
   \multirow{2}{*}{general} & \multirow{2}{*}{uniform} & general &$+\infty$ & NPC &$\hat{d}^+=d^++\sigma^+_1$\\  
      &  & $(+\infty$ & general) & [Th. \ref{hard1:th}]&($\hat{d}^-=d^-+\sigma^-_1$)\\ 
      &&&&&[Th.~\ref{bicriteria:th} (ii)]\\ \hline
    &&&&&$\hat{c}=2c$ \\
    \multirow{2}{*}{general} & \multirow{2}{*}{uniform} & \multirow{2}{*}{general} & \multirow{2}{*}{general} & NPC &$\hat{d}^+=d^++\sigma^+_1+\sigma^+_2$\\ 
      &  & & &[Th. \ref{hard1:th}]&$\hat{d}^-=d^-+\sigma^-_1+\sigma^-_2$\\ 
      &&&&&[Th.~\ref{bicriteria:th} (i)]\\
      \hline
    &&&&&$\hat{c}=2c$\\
   \multirow{2}{*}{general} & \multirow{2}{*}{general}  & general & $+\infty$ & 
   NPC &$\hat{d}^+=d^++\sigma^+_1$\\  
      &  & ($+\infty$ & general) & [Th. \ref{hard1:th}]&($\hat{d}^-=d^-+\sigma^-_1$)\\ 
      &&&&&[Th.~\ref{bicriteria:th} (ii)]\\\hline
     & & &   & &$\hat{c}=2c$ \\ 
 \multirow{2}{*}{general}& \multirow{2}{*}{general}&\multirow{2}{*}{general}&\multirow{2}{*}{general}&NPC&$\hat{d}^+=d^++\sigma^+_1+\sigma^+_2$ \\
    &&&&[Th. \ref{hard1:th}]&$\hat{d}^-=d^-+\sigma^-_1+\sigma^-_2$\\
    &&&&&[Th.~\ref{bicriteria:th} (i)]\\ \hline  
    \end{tabular}
    \end{center}
    $^*$: ``NPC'' stands for  NP-completeness of the feasibility of the reallocation problem and ``NPH'' stands for NP-hardness for the reallocation problem.\\
    $^\dagger$: $\hat{c}$, $\hat{d}^+$, and $\hat{d}^-$ respectively denote augmented warehouse, carry-out, and carry-in capacities.
\end{table}
\nop{
\begin{table}[htbp]
    \caption{The summary of results: Complexity and Approximability. } 
    \scriptsize
    \label{tab:result1}
    \begin{center}
    \begin{tabular}{|l|l|c|c|c||c|c|}\hline 
     product & transition & \multicolumn{3}{c||}{capacity} & \multirow{2}{*}{complexity $\&$} & \multirow{2}{*}{capacity}\\ \cline{3 - 5}   
     size & time & warehouse & carry-out & carry-in& \multirow{2}{*}{approximability}&\multirow{2}{*}{augmentation}\\ 
     ($size(p)$) & ($\tau(p)$) & ($c(w)$) & ($d^+$) & ($d^-$)  & &\\  \hline \hline 
    \multirow{2}{*}{uniform} & \multirow{2}{*}{uniform}  & \multirow{2}{*}{general} & \multirow{2}{*}{general} & \multirow{2}{*}{general} & $O(m \log m)$ & \multirow{2}{*}{---------}\\ 
    &&&&&[Th. \ref{poly:th}]&\\\hline
    \multirow{2}{*}{uniform} & \multirow{2}{*}{general} & \multirow{2}{*}{general} & general & $+\infty$ &  2-apx &\multirow{2}{*}{---------}\\  
    &  &  & $(+\infty$ & general) & [Th. \ref{???}]&\\ \hline
   \multirow{2}{*}{uniform} & \multirow{2}{*}{general} & \multirow{2}{*}{general} & \multirow{2}{*}{general} & \multirow{2}{*}{general} &  NPH$^\dagger$ &\multirow{2}{*}{---------}\\ 
   &&&&&[Th \ref{}]&\\ \hline
    \multirow{2}{*}{general} & \multirow{2}{*}{general}  & \multirow{2}{*}{general}   & \multirow{2}{*}{$+\infty$} & \multirow{2}{*}{$+\infty$} & $O(m+n)$& \multirow{2}{*}{---------} \\ 
     &&&&&[Th. \ref{irukana?? dousuru}]&\\\hline
   \multirow{2}{*}{general} & \multirow{2}{*}{uniform} & \multirow{2}{*}{general} & general & $+\infty$ & NPC$^*$ &\\  
    &  &  & $(+\infty$ & general) & [Th. \ref{hardness1:subsec}]&\\ \hline
    \multirow{2}{*}{general} & \multirow{2}{*}{uniform} & \multirow{2}{*}{general} & \multirow{2}{*}{general} & \multirow{2}{*}{general} & NPC$^*$ &\\ 
    &  &  & & &[Th. \ref{hardness1:subsec}]&\\ \hline
   \multirow{2}{*}{general} & \multirow{2}{*}{general} & \multirow{2}{*}{general} & general & $+\infty$ & 
   NPC$^*$ &\\  
    &  &  & $(+\infty$ & general) & [Th. \ref{hardness1:subsec}]&\\ \hline
    \multirow{3}{*}{general} & \multirow{3}{*}{general} & \multirow{3}{*}{general} & \multirow{3}{*}{general} & \multirow{3}{*}{general}  & \multirow{2}{*}{NPC$^*$}&$\hat{c}(w)=2c(w)$ \\ 
    &&&&&\multirow{2}{*}{[Th. \ref{hardness1:subsec}]}&$\hat{d}^+(w)=d(w)+\sigma(w)+\sigma(w)$\\
    &&&&&&\\ \hline  
    \end{tabular}
    \end{center}
    $^*$: ``NPC'' stands for  NP-completeness of the feasibility of the reallocation problem.\\
    $^\dagger$: ``NPH'' stands for NP-hardness for the reallocation problem. 
\end{table}
}
\begin{table}[hbtp]
    \caption{The summary of results for 
    %uniform
    \ishii{sufficiently large}
    warehouse capacity.  } 
    \small
    \label{tab:result1}
    \begin{center}
    \begin{tabular}{|c|c|c|c||c|}\hline 
     product &  transition & \multicolumn{2}{c||}{capacity} & \multirow{3}{*}{complexity $\&$ approximability$^*$} \\ \cline{3 - 4}   
     size & time  & carry-out & carry-in& \\ 
     ($size(p)$) & ($\tau(p)$)  & ($d^+$) & ($d^-$)  & \\  \hline \hline 
   \multirow{2}{*}{uniform} & \multirow{2}{*}{general} & general & $+\infty$ & \multirow{2}{*}{\ishii{$\mO(n+m\log{m})$}  [\ishii{Th. \ref{apx:th}(i)}]}\\  
      &  & $(+\infty$ & general) & 
   \\ \hline
   \multirow{2}{*}{uniform} & \multirow{2}{*}{general}  & \multirow{2}{*}{general} & \multirow{2}{*}{general}  & NPH  [\ishii{Th \ref{flowshop:th}}] \\ 
   &&&&\ishii{4-apx} [Th. \ref{apx:th}\ishii{(iv)}] \\ \hline
   \multirow{2}{*}{general} & \multirow{2}{*}{uniform}  & general & $+\infty$ & NPH [Th. \ref{hard2:th}]\\  
      &  & $(+\infty$ & general) &  3/2-apx  (tight) [Ths. \ref{apx:th}\ishii{(ii)},\ref{binpack1:th}]\\ \hline
    \multirow{2}{*}{general} & \multirow{2}{*}{uniform} & \multirow{2}{*}{general} & \multirow{2}{*}{general}  & NPH [Th. \ref{hard2:th}]\\
    &&&& 6-apx [Th. \ref{apx:th}\ishii{(v)}]\\ \hline
 \multirow{2}{*}{general} & \multirow{2}{*}{general} & general & $+\infty$ & NPH [Th. \ref{hard2:th}] \\  
      &  & $(+\infty$ & general) &  7/4-apx [Th. \ref{apx:th}\ishii{(iii)}] \\ \hline
    \multirow{2}{*}{general} & \multirow{2}{*}{general} & \multirow{2}{*}{general} & \multirow{2}{*}{general}  & NPH [Th. \ref{hard2:th}]\\
    &&&& 6-apx [Th. \ref{apx:th}\ishii{(v)}]\\ \hline 
    \end{tabular}
    \end{center}
    $^*$: ``NPH'' stands for NP-hardness for the reallocation problem. 
\end{table}
%We also show that our problem is NP-hard, even if the  product size is uniform, 
%which together with the positive result and (2) mentioned above  provides a sharp boundary for the complexity  of our problem. 
%Namely, it is polynomially solvable if the transit time and  product size are both uniform, 
%and it NP-hard if at least one of them is not uniform.  

We finally consider the scenario when all warehouses have sufficiently large capacities, where the summary of our results can be found in Table \ref{tab:result1}.  
In this setting, we propose a $6$-approximation algorithm for our problem that transforms a {\em relaxed} schedule above %the above  bi-criteria-type algorithm
into a feasible schedule such that the completion time is at most 6 times of the minimum completion time. 
In the setting when  at least one of  carry-in and  carry-out capacities is sufficiently large in addition, we present a  $7/4$-approximation algorithm that employs as a subroutine the first-fit-decreasing algorithm for the bin packing problem. 
If we further assume that the transition time is uniform, the approximation ratio is improved to $3/2$, which is best possible in the setting. 
This follows from the fact that the problem in this setting  is essentially equivalent to the bin packing problem. 
We also show that the reallocation problem can be solved in polynomial time, if the  product size is uniform and at least one of the carry-in and carry-out capacities are sufficiently large in addition. 

%Our results are summarized in Tables \ref{tab:result1} and %\ref{tab:result2}. 
%Table \ref{tab:result1} summarizes the results about the time complexity and approximability. 

%Table \ref{tab:result2} summarizes the results about capacity augmentation. 

\nop{%%%%%%%%%%%%%%%%%%
\begin{table}[htbp]
    \caption{The summary of results: Complexity and Approximability. } 
    \small
    \label{tab:result1}
    \begin{center}
    \begin{tabular}{|l|l|c|c|c||l|}\hline 
     product &  transition & \multicolumn{3}{c||}{capacity} & \multirow{3}{*}{complexity and approximability$^*$} \\ \cline{3 - 5}   
     size & time & warehouse & carry-out & carry-in& \\ 
     ($size(p)$) & ($\tau(p)$) & ($c(w)$) & ($d^+$) & ($d^-$)  & \\  \hline \hline 
    uniform & uniform  & general & general & general & $O(m \log m)$ [Th. \ref{poly:th}] \\ \hline
     general &  general  & general & $+\infty$ & $+\infty$ & $O(m+n)$  \\ \hline
    \multirow{2}{*}{uniform} & \multirow{2}{*}{general} & \multirow{2}{*}{general} & general & $+\infty$ &  \ishii{hardness??, 2-apx [algorithm of Th. \ref{poly:th}]}\\  
    &  &  & $(+\infty$ & general) & \\ \hline
   uniform & general & general & general & general  &  \ishii{S-NPH [flowshop]}\\ \hline
   \multirow{2}{*}{general} & \multirow{2}{*}{uniform} & \multirow{2}{*}{general} & general & $+\infty$ & \ishii{S-NPc [Th. \ref{hardness1:subsec}]}\\  
    &  &  & $(+\infty$ & general) & \\ \hline
    general & uniform & general & general & general  & \ishii{S-NPc [Th. \ref{hardness1:subsec}]}\\ \hline
   \multirow{2}{*}{general} & \multirow{2}{*}{general} & \multirow{2}{*}{general} & general & $+\infty$ & \ishii{S-NPc [Th. \ref{hardness1:subsec}]}\\  
    &  &  & $(+\infty$ & general) & \\ \hline
    general & general & general & general & general  & \ishii{S-NPc [Th. \ref{hardness1:subsec}]}\\ \hline  
  \multirow{2}{*}{uniform} & \multirow{2}{*}{general} & \multirow{2}{*}{$+\infty$} & general & $+\infty$ & \ishii{P?? (sending   products in nonincreasing order of $\transit$)}\\  
    &  &  & $(+\infty$ & general) & 
    \ishii{(7/4-apx [Th. \ref{apx:th}(ii)])}\\ \hline
   uniform & general & $+\infty$ & general & general  & \ishii{S-NPH [flowshop], 6-apx [Th. \ref{apx:th}(iii)]}\\ \hline
   \multirow{2}{*}{general} & \multirow{2}{*}{uniform} & \multirow{2}{*}{$+\infty$} & general & $+\infty$ & \ishii{S-NPH [Th. \ref{hard2:th}], (3/2-$\varepsilon$)-inapx [Th. \ref{binpack1:th}]}\\  
    &  &  & $(+\infty$ & general) &  \ishii{3/2-apx [Th. \ref{apx:th}(i)]}\\ \hline
    general & uniform & $+\infty$ & general & general  & \ishii{S-NPH [Th. \ref{hard2:th}], 6-apx [Th. \ref{apx:th}(iii)]}\\ \hline
 \multirow{2}{*}{general} & \multirow{2}{*}{general} & \multirow{2}{*}{$+\infty$} & general & $+\infty$ & \ishii{S-NPH [Th. \ref{hard2:th}],\ (3/2-$\varepsilon$)-inapx [Th. \ref{binpack1:th}]}\\  
    &  &  & $(+\infty$ & general) &  \ishii{7/4-apx [Th. \ref{apx:th}(ii)]} \\ \hline
    general & general & $+\infty$ & general & general  & \ishii{S-NPH [Th. \ref{hard2:th}], 6-apx [Th. \ref{apx:th}(iii)]}\\ \hline 
    \end{tabular}
    \end{center}
    $^*$: ``S-NPc'' stands for strong NP-completeness of the feasibility, and ``S-NPH'' stands for strong NP-hardness for optimization. 
\end{table}
}
\nop{

\begin{table}[htbp]
    \caption{The summary of results: Complexity and Approximability. } 
    \small
    \label{tab:result1}
    \begin{center}
    \begin{tabular}{|l|l|c|c|c||l|}\hline 
     product &  transition & \multicolumn{3}{c||}{capacity} & \multirow{3}{*}{complexity and approximability$^*$} \\ \cline{3 - 5}   
     size & time & warehouse & carry-out & carry-in& \\ 
     ($size(p)$) & ($\tau(p)$) & ($c(w)$) & ($d^+$) & ($d^-$)  & \\  \hline \hline 
    uniform & \multirow{6}{*}{uniform}  & \multirow{4}{*}{general} & \multirow{1}{*}{general} & \multirow{1}{*}{general} & $O(m \log m)$ [Theorem \ref{poly:th}] \\ \cline{1-1}\cline{4-6}
    $\{1,2\}$ &  &  &  &  & S-NPC [Theorem \ref{size12:th}]\\ \cline{1-1} \cline{6-6}  
        \multirow{6}{*}{general}  &  &  & &  & S-NPC for $|W|=2$ [Theorem \ref{hard1:th}]\\  
        &  & \multirow{5}{*}{$\infty$} & general & $\infty$ & $(3/2-\varepsilon)$-inapprox. [Theorem \ref{binpack1:th}]\\  \cline{3-3} \cline{6-6} 
            &  &  &  /$\infty$ &  /general  & S-NPH for $|W|=2$ [Theorem \ref{hard2:th}]\\ 
            &  &  &  &  & $3/2$-approx. [Theorem \ref{apx:th} (i)] \\ \cline{2-2} \cline{6-6} 
            & \multirow{2}{*}{general} &  &  &  & $7/4$-approx. [Theorem \ref{apx:th} (ii)]\\ \cline{4-6}  
            &   &  & general  & general & $6$-approx. [Theorem \ref{apx:th} (iii)] \\
        \hline
    \end{tabular}
    \end{center}
    $^*$: ``S-NPc'' stands for strong NP-completeness of the feasibility, and ``S-NPH'' stands for strong NP-hardness for optimization. 
\end{table}

\begin{table}[htbp]
    \caption{The summary of results: capacity augmentation.}\label{tab:result2}
    \small
    \begin{center}
    \begin{tabular}{|l|l|c|c|c||l|}\hline 
     product &  transition & \multicolumn{3}{c||}{capacity} & \multirow{3}{*}{capacity augmentation$^{\dagger}$ } \\ \cline{3 - 5}   
     size & time & warehouse & carry-out & carry-in& \\ 
     ($size(p)$) & $ (\tau(p)$) & ($c(w)$) & ($d^+$) & ($d^-$)  & \\  \hline \hline 
    \multirow{4}{*}{general} & \multirow{4}{*}{general} & \multirow{4}{*}{$\infty$} & \multirow{2}{*}{general} & \multirow{2}{*}{general} & $d^+$ \verb|+=| $\sigma^+_1+\sigma^+_2$, $d^-$ \verb|+=| $\sigma^-_1+\sigma^-_2$, \\  
         &  &  &  &  & $c(w)$ \verb|+=| $c(w)$ [Theorem \ref{bicriteria:th} (i)]\\  \cline{4-6}
            &   &  & $\infty$ & general & $d^-$ \verb|+=| $\sigma^-_1$ [Theorem \ref{bicriteria:th} (ii)] \\
        \cline{4-6}
            &   &  & general & $\infty$ & $d^+$ \verb|+=| $\sigma^+_1$ [Theorem \ref{bicriteria:th} (ii)] \\
        \hline
    \end{tabular}
    \end{center}
    $^{\dagger}$: ``\emph{parameter} \verb|+=|\emph{value}'' represents that there is an algorithm that can find a schedule with $ALG\le OPT$ under the setting that \emph{parameter} is augmented with \emph{value}, where $ALG$ is the completion time found by the algorithm and $OPT$ is the minimum completion time of the original instance.
\end{table}
\bigskip 
}
The rest of the paper is organized as follows. In Section \ref{preliminaries:sec}, 
%we formally define the problem and give a basic observation about the capacity constraints. 
we give formal definitions and basic observations. 
Section \ref{size2:sec}  presents a polynomial-time algorithm for the setting where both  size and transition time of products are uniform. 
Section \ref{bicriteria:sec} considers the most general setting; we introduce a notion of capacity augmentation, and present a polynomial-time algorithm 
%for the reallocation schedule of minimum completion time 
under augmented capacities. 
In Section \ref{nocap:sec}, we consider the setting where the capacities of warehouses are sufficiently large, and present constant-factor approximation algorithms for the setting. Section \ref{hardness:sec} 
%summarizes 
shows
hardness results in various settings. %Due to the space limitation, we omit most of the detailed proofs, which can be seen in Appendix.   

\section{Preliminaries}\label{preliminaries:sec}
We first define the reallocation problem.
%and then present a standard  integer linear system formulation representing the feasibility (with a bounded time interval) of the problem. 
%Based on this, we provide an integer linear programming formulation of the problem. 
%The formulation are used  in  Section  \ref{bicriteria:sec} to propose approximation algorithms for several scenarios. 
We also introduce demand graphs of the problem, which plays a key role in designing an efficient algorithm  for the reallocation problem with uniform product size and transit time. 

%\subsection{Reallocation problem}\label{problem:sec}
Let  $\P$ be  a set  of \emph{\prods} (or items), and let $\W$ be   a set of \emph{\houses}, where $m=|\P|$ and $n=|\W|$.
Each \pro $p \in \P$ 
has size  $\size(p) \in \mathbb{R}_+$, 
where $\mathbb{R}_+$ denotes the set of nonnegative reals.
It  is initially stored in 
a \emph{source} \house $\source(p) \in \W$,
 required to be reallocated to a \emph{sink} \house $\sink(p) \in \W$, 
  and
its reallocation  from $\source(p)$  to $\sink(p)$
 takes  \emph{transit time} $\transit(p)\,( >0)$. 
 Here we assume that $s(p)\not=t(p)$ for all $p \in P$. 
 %$\in \mathbb{Z}_+$, 
% where $\mathbb{Z}_+$ denotes the set of nonnegative integers.
 Namely, if it is sent from $s(p)$ at time $\theta$, it reaches $t(p)$ at time $\theta + \transit(p)$. 
Each \house $w \in \W$ has capacity $\capa(w) \in \mathbb{R}_+$, 
which represents the upper bound of the total size of \prods  stored in $w$ at any time. 
Moreover it has carry-out and carry-in capacities $d^+(w), d^-(w) \in \mathbb{R}_+$,  
which respectively represent the upper bounds  of  the total size of
 \prods allowed to be sent from and be received at $w$ at every time.

We consider reallocation schedules of given \prods in the {\it discrete-time model},
which means that for any product $p \in P$,  sending time $\theta_p$ and transit time $\transit(p)$ are  nonnegative and positive integers, respectively.  
For a \house $w \in \W$, let $\P^+(w)$ and $\P^-(w)$ respectively denote the sets of 
\prods $p \in \P$
initially and finally stored at $w$, i.e.,  
 $\P^+(w)=\{p \in \P \mid \source(p)=w\}$
and $\P^-(w)=\{p \in \P \mid \sink(p)=w\}$.
A {\em   reallocation schedule} is a mapping $\depart:P\to \mathbb{Z}_+$, where  $\mathbb{Z}_+$ denotes the set of nonnegative integers, 
and  is called {\em feasible}  
if it satisfies the following three conditions:

\begin{romanenumerate}
%\item Every \pro $p \in \P$ is reallocated from $\source(p)$
%5to $\sink(p)$.

\item At each time  $\tt \in \mathbb{Z}_+$, the total size of \prods departing from $w$
is at most $d^+(w)$ for every \house $w \in \W$, i.e., $\sum_{p \in \P^+(w): \depart(p)=\tt} \size(p) \leq d^+(w)$.
 
\item At each time  $\tt \in \mathbb{Z}_+$, the total size of \prods arriving at $w$
is at most $d^-(w)$ for every \house $w \in \W$, i.e.,
 $\sum_{p \in \P^-(w): \depart(p)=\tt-\transit(p)} \size(p) \leq d^-(w)$.

\item At each time  $\tt \in \mathbb{Z}_+$, the total size of \prods located in  $w$
is at most $\capa(w)$ for every \house $w \in \W$, i.e., $\sum_{p \in \P^+(w): \depart(p)\geq \tt} \size(p)+
\sum_{p \in \P^-(w): \depart(p) \leq \tt-\transit(p)} \size(p) \leq \capa(w)$.
\end{romanenumerate}

\noi
Constraints (i) and (ii) are respectively called   \emph{carry-out} and \emph{carry-in}
\emph{capacity
constraints}, and Constraint (iii)  is  called \emph{\house capacity constraint}.
In this paper, we assume that  
$\size(p) \leq d^+(w)$ for all $p \in \P^+(w)$, 
 $\size(p) \leq d^-(w)$ for all $p \in \P^-(w)$, 
  $\sum_{p \in \P^+(w)}  \size(p) \leq \capa(w)$, and 
   $\sum_{p \in \P^-(w)}  \size(p) \leq \capa(w)$, 
since they are clearly necessary conditions for the feasibility and can be checked in linear time.  
Our problem, called
the \emph{reallocation problem}, is to compute a feasible reallocation schedule with the minimum completion time. 
Here the completion time $T$ of a reallocation schedule $\depart$ is defined as $
T = \max_{p \in \P}\{\depart(p)+\transit(p)\}$. 

\ishii{
We here remark that if
both carry-in and carry-out capacities are
sufficiently large,
then the problem is trivial and
 an optimal reallocation schedule can be obtained by sending all \prods in time
 0, i.e., letting $\depart(p)=0$ for all \prods $p \in \P$.
 Indeed, in this case,
 the carry-in and carry-out capacity constraints are always %automatically 
 satisfied.
The \house capacity constraints are also
 satisfied since
 every \house $w \in \W$ has
 no \pro in $\P^+(w)$ in it
 at each time $\tt>0$ and satisfies
$\sum_{p \in \P^-(w)}  \size(p) \leq \capa(w)$ by
assumption.

\begin{theorem}\label{trivial:th}
The reallocation problem
can be solved in
    $\mO(m+n)$ time if $d^+=d^- \equiv \infty$.
\end{theorem}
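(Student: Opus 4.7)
The plan is to verify that the trivial schedule $\varphi(p) = 0$ for every $p \in P$ is feasible and optimal, then observe that it can be output in linear time. The feasibility check is a direct inspection of the three constraints (i)--(iii) under the assumption $d^+ \equiv d^- \equiv \infty$, and the optimality check uses the fact that each product has $s(p) \ne t(p)$ and therefore must incur at least its own transit time.

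First I would argue feasibility. Constraints (i) and (ii) are vacuous because $d^+(w) = d^-(w) = \infty$ for every warehouse $w$. For the warehouse capacity constraint (iii) at time $\theta \in \mathbb{Z}_+$, I split into two cases. At $\theta = 0$, warehouse $w$ contains exactly the products in $P^+(w)$, and the hypothesis $\sum_{p \in P^+(w)} size(p) \le c(w)$ yields the bound. At $\theta > 0$, every $p \in P^+(w)$ has already departed (since $\varphi(p) = 0$), so the contents of $w$ form a subset of $P^-(w)$ consisting of those $p$ with $\varphi(p) + \tau(p) = \tau(p) \le \theta$; the total size is therefore at most $\sum_{p \in P^-(w)} size(p) \le c(w)$, again by assumption.

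Next I would establish optimality. The completion time of the constructed schedule is $\max_{p \in P} \tau(p)$. Conversely, in any feasible schedule $\varphi'$ and any product $p$, since $s(p) \ne t(p)$ the product must actually be transported, contributing $\varphi'(p) + \tau(p) \ge \tau(p)$ to the completion time. Taking the maximum over $p$ gives a lower bound of $\max_{p \in P} \tau(p)$, matching what the trivial schedule achieves.

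Finally, for the running time: constructing the mapping $\varphi(p) = 0$ takes $O(m)$ time, and reading the warehouse set in order to output the schedule (or verify the preconditions on the input such as $\sum_{p \in P^+(w)} size(p) \le c(w)$) takes $O(m+n)$ time overall. No step in this argument is delicate; the only mild subtlety is handling constraint (iii) for $\theta > 0$, where one must observe that the source-side contribution vanishes after time $0$ so that the sink-side bound alone suffices.
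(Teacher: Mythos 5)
Your proposal is correct and matches the paper's argument: the paper likewise sets $\depart(p)=0$ for all $p$, notes that the carry-in/carry-out constraints are vacuous, and verifies the warehouse capacity constraint exactly as you do (all of $\P^+(w)$ at time $0$, a subset of $\P^-(w)$ afterwards, each bounded by $\capa(w)$ via the standing assumptions). The optimality and $\mO(m+n)$ running-time observations are also the same, so there is nothing to add.
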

}

%%%%
%%\subsection{Interger programming fromulation}\label{IP:subsec}

\nop{
For  a given integer $\T$ as an upper bound of the completion
time, let us represent a standard integer linear system formulation of the feasibility with $T$ of the reallocation problem.   

\begin{align}
%\label{IP0:eq}
% \mbox{minimize} & \sum_{t=0}^{\Q-1} \sum_{p  \in \P} c_{pt}x_{pt}& \\[.08cm] 
%\mbox{subject to}
&\sum_{p \in \P^+(w)} \size(p)x_{p\tt} \ \leq\  d^+(w)
&&  \forall  w\in \W, \ \forall \tt \in [0,\T-1] 
 \label{eq-a1}\\[.05cm]
& \sum_{p \in \P^-(w)} \size(p)x_{p,\tt-\transit(p)} \leq d^-(w)
& &\forall w\in \W,  \ \forall \tt \in [1,\T]
 \label{eq-a2}\\[.05cm]
& \sum_{p \in \P^+(w)} \sum_{\lambda \geq \tt} \size(p)x_{p\lambda} &&\nonumber\\
& \hspace*{1cm}+\sum_{p \in \P^-(w)} \sum_{\lambda \leq \tt}\size(p) x_{p,\lambda-\transit(p)}
\ \leq \ \capa(w)
&& \forall w\in \W,  \ \forall \tt \in [0,\T]
 \label{eq-a3}\\[.05cm]
& \sum_{\tt=0}^{\T-\transit(p)} x_{p\tt} \ = \ 1
&&\forall p\in \P
 \label{eq-a4}\\[.05cm]
& \hspace*{.5cm} x_{p\tt} \ \in \ \{0,1\}
&& \forall p\in \P, \  \forall \tt \in [0,\T-\transit(p)]\label{eq-a5}
\end{align}
Here  $x_{p\tt}$ denotes the indicator variable for a
\pro $p \in \P$ and departure
 time $\tt \in [0,\T-1]$; namely, $x_{p\tt}$ takes 1 if
a \pro $p$ departs at time $\tt$, and 0 otherwise.
Note that $x_{p\tt}$ with $\tt > \T-\transit(p)$ is not defined, since $p$
needs to arrive at $\sink(p)$ by time $\T$.
Inequalities \eqref{eq-a1}, \eqref{eq-a2}, and \eqref{eq-a3}
respectively correspond to carry-out, carry-in, and \house capacity
constraints.
Equality \eqref{eq-a4} ensures that every \pro $p \in \P$ is sent  exactly once by time $\T-\transit(p)$.
Note that the minimum $\T$ for which the integer linear system is feasible is the optimal completion time for the reallocation problem.  
%Thus if we know in advance an upper bound $T$ for the minimum completion time, then the reallocation problem 
%is to minimize $\sum_{p \in \P}\sum_{\lambda \leq T}  M^\lambda x_{p,\lambda-\transit(p)}$ so that it satisfies \eqref{eq-a1}--\eqref{eq-a5}, 
%where $M$ denotes a positive integer with $M\geq m+1$. 
In  Section  \ref{bicriteria:sec}, we  show that the minimum completion time is polynomially bounded, and provide approximation algorithms for the reallocation problem by making use of linear relaxzation of the   
integer linear system formulation. 
}

%\subsection{Demand graph}\label{graph:subsec}
Before ending this section, let us fix some notation on graphs and define demand graphs, which frequently appear in the subsequent sections.   
An undirected or directed graph  $G$ is an ordered pair of its vertex set $V(G)$ and edge set $E(G)$ 
and is denoted by $G=(V(G), E(G))$, or simply $G=(V, E)$.
In an undirected graph  $G$, the \emph{degree}  of a vertex $v$, 
denoted by 
$\delta_G(v)$, is the number of edges incident to $v$. 
In a directed graph $G$, the \emph{out-degree} and  \emph{in-degree}
of a vertex $v$ are respectively defined as 
$\delta^+_G(v)=|\{(v,w) \in E(G) \mid w\in V(G)\}|$ and $\delta^-_G(v)=|\{(u,v) \in E(G) \mid u\in V(G)\}|$.
%and the \emph{degree} of $v$, denoted by $\deg_D(v)$, 
%equals $\deg_D^+(v)+\deg_D^-(v)$. 
We denote by $\Delta^+(G)$ and  $\Delta^-(G)$ the maximum out-degree  and  in-degree of a digraph $G$, respectively. 
Given a set $\W$ of \houses and a set $\P$ of \prods, 
 we represent the demand relationship for \prods  as a directed multigraph   
 $G=(\W, \{(\source(p),\sink(p)) \mid p \in \P\})$, which is called  the \emph{demand graph} of the reallocation problem. 
%We call such a digraph a \emph{\dmgraph}.
%We often refer to an arc of a \dmgraph
% corresponding to a \pro $p \in \P$ as $a_p$. 
%For simplicity, we often use the same notation to refer to corresponding
%\prods in $\P$ and arcs in $D$.

\nop{
Let $D$ be a requirement graph.
For a directed path $P$ in  $D$, we consider a simultaneous
transportation of \prods corresponding arcs in $\P(P)$,
which means that we leave all \prods in $\P(P)$ at the same time.
We call such a transportation a {\em path-move $($with respect to a path
$P)$}.  
Particularly, it is called  
 a {\em cycle-move $($with respect to a path
$P)$} if $P$ is a directed cycle.
}

%\section{Tractable Cases}\label{poly:sec}

\section{Uniform \pro size and transit time}\label{size2:sec}

%%%%%%%%%%%%%%%%%%%%%%%%%%
In this section, we consider  the reallocation problem with the  basic scenario in which product size  and transit time  are both uniform.
We show that by using a cycle decomposition of the demand graph, the reallocation problem can be solved in polynomial time. 
More precisely, we have the following result. 

\begin{theorem}\label{poly:th}
The reallocation problem  with uniform product size and uniform transit time
%$\Q_{\max}+\lambda-1$ is the minimum completion time.  
%the reallocation problem
can be solved in
    $\mO(mn \log m)$ time.
\end{theorem}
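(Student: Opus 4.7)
The strategy is to match a natural lower bound on the completion time with a constructive upper bound based on a cycle decomposition of the demand graph $G$.

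After scaling, I may assume $\size(p) = 1$ and $\transit(p) = 1$, so every capacity becomes an integer count. The carry-out (resp.\ carry-in) constraint at $w$ forces at least $\lceil |\P^+(w)|/d^+(w)\rceil$ (resp.\ $\lceil |\P^-(w)|/d^-(w)\rceil$) distinct departure (resp.\ arrival) times, yielding
$$T^* \ =\ \max_{w \in \W} \max\left\{\left\lceil \frac{|\P^+(w)|}{d^+(w)} \right\rceil,\ \left\lceil \frac{|\P^-(w)|}{d^-(w)}\right\rceil\right\}$$
as a lower bound on the completion time.

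To match $T^*$, I would decompose $E(G)$ into edge-disjoint directed cycles plus edge-disjoint directed paths going from \emph{surplus} vertices (those with $|\P^+(w)| > |\P^-(w)|$) to \emph{deficit} vertices, in $\mO(m+n)$ time by a standard Eulerian-type argument. The payoff is that the edges of a cycle form a load-preserving move (each cycle vertex loses and gains exactly one \pro), while a path changes the load only at its two endpoints. I would then convert this decomposition into a bipartite edge-coloring instance by splitting each $w$ into $d^+(w)$ out-ports and $d^-(w)$ in-ports and distributing \prods among the ports in a balanced round-robin fashion; a balanced K\"onig-type edge coloring on the resulting graph produces $T^*$ color classes, each of which is a valid move set with respect to the carry-in and carry-out constraints. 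Warehouse capacity is verified by observing that the balanced coloring makes the load at $w$ evolve approximately as $|\P^+(w)| + \theta(|\P^-(w)| - |\P^+(w)|)/T^*$, which is sandwiched between $|\P^+(w)|$ and $|\P^-(w)|$, both at most $\capa(w)$ by the standing assumption; the cycle--path decomposition is used to control the $\pm 1$ rounding fluctuations. The dominant cost is the balanced edge coloring on a graph with $\mO(m)$ edges and $\mO(n)$ vertices, implementable in $\mO(mn \log m)$ time via standard techniques, matching the claim.

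The main obstacle I expect is harmonizing the three capacity constraints simultaneously. Carry-in and carry-out are per-time-step combinatorial constraints directly encoded by the edge coloring, but warehouse capacity is a time-evolving \emph{stock} constraint that accumulates the difference between arrivals and departures. The heart of the argument is a lemma showing that the cycle--path decomposition, combined with a balanced coloring, localizes all nonzero net load changes to path endpoints and keeps every intermediate load within $\capa(w) \geq \max(|\P^+(w)|, |\P^-(w)|)$.
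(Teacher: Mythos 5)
Your route is essentially the paper's: the lower bound $\max_{w}\max\{\lceil|P^+(w)|/d^+(w)\rceil,\lceil|P^-(w)|/d^-(w)\rceil\}$, splitting each warehouse into unit-capacity ports, and a K\"onig edge-coloring of the resulting bipartite graph into that many classes, with cycles of the demand graph supplying the load-preserving moves. The one step that does not hold up as written is the warehouse-capacity verification. You argue that a ``balanced'' coloring makes the load at $w$ evolve \emph{approximately} linearly between $|P^+(w)|$ and $|P^-(w)|$, with the cycle--path decomposition controlling ``$\pm1$ rounding fluctuations.'' But the standing assumption gives only $c(w)\geq\max\{|P^+(w)|,|P^-(w)|\}$ with no slack, so even a single unit of overshoot violates capacity, and an arbitrary proper edge coloring of the port graph can indeed produce a class in which some warehouse receives products but sends none. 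What you need is that every color class is \emph{exactly} a union of vertex-disjoint cycles in an augmented demand graph. The paper arranges this by adding dummy products from deficit to surplus warehouses until $|P^+(w)|=|P^-(w)|$ everywhere (your surplus-to-deficit paths, closed into cycles) and then padding the port graph with parallel edges $(w_1,w_2)$ until it is $\Q_{\max}$-regular, so each K\"onig class is a perfect matching and the load at every warehouse is exactly monotone from $|P^+(w)|$ to $|P^-(w)|$. With that replacement your argument coincides with the paper's, including the $\mO((m+n\Q_{\max})\log\Q_{\max})=\mO(mn\log m)$ running time.
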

Here we recall that $m=|P|$ and $n=|W|$. 
Before proving the theorem, let us provide a general lower bound on the minimum completion time. 
For a warehouse $w \in W$, let 
\begin{eqnarray*}
\Q(w) &=& \max\left\{\left\lceil \frac{\sum_{p \in \P^+(w)}\size(p)}{d^+(w)} \right\rceil, 
\left\lceil \frac{\sum_{p \in \P^-(w)}\size(p)}{d^-(w)} \right\rceil\right\}, \\
\Q_{\max}&=&\max_{w \in \W} \Q(w).
\end{eqnarray*}
Then we have the following lemma. 

\begin{lemma}\label{low:lem}
For the reallocation problem, the minimum completion time is at least 
$\Q_{\max} + \min_{p \in \P}\{ \transit(p)\}-1.$
\end{lemma}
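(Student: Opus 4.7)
The plan is to prove the bound by a straightforward capacity/time counting argument applied separately to carry-out and carry-in at each warehouse. Let $\transit_{\min} = \min_{p \in \P} \transit(p)$ and fix any feasible schedule $\depart$ with completion time $T$.

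First I would observe that since every product $p$ must satisfy $\depart(p)+\transit(p) \leq T$ and $\transit(p) \geq \transit_{\min}$, every departure time lies in the set $\{0,1,\dots,T-\transit_{\min}\}$, which contains exactly $T-\transit_{\min}+1$ integer time steps. Fix a warehouse $w \in \W$. By the carry-out capacity constraint (i), at each such time step the total size of products leaving $w$ is at most $d^+(w)$, so summing over the $T-\transit_{\min}+1$ available steps gives
\[
\sum_{p \in \P^+(w)} \size(p) \;\leq\; (T-\transit_{\min}+1)\cdot d^+(w).
\]
Rearranging and using that $T-\transit_{\min}+1$ is a nonnegative integer, this yields
\[
T \;\geq\; \left\lceil \frac{\sum_{p \in \P^+(w)} \size(p)}{d^+(w)} \right\rceil + \transit_{\min} - 1.
\]

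The symmetric argument for carry-in: each product $p$ arrives at $w$ at time $\depart(p)+\transit(p) \in \{\transit_{\min},\dots,T\}$, again a window of $T-\transit_{\min}+1$ integer steps, and by constraint (ii) the total size arriving per step is at most $d^-(w)$. The same manipulation gives the analogous bound with $d^-(w)$ and $\P^-(w)$ in place of $d^+(w)$ and $\P^+(w)$. Taking the maximum of the two bounds gives $T \geq \Q(w) + \transit_{\min}-1$, and then taking the maximum over $w \in \W$ produces $T \geq \Q_{\max}+\transit_{\min}-1$, as desired.

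There is no real obstacle here; the only subtle point is making sure the window of admissible departure/arrival times is counted correctly when $\transit$ is non-uniform (one must use $\transit_{\min}$ rather than the transit time of the specific product, which only weakens the bound and is exactly what the lemma states). Everything else is a one-line application of the capacity constraints and the ceiling inequality.
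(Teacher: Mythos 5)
Your proof is correct and follows the same argument as the paper's: the carry-out (resp.\ carry-in) capacity constraint forces at least $\lceil \sum_{p \in \P^+(w)}\size(p)/d^+(w) \rceil$ departure steps (resp.\ the analogous number of arrival steps), after which the last product still needs at least $\min_{p\in\P}\transit(p)$ time to arrive. You merely make the window-counting explicit where the paper states it tersely; there is no substantive difference.
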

\begin{proof}
By \outdegree  capacity constraints, for any warehouse $w \in W$, we need at least  
$\lceil \frac{\sum_{p \in
\P^+(w)}\size(p)}{d^+(w)} \rceil$ steps to send all the products $p \in P^+(w)$. Thus the minimum completion time is at least $\lceil \frac{\sum_{p \in
\P^+(w)}\size(p)}{d^+(w)} \rceil +\min_{p \in \P}\{ \transit(p)\} -1$. 
Similarly, by \indegree  capacity constraints, 
the minimum  completion time is at least $\lceil \frac{\sum_{p \in
\P^-(w)}\size(p)}{d^-(w)} \rceil +\min_{p \in \P}\{ \transit(p)\} -1$, which proves the lemma. 
\end{proof}

Note that the completion time might be far from the lower bound in Lemma \ref{low:lem}.
However, we below show that it  matches the lower bound $\Q_{\max}+\mu-1$ if product size and transit time are both uniform, where 
 $\mu$ denotes the uniform transit time, i.e.,  $\mu=\tau(p)$ for all $p \in P$.

%\noi
%Hence, 
%the case where both of $\size$ and $\transit$ are uniform
%is polynomially solvable;
%namely, we have the following theorem.

In the rest of  this section, 
we assume without loss of generality that 
\begin{equation}\label{s=1:eq}
 \size(p)  =  1
\end{equation}
 for all \prods $p \in \P$, 
and all functions $d^+$, $d^-$, and $\capa$ are integral.
This is because
an problem instance equivalent to the original one is obtained by using 
$\tilde{d}^+(w)= \lfloor \frac{d^+(w)}{k} \rfloor$, 
$\tilde{d}^-(w)= \lfloor \frac{d^-(w)}{k} \rfloor$, and
$\tilde{\capa}(w)=\lfloor \frac{\capa(w)}{k} \rfloor$ for all $w \in \W$, if $\size(p)=k$ for all $p\in \P$. 
Note that in this case we have 
$\Q(w)=\max\{\lceil \frac{|\P^+(w)|}{d^+(w)} \rceil, \lceil \frac{|\P^-(w)|}{d^-(w)} \rceil\} $ 
 for all $w \in W$. 
Thus we have $\Q_{\max} \leq m$.

Let us now present a simple but important observation of feasible reallocation schedules. 
For two positive integers $a$ and $b$ with $a \leq b$, let $[a,b]=\{a,a+1, \dots, b\}$.  
Let $Q=\{q_1,q_2,\ldots,q_\ell\} $ be
 a set  of \prods which forms  a simple cycle in the demand graph $G=(W,E)$, i.e., 
 $Q$ satisfies 
 $\sink(q_i)=\source(q_{i+1})$  for $i \in [1,\ell]$  and 
  $\source(q_i)\neq \source(q_j)$ for any distinct $i$ and $j$, where $q_{\ell+1}$ is defined as $q_{1}$. 
Then we claim that 
all products in $Q$ can be sent  simultaneously.

Consider a situation where all \prods in $Q$ depart at time $\tt$ (and no other product departs at time $\tt$).
By our assumption that $\size(p)\leq d^+(w)$ for 
 every \pro $p \in \P^+(w)$ and $\size(p)\leq d^-(w)$ for 
 every \pro $p \in \P^-(w)$
as mentioned in Section~\ref{preliminaries:sec}, the \outdegree and \indegree capacity
constraints
are satisfied.
Moreover, $\sink(q_i)$ has a room for $q_i$'s arrival at time $\tt+\mu$,  since $q_{i+1}$ leaves from $\source(q_{i+1})\,(=\sink(q_i))$ at time $\tt$. 
Thus  \house capacity constraints are also satisfied,
which implies the claim.
More generally, if a set $Q$ of products  can be partitioned into vertex-disjoint simple cycles, then they can be sent simultaneously.  

Based on this observation, we  construct an efficient algorithm for the reallocation problem when 
 product size  and transit time  are both uniform.
In order to explain it smoothly, we first consider the following subcase: 
\begin{eqnarray}
 &&  \  d^+(w)  =  d^-(w)  =  1 \mbox{ and }  |\P^+(w)|  =  |\P^-(w)| \mbox{ for  every \house  }\ 
w \in \W.  \label{eq-aa1}
\end{eqnarray}
We show that $P$ can be partitioned into $\Q_{\max}$ sets $P_i$ ($i \in [1,\Q_{\max}]$), each of which forms  vertex-disjoint simple cycles in the demand graph. 
This implies the existence of a feasible reallocation schedule with the completion time $\Q_{\max}+\transituni-1$. By Lemma \ref{low:lem}, we can see that it is an optimal schedule. 
Note that $\Q_{\max}=\max_{w \in W}|\P^+(w)|$ by (\ref{eq-aa1}).

Let $H$ be the  bipartite graph obtained from the demand graph $G=(W,E)$ 
by creating two copies $w_1$ and $w_2$ of each vertex $w \in \W$ 
and adding an edge $(w_1, v_2)$ for each $(w,v) \in E$; namely, $V(H)=W_1\cup W_2$ and
$E(H)=\{(w_1,v_2) \mid (w,v) \in E\}$,
where $W_i =\{w_i \mid w \in \W\}$ for $i=1,2$.
By  assumption (\ref{eq-aa1}), $\Q_{\max}$ represents
the maximum degree $\Delta(H)$ of $H$ and 
  $\delta_{H}(w_1)=\delta_{H}(w_2)$ holds for all $w \in \W$.
Let us further modify the graph $H$.
Let $H^*$ be the bipartite graph obtained from $H$ by adding 
 $\Q_{\max}-\delta^+_G(w)$ multiple edges $(w_1, w_2)$
for every $w \in \W$. 
Note that $H^*$ is $\Q_{\max}$-regular,
where a graph is called \emph{$d$-regular} if every vertex has degree  $d$. 
It is known that the edge set of a $d$-regular bipartite graph
can be partitioned into $d$ perfect matchings \cite{Konig16}. 
Moreover, we have the following lemma. 

\begin{lemma}
\label{lemma-000a}
Let $M^*$ be a perfect matching  in $H^*$.
Then $M^*$ corresponds to vertex-disjoint simple cycles in the demand graph $G$.  
\end{lemma}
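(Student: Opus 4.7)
The plan is to exhibit the correspondence explicitly. Given the perfect matching $M^*$ of $H^*$, partition it into $M^*_{\mathrm{real}}$, the edges that originate from edges of the demand graph $G$, and $M^*_{\mathrm{loop}}$, the added multi-edges of the form $(w_1,w_2)$. Then project $M^*_{\mathrm{real}}$ back to $G$ by setting $M_G = \{(w,v) \in E(G) : (w_1,v_2) \in M^*_{\mathrm{real}}\}$. I will show that $M_G$ is a disjoint union of vertex-disjoint simple directed cycles in $G$, which is exactly the claim.

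To do this, I would analyze the in- and out-degree of each vertex $w \in W$ in the subgraph $(W, M_G)$. Because $M^*$ is a perfect matching, $w_1$ is incident to exactly one edge of $M^*$. If that edge is a loop $(w_1,w_2) \in M^*_{\mathrm{loop}}$, then $w$ contributes no out-edge to $M_G$; otherwise $w_1$ is matched to some $v_2$ with $v \neq w$, yielding the out-edge $(w,v) \in M_G$. Hence $\delta^+_{M_G}(w) \leq 1$, with equality iff $w_1$ is not matched to $w_2$ in $M^*$. A symmetric argument gives $\delta^-_{M_G}(w) \leq 1$, with equality iff $w_2$ is not matched to $w_1$. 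The key observation is that these two conditions are identical: $w_1$ is matched to $w_2$ iff $w_2$ is matched to $w_1$. Consequently, for every vertex $w$, $\delta^+_{M_G}(w) = \delta^-_{M_G}(w) \in \{0,1\}$.

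To finish, I would invoke the standard fact that any subgraph of a loopless directed graph in which every vertex satisfies $\delta^+ = \delta^- \leq 1$ is a disjoint union of vertex-disjoint simple directed cycles; loop-freeness of $G$ is guaranteed by the standing assumption $s(p) \neq t(p)$. The only point requiring care—rather than a genuine obstacle—is the bookkeeping around the added multi-edges: multiple copies of $(w_1,w_2)$ are inserted to make $H^*$ exactly $\rho_{\max}$-regular, but in the perfect matching at most one such copy can be used, and whenever it is used it simply signifies that $w$ is isolated in $M_G$. With this understood, the argument above goes through unchanged.
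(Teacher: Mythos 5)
Your proof is correct and follows essentially the same route as the paper: both discard the artificial $(w_1,w_2)$ edges, project the remaining matched edges back to $G$, and use the observation that $w_1$ is matched to $w_2$ exactly when $w_2$ is matched to $w_1$ to conclude $\delta^+ = \delta^- \in \{0,1\}$ at every vertex, hence a union of vertex-disjoint simple cycles. Your write-up merely makes explicit the degree bookkeeping that the paper leaves terse.
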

\begin{proof}
%Let $V' \subseteq V$ be the set of vertices  in $D$ corresponding to
%the end-vertices of  edges in $E'$. 
For a perfect matching  $M^*$ in $H^*$,  let 
$M=M^* \cap E(H)$ and let $H[M]$ be the subgraph of $H$ induced by $M$. 
%Note that any edge $(w_1,w_2) \in E(H^*)$ added to $H$ can be regarded as self-loop in the demand graph 
Then we note that $M$ is a matching of $H$ such that 
the degree of $w_1$ with respect to $H[M]$
is equal to that of $w_2$ for any $w \in W$. 
Hence, $M^*$ corresponds to  vertex-disjoint simple cycles in the demand graph $G$.
\end{proof}

\noi
The next lemma follows from Lemma \ref{lemma-000a} and the argument before it.

\begin{lemma}\label{matching:lem}
If product set $P$ satisfies $|\P^+(w)|  =  |\P^-(w)|$  for  every \house $w \in \W$, 
then it  can be partitioned into  sets $P_i$ $(i \in [1,\max_{w \in W} |\P^+(w)|])$ such that each $P_i$ forms  vertex-disjoint simple cycles in the demand graph. 
\end{lemma}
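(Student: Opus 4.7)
The plan is to recycle almost verbatim the bipartite-graph construction that was used to handle the restricted case (\ref{eq-aa1}), observing that the argument never truly needed the assumption $d^+(w)=d^-(w)=1$; it only used the degree balance $|\P^+(w)|=|\P^-(w)|$.

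Concretely, I would let $K = \max_{w\in\W}|\P^+(w)|$ and form $H$ from the demand graph $G$ exactly as before, so that $\delta_H(w_1)=|\P^+(w)|$ and $\delta_H(w_2)=|\P^-(w)|$ for every $w\in\W$. Under the hypothesis $|\P^+(w)|=|\P^-(w)|$ these two degrees coincide, and both are at most $K$. I would then build $H^*$ by adding $K-|\P^+(w)|$ parallel edges $(w_1,w_2)$ for each $w\in\W$. Because the added edges raise both $\delta(w_1)$ and $\delta(w_2)$ by the same amount, $H^*$ is a $K$-regular bipartite multigraph.

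By König's edge-coloring theorem for bipartite graphs \cite{Konig16}, $E(H^*)$ decomposes into $K$ perfect matchings $M^*_1,\dots,M^*_K$. Applying Lemma~\ref{lemma-000a} to each $M^*_i$, the edges of $M^*_i\cap E(H)$ correspond to a set $P_i\subseteq P$ whose arcs in $G$ form vertex-disjoint simple cycles (the auxiliary parallel edges $(w_1,w_2)$ in $M^*_i\setminus E(H)$ carry no product). Since the $M^*_i$'s partition $E(H^*)$, the sets $P_i$ partition $P$, giving the required decomposition into at most $K$ families of vertex-disjoint simple cycles.

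I expect no real obstacle here: the only thing to be careful about is the bookkeeping that adding the $K-|\P^+(w)|$ loops-to-the-other-copy edges preserves bipartiteness and yields a $K$-regular graph, and that the invocation of Lemma~\ref{lemma-000a} remains valid once we discard the added edges. Both are immediate from the construction, so the proof is essentially a direct application of the earlier argument without the extraneous unit-capacity assumption.
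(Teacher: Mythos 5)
Your proposal is correct and is essentially the paper's own argument: the paper proves this lemma by exactly the same route, padding $H$ with $K-|\P^+(w)|$ parallel edges $(w_1,w_2)$ to obtain a $K$-regular bipartite multigraph, decomposing its edge set into $K$ perfect matchings via K\H{o}nig's theorem, and invoking Lemma~\ref{lemma-000a} on each matching after discarding the auxiliary edges. Your observation that the unit-capacity assumption in (\ref{eq-aa1}) plays no role here and only the degree balance $|\P^+(w)|=|\P^-(w)|$ is needed is likewise exactly what the paper relies on.
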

\nop{
\begin{proof}
It follows from Lemma \ref{lemma-000a} and the discussion before it. 
\end{proof}
}
%Consequently, we can partition $\P$ into $\P_t$,
% $t \in [0,p-1]$
%such that every $\P_t$ is a family of  pairwise vertex-disjoint cycles in $D$
%by computing   $p$  perfect matching in $G_D'$.
Consequently, if (\ref{eq-aa1}) is satisfied, 
by setting $\depart(p)=i-1$ if $p \in P_i$, 
we obtain an optimal reallocation schedule $\depart$, 
whose completion time is equal to  $\Q_{\max}+\transituni-1$.

We next consider a slightly generalized case in which
for   every \house $w \in \W$, $d^+(w)  =  d^-(w)  =  1$ holds, but 
$|\P^+(w)|  =  |\P^-(w)|$  does not necessarily hold. 
Let $\W^+$ and $\W^-$ respectively denote  the sets of \houses $w$ with  
 $|\P^+(w)|>|\P^-(w)|$ and $|\P^+(w)|<|\P^-(w)|$. 
 Then we add extra \prods $p$ such that $\source(p)\in \W^-$, $\sink(p) 
\in \W^+$, and $\size(p)=1$, 
until $|\P^+(w)|  =  |\P^-(w)|$  holds for all $w \in \W$.
Note that 
it can be done by arbitrarily pairing \houses in $\W^-$ and $\W^+$, since $\sum_{w\in \W}|\P^+(w)|=\sum_{w\in \W}|\P^-(w)|$.
We also claim that every \house has enough vacancy for such extra \prods.
Indeed, let $\tilde{P}$ denote the resulting product set. Then we have $|\tilde{P}^+(w)|=|\tilde{P}^-(w)| \leq  \max\{|\P^+(w)|,|\P^-(w)|\}$, which implies that  \house capacity constraints at the initial time and the last time are satisfied if all products are sent correctly. 
Moreover, we can see that the schedule of $\tilde{P}$  for the previous case provides a schedule of $P$ that satisfies \house capacity constraints.

We finally consider the  general case.
Here we show that a schedule with the completion time
$\Q_{\max}+\transituni-1$ can be obtained
by reducing it to the case in which $d^+ =d^- \equiv   1$.
For a warehouse $w \in W$, let $h_w^+=\lceil \frac{|\P^+(w)|}{\Q_{\max}} \rceil$,
$h_w^-=\lceil \frac{|\P^-(w)|}{\Q_{\max}} \rceil$, and
$h_w =\max\{h_w^+,h_w^-\}$. 
By definition, we have  $h_w^+ \leq d^+(w)$ and $h_w^- \leq d^-(w)$. 
For  each \house $w \in \W$, we construct 
$h_w$ many  \houses $w_i$ ($i \in [1, h_w])$ 
such that $\tilde{d}^+(w_i)=\tilde{d}^-(w_i)= 1$, 
$\capa(w_i)=\Q_{\max}$ if $i \in [1,h_w-1]$, and $\capa(w_{h_w})=\capa(w)
-\Q_{\max}(h_w-1)$. 
Let $\tilde{W}$ denote the resulting set of warehouses. 
We then modify  source and sink \houses $\source$ and 
$\sink$ %by $\tilde{\source}$ and $\tilde{\sink}$
in such a way that
\begin{itemize}
\item For every produce $p$ with $\source(p)=w$, let $\tilde{\source}(p)=w_i$ for some $i \in [1,h_w^+]$,  
\item For every product $p$ with $\sink(p)= w$,  let $\tilde{\sink}(p)=w_i$ for some $i \in [1,h_w^-]$,

\item $|\tilde{\P}^+(w_i)|= \Q_{\max}$ for  every $i \in [1,h_w^+ -1]$ and 
 $|\tilde{\P}^+(w_{h_w^+})|=|\P^+(w)|-\Q_{\max} (h_w^+-1)$, 

\item 
 $|\tilde{\P}^-(w_i)|= \Q_{\max}$ for  every $i \in [1,h_w^- -1]$ and 
      $|\tilde{\P}^-(w_{h_w^-})|=|\P^-(w)|-\Q_{\max} (h_w^--1)$.
\end{itemize}
\noi
Here $\tilde{\source}$ and $\tilde{\sink}$ respectively represent the resulting source and sink warehouses, $\tilde{P}^+(x)=\{ p \in P \mid \tilde{\source}(p)=x\}$, and  $\tilde{P}^-(x)=\{ p \in P \mid \tilde{\sink}(p)=x\}$. 
Note that this can be done by numbering products in $\P^+(w)$
(resp., $P^-(w)$)
as $p_j$, $j \in [1,|\P^+(w)|]$ (resp., $[1,|\P^-(w)|]$), and letting $\tilde{\source}(p_j)=
w_{\lceil \frac{j}{\Q_{\max}} \rceil}$
(resp.,  $\tilde{\sink}(p_j)=w_{\lceil \frac{j}{\Q_{\max}} \rceil}$).
%Let $\tilde{\P}$ be the resulting set of  \prods.
Then by  applying
the discussion in the above case, we can obtain a schedule $\depart$  with
the completion time $\Q_{\max}+\transituni-1$, since $|\tilde{P}^+(w)|, |\tilde{P}^-(w)|\leq \Q_{\max}$
holds for all $w \in {\tilde{\W}}$. 
It is not difficult to see that this reallocation schedule $\depart$ is also feasible with the original problem instance.
By Lemma \ref{low:lem}, it is an optimal reallocation schedule. 
We describe the whole procedure mentioned above as 
Algorithm \alg{Uniform}.
%($\W,\P,d^+,d^-,\capa,\size,\transit$).
%Algorithm~\ref{matching:algo}.

We then analyze the time complexity of the scheduling algorithm.
Let $\tilde{\P}_1$ be the set of \prods obtained from $\tilde{\P}$ 
by adding extra \prods according to the above second case.
Note that $|\tilde{\P}_1| \leq 2m$.
Let $\tilde{G}$ be the \dmgraph for the  \mainpro with $\tilde{\W}$ and $\tilde{\P}_1$.
Note that we have $\delta^+_{\tilde{G}}(w)=\delta^-_{\tilde{G}}(w)$
for all $w \in \tilde{\W}$ and
$\Delta^+(\tilde{G})=\Delta^-(\tilde{G})= \Q_{\max}$.
Let $\tilde{H}$ and $\tilde{H}^*$ be the undirected bipartite 
graph 
and the undirected $\Q_{\max}$-regular bipartite graph
obtained from
$\tilde{G}$ according to the discussion immediately before  Lemma~\ref{lemma-000a},
respectively.
Note that $|E(\tilde{H})|=|\tilde{\P}_1| \leq 2m$ and
that $|E(\tilde{H}^*)|-|E(\tilde{H})| \leq \Q_{\max}n$
since the possible $i\in [1,h_w]$ with $\delta^+_{\tilde{G}}(w_i)<\Q_{\max}$ is only $h_w$
for each $w \in \W$.
%Note that $H'_{G_2}$ is a $\Q$-regular bipartite graph.
A family of $\Q_{\max}$ perfect matchings in $\tilde{H}^*$ can be obtained by
computing a minimum edge-coloring of $\tilde{H}^*$
in $\mO(|E(\tilde{H}^*)|\log{\Q_{\max}})=
\mO((m+n\Q_{\max})\log{\Q_{\max}})$ time \cite{COS01}.
Consequently,  we have Theorem~\ref{poly:th}.

\begin{algorithm}
\caption{{\alg{Uniform}($\W,\P,d^+,d^-,\capa,\size,\transit$)}}
\begin{algorithmic}[1]
\renewcommand{\algorithmicrequire}{\textbf{Input:}}
 \renewcommand{\algorithmicensure}{\textbf{Output:}}
\label{matching:algo}
\REQUIRE An instance of the \mainpro
with uniform 
 $\size$ and uniform $\transit$.
%where $\transit \equiv \ell$.

% A set $X$ of variables of (\ref{IP:eq})

\ENSURE A schedule for $\P$  with the completion time
$\Q_{\max}+\transituni-1$,
where  
$\Q_{\max}=\max_{w \in \W}\{\max\{\lceil \frac{\sum_{p \in \P^+(w)}\size(p)}{d^+(w)} \rceil, $ 
$
\lceil \frac{\sum_{p \in \P^-(w)}\size(p)}{d^-(w)} \rceil\}\}$.

\FOR{$w\in \W$}

\STATE Divide $w$ into $d_w$ \houses $w_i$, $i \in [1, d_w]$.

\STATE Number \prods in $\P^+(w)$ as $p_j$, $j \in [1,|\P^+(w)|]$,
and let  $\source(p_j)=w_{\lceil \frac{j}{\Q_{\max}} \rceil}$ 
%with $i = \lceil \frac{j}{p} \rceil$ 
for every
 $p_j \in \P^+(w)$.

\STATE Number \prods in $\P^-(w)$ as $p_j$, $j \in [1,|\P^-(w)|]$
and let  $\sink(p_j)=w_{\lceil \frac{j}{\Q_{\max}} \rceil}$ 
%with $i = \lceil \frac{j}{p} \rceil$
 for every
 $p_j \in \P^-(w)$.

\ENDFOR

\COMMENT{Denote the resulting set of \houses and \prods by $\tilde{\W}$ and
 $\tilde{\P}$,
respectively.}

\WHILE{$\exists w \in \tilde{\W}$ with $|\tilde{\P}^+(w)| \neq |\tilde{\P}^-(w)|$}

\STATE Add to $\tilde{\P}$ an extra \pro $p$ with $\source(p)=u$ and 
$\sink(p)=v$
for some $u,v \in \tilde{\W}$ with $|\tilde{\P}^+(u)| < |\tilde{\P}^-(u)|$ and
$|\tilde{\P}^+(v)| > |\tilde{\P}^-(v)|$.

\ENDWHILE

\STATE Compute a partition  $\{\P_i \mid i \in [1,\Q_{\max}]\}$   of
$\P$ such that 
in the corresponding \dmgraph $G=(\tilde{\W},E_{\tilde{\P}})$,
the set of directed edges in $E_{\tilde{\P}}$
 corresponding to $\P_i$
induces a family of 
 vertex-disjoint
 simple cycles in $G$.

\STATE Let $\depart(p)=i-1$ for all $p \in \P_i$, $i \in [1,\Q_{\max}]$, and output $\{\depart(p) \mid p \in \P\}$. 
\end{algorithmic}
\end{algorithm}

%%%%%%%%%%%%%%%%%%%%

\nop{
Also, we can observe that  
%Algorithm~\ref{matching:algo}
the above algorithm
%Algorithm \alg{Uniform}
delivers  a 2-approximate schedule for 
the case in which 
product size
%$\size$ 
is uniform
and
$d^-$ is sufficiently large, where we note  that
$\transit$ and $d^+$ are both general.
}

We finally remark that  
%Algorithm~\ref{matching:algo}
\alg{Uniform}
delivers  a 2-approximate schedule for 
the case in which 
product size
%$\size$ 
is uniform
and
$d^-$ is sufficiently large, even if
$\transit$ and $d^+$ are both general.
Let $\depart$ be the schedule obtained by
%Algorithm~\ref{matching:algo}.
\alg{Uniform}.
Then,  $\depart$ satisfies the \house capacity constraints, since
it is based on a cycle decomposition.
Also, by construction and $d^-\equiv +\infty$, the carry-out and carry-in capacity constraints
are satisfied.
This means that $\depart$ is feasible.
Observe that the completion time of $\depart$ is 
at most $\Q_{\max}+\max_{p  \in \P}\transit(p)$.
Since $\Q_{\max}$ and $\max_{p  \in \P}\transit(p)$ are both lower
bounds on the minimum completion time, it follows that 
$\depart$ is a 2-approximate schedule.
Therefore, we have the following theorem, where
the case of $d^+\equiv +\infty$ can be treated similarly. 

\begin{theorem}\label{2apx:th}
The reallocation problem  with uniform product size and $d^- \equiv
 +\infty$ $($or $d^+ \equiv  +\infty)$
%$\Q_{\max}+\lambda-1$ is the minimum completion time.  
%the reallocation problem
is 2-approximable  in
    $\mO(mn \log m)$ time. 
\end{theorem}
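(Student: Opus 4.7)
The plan is to show that the schedule $\depart$ returned by \alg{Uniform} from Theorem~\ref{poly:th} --- which was designed for uniform $\transit$ --- remains feasible when $\transit$ is allowed to be non-uniform, provided that one of $d^+ \equiv +\infty$ or $d^- \equiv +\infty$ holds, and that its makespan is within a factor of two of the optimum. I would focus on the case $d^- \equiv +\infty$; the case $d^+ \equiv +\infty$ is symmetric via reversing the arcs of the demand graph.

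The first step is to verify feasibility of $\depart$ for the non-uniform instance. The cycle decomposition used by \alg{Uniform} assigns departure times in $[0, \Q_{\max}-1]$ and sends at most one product from each (sub-)\house per time step. Together with uniform product size and the standing assumption $\size(p) \leq d^+(w)$, this yields the carry-out constraint; the carry-in constraint is trivial under $d^- \equiv +\infty$. For the \house-capacity constraint, the key observation is that in each cycle-batch at time $\theta$, every departure from a \house $w$ is paired with a matched later arrival at $w$ at time $\theta + \transit(p) \geq \theta + 1$. Hence at any time $t$ the cumulative number of arrivals at $w$ never exceeds the cumulative number of departures from $w$, so the load of $w$ stays below its initial load, which is at most $\capa(w)$.

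Next I would bound the makespan. Since the latest departure occurs no later than $\Q_{\max} - 1$, the completion time of $\depart$ is at most $\Q_{\max} - 1 + \max_{p \in \P} \transit(p)$. Lemma~\ref{low:lem} gives $\opt \geq \Q_{\max}$ (using $\min_{p\in\P} \transit(p) \geq 1$), and trivially $\opt \geq \max_{p \in \P} \transit(p)$. Combining these,
\[
\text{completion time of } \depart \;\leq\; \Q_{\max} + \max_{p \in \P} \transit(p) \;\leq\; 2\,\opt,
\]
which is the claimed $2$-approximation. The $\mO(mn\log m)$ running time is inherited directly from \alg{Uniform}.

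The main obstacle is justifying the \house-capacity step when different cycle-batches involve the same \house at different time steps, since in principle arrivals from earlier batches could pile up before later departures take place. The clean resolution is the pairing argument above --- each arrival at a \house is linked to a unique earlier departure from the same \house with a gap of at least one time unit --- which reduces the question to a straightforward counting inequality and is the only place where non-uniformity of $\transit$ genuinely interacts with the analysis.
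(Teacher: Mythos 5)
Your proposal is correct and follows essentially the same route as the paper: run \alg{Uniform}, observe that the cycle decomposition keeps the \house capacity constraints valid (each arrival at $w$ is paired with a departure from $w$ at least one time unit earlier, which is exactly why non-uniform $\transit$ is harmless here), that carry-in is vacuous under $d^-\equiv+\infty$, and that the makespan $\Q_{\max}-1+\max_p\transit(p)$ is at most $2\,\opt$ since $\Q_{\max}$ and $\max_p\transit(p)$ are both lower bounds. One small precision: the carry-out constraint follows not from $\size(p)\le d^+(w)$ but from the fact that $w$ is split into $h_w^+\le \lfloor d^+(w)/k\rfloor$ sub-\houses each sending at most one product per step, so the total size departing $w$ per step is at most $d^+(w)$.
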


\section{Approximation Algorithms for General Cases}\label{bicriteria:sec}

As shown later in Theorem~\ref{hard1:th},
it is NP-complete to decide whether a reallocation schedule is  feasible.
Hence, we need to  replace some of the
hard constraints with soft constraints.
In this paper, we consider \emph{capacity augmentations}. Namely,
we  relax three capacity constraints: \outdegree, \indegree, and
\house capacity constraints.
For a warehouse $w \in W$, let $\sigma^+_{1}(w)$ and  $\sigma^+_{2}(w)$ 
respectively denote the largest and the second largest size of \prods in $\P^+(w)$, 
and similarly $\sigma^-_{1}(w)$ and  $\sigma^-_{2}(w)$ 
respectively denote the largest and the second largest size of \prods in $\P^-(w)$.

Then we obtain the following result. 

%and construct a reallocation schedule $\depart$ such that (1) it is feasible with such relaxed constraints and (2) the completion time of $\depart$ is at most the minimum completion time of the original problem 

\begin{theorem}\label{bicriteria:th}
{\rm (i)}  We can compute in polynomial time a schedule for the reallocation problem such that the completion time is at most the minimum completion time and it is feasible with the capacity augmentation in which  for every \house $w \in \W$, 

\begin{alphaenumerate}
\item the carry-out capacity $d^+(w)$ is augmented by  $\sigma^+_{1}(w)$ and  $\sigma^+_{2}(w)$, 

\item the carry-in capacity $d^-(w)$ is augmented by  $\sigma^-_{1}(w)$ and  $\sigma^-_{2}(w)$, and  

\item the warehouse capacity $c(w)$ is augmented by $c(w)$. 
\end{alphaenumerate}

\noi
{\rm (ii)} If the carry-in $($resp., carry-out$)$ capacity  is sufficiently large, i.e., $d^-\equiv \infty$
$($resp., $d^+\equiv \infty)$,
we can compute in polynomial time a schedule for the reallocation problem such that the completion time is at most the minimum completion time and it is feasible with the capacity augmentation of  $($c$)$ and $($d$)$  for every \house $w \in \W$, where  

\begin{romanenumerate}
\item[{\bf \textsf{\textcolor{lipicsGray}{(d)}}}] the \outdegree $($resp., carry-in$)$ capacity is augmented by $\sigma^+_{1}(w)$
$($resp., $\sigma^-_{1}(w))$.
\end{romanenumerate}
 \end{theorem}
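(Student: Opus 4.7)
The plan is to cast the feasibility of the \mainpro with a target completion time $T$ as an integer linear program on indicator variables $x_{p,\tt}\in\{0,1\}$ (for $p\in \P$ and $\tt\in [0,T-\transit(p)]$, where $x_{p,\tt}=1$ means that $p$ departs at time $\tt$), solve its LP relaxation, and round via the bi-criteria schemes of Shmoys--Tardos~\cite{ST93} and of~\cite{KMRT15}. First I would verify that the optimal completion time $T^*$ is polynomially bounded in the input size (for instance $T^*\le \sum_{p\in \P}\transit(p)$ since sending products one by one is always feasible), so binary-searching over $T$ is affordable. The LP constraints are $\sum_{\tt}x_{p,\tt}=1$ together with the carry-out, carry-in, and \house capacity constraints from Section~\ref{preliminaries:sec} written linearly in the $x_{p,\tt}$; by construction, the smallest $T$ for which the LP is feasible, call it $T_{\LP}$, satisfies $T_{\LP}\le T^*$, and we round the corresponding fractional solution.

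For augmentation~(c) no rounding is needed at all: the standing hypotheses $\sum_{p\in \P^+(w)}\size(p)\le\capa(w)$ and $\sum_{p\in \P^-(w)}\size(p)\le\capa(w)$ on the input imply that, in \emph{any} integral schedule whatsoever, the content of a \house $w$ at any time $\tt$ is bounded by $\sum_{p\in \P^+(w)}\size(p)+\sum_{p\in \P^-(w)}\size(p)\le 2\capa(w)$, so (c) holds for the rounded schedule regardless of how the rounding is performed. For (a) and (b), the LP fits the 2-sided placement template of~\cite{KMRT15}: each product $p$ is an item to be assigned to exactly one time slot $\tt$, consuming $\size(p)$ units on the source-side bin $(\source(p),\tt)$ and $\size(p)$ units on the sink-side bin $(\sink(p),\tt+\transit(p))$. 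The bi-criteria rounding of~\cite{KMRT15} produces an integral assignment whose load on every bin exceeds the bin's capacity by at most the sum of the two largest item sizes fractionally routed through it, which are in turn bounded by the global quantities $\sigma^+_1(w)+\sigma^+_2(w)$ on the source side and $\sigma^-_1(w)+\sigma^-_2(w)$ on the sink side. The assignment constraint $\sum_{\tt}x_{p,\tt}=1$ is preserved by the rounding, so the completion time of the output is at most $T_{\LP}\le T^*$.

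For part~(ii), when $d^-\equiv \infty$ (the case $d^+\equiv\infty$ is symmetric) only the source-side carry-out constraints remain active, and the LP collapses to an essentially one-sided generalized assignment LP. The classical Shmoys--Tardos rounding~\cite{ST93} then violates each bin's capacity by at most the single largest assigned item size, giving the sharper augmentation $\sigma^+_1(w)$ required by~(d), while (c) is automatic as before. The main technical point I anticipate is verifying that the rounding of~\cite{KMRT15} applies verbatim here, because in the textbook 2-sided placement problem the two bins associated with a job sit at the \emph{same} slot, whereas in our setting the sink-side bin is indexed by the shifted time $\tt+\transit(p)$. I expect this shift to be immaterial, since the rounding only uses the bipartite incidence between items and bins and not how the bins are indexed, but carefully confirming that the top-two violation bound survives under arbitrary per-item time offsets is the delicate step of the argument.
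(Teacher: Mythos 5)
Your overall route coincides with the paper's: formulate feasibility for a horizon $T$ as a 0--1 program in the indicators $x_{p\tt}$, note that the warehouse-capacity augmentation (c) is automatic from the standing assumptions $\sum_{p\in\P^+(w)}\size(p)\le \capa(w)$ and $\sum_{p\in\P^-(w)}\size(p)\le \capa(w)$, and round the LP relaxation via the 2-sided placement scheme of \cite{KMRT15} for part (i) and the Shmoys--Tardos scheme \cite{ST93} for part (ii). However, two steps do not go through as written.

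First, your bound on the search range for $T$ is unjustified. Sending products one by one is \emph{not} always feasible: warehouse capacities can force simultaneous exchanges, which is precisely why feasibility is NP-complete (Theorem~\ref{hard1:th}). Moreover, even where such a schedule exists, $\sum_{p\in\P}\transit(p)$ is polynomial only in the numeric values of the transit times, not in their encoding length, so an LP indexed by all $\tt\in[0,T]$ could have exponentially many variables and constraints. The paper closes this hole with Lemma~\ref{feasible1:lem}, a rerouting argument showing that the LP at $T=T_{\min}$ always admits a fractional solution whose departure times are supported on $[0,2m)$; this simultaneously bounds the LP size polynomially and makes $T_{\min}$ computable. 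Your proposal needs a substitute for this lemma.

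Second, the violation bound you attribute to the rounding of \cite{KMRT15} is stronger than its black-box guarantee. Theorem~\ref{KMRT:thm} augments each machine capacity by $2s_i^{\max}$, i.e., by $2\sigma^+_1(w)$ on the source side, whereas the statement you are proving requires the strictly smaller augmentation $\sigma^+_1(w)+\sigma^+_2(w)$. Obtaining the ``largest plus second largest'' bound (in fact its per-time-slot refinement, which Section~\ref{nocap:sec} later depends on) requires reopening the iterative-rounding analysis, which is exactly what the paper does in Lemma~\ref{additive:lem}. Relatedly, if you literally keep the warehouse-capacity constraints inside the LP handed to the iterative rounding, the extreme-point counting underlying that rounding no longer matches the 2-sided template; since (c) holds unconditionally for any integral schedule, these constraints should simply be omitted from the formulation, as the paper does. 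The time shift $\tt\mapsto\tt+\transit(p)$ on the sink side that you flag is indeed immaterial, for the reason you give.
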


\noi
Here, for example, the statement (a) denotes that 
$\sum_{p \in \P^+(w): \depart(p)=\tt} \size(p) \leq d^+(w)+ 
\sigma^+_1(w)+\sigma^+_2(w)$ for each time $\tt$. 

By the hardness result in Theorem~\ref{hard1:th},
we cannot obtain a feasible schedule for the reallocation problem, unless P=NP. 
However, by Theorem \ref{bicriteria:th},  
if we take capacity augmentation appropriately, we can compute a feasible schedule with the augmentation whose  objective value is not worse than the optimal value of the original problem. 

We  remark that the statement (c) always holds 
by the assumption in Section~\ref{preliminaries:sec} that the total size of \prods in $\P^+(w)$ 
(resp., $\P^-(w)$) is at most
$\capa(w)$. 
Hence, we do not consider the \house 
capacity constraints in the subsequent discussion of this section. 

We first show the general case (i).
%We then consider a problem for asking a feasible schedule with the completion
%time $\T$ for a given integer $\T$.
%%%%%%%%%%%%%%%%%%%%%%%%%%%%%%%%%%%%%%
For  a given integer $\T$ as an upper bound of the completion
time, 
let us represent
%the problem can be represented as 
an integer linear system formulation of the feasibility with $T$ of the reallocation problem.   
\begin{align}
%\label{IP0:eq}
% \mbox{minimize} & \sum_{t=0}^{\Q-1} \sum_{p  \in \P} c_{pt}x_{pt}& \\[.08cm] 
%\mbox{subject to}
&\sum_{p \in \P^+(w)} \size(p)x_{p\tt} \ \leq\  d^+(w)
&&  \forall  w\in \W, \ \forall \tt \in [0,\T-1] 
 \label{eq-a1}\\[.05cm]
& \sum_{p \in \P^-(w)} \size(p)x_{p,\tt-\transit(p)} \leq d^-(w)
& &\forall w\in \W,  \ \forall \tt \in [1,\T]
 \label{eq-a2}\\[.05cm]
%& \sum_{p \in \P^+(w)} \sum_{\lambda \geq \tt} \size(p)x_{p\lambda} &&\nonumber\\
%& \hspace*{1cm}+\sum_{p \in \P^-(w)} \sum_{\lambda \leq \tt}\size(p) x_{p,\lambda-\transit(p)}
%\ \leq \ \capa(w)
%&& \forall w\in \W,  \ \forall \tt \in [0,\T]
% \label{eq-a3}\\[.05cm]
& \sum_{\tt=0}^{\T-\transit(p)} x_{p\tt} \ = \ 1
&&\forall p\in \P
 \label{eq-a4}\\[.05cm]
& \hspace*{.5cm} x_{p\tt} \ \in \ \{0,1\}
&& \forall p\in \P, \  \forall \tt \in [0,\T-\transit(p)]\label{eq-a5}
\end{align}
Here  $x_{p\tt}$ denotes the indicator variable for a
\pro $p \in \P$ and departure
 time $\tt \in [0,\T-1]$; namely, $x_{p\tt}$ takes 1 if
a \pro $p$ departs at time $\tt$, and 0 otherwise.
Note that $x_{p\tt}$ with $\tt > \T-\transit(p)$ is not defined, since $p$
needs to arrive at $\sink(p)$ by time $\T$.
Inequalities \eqref{eq-a1} and \eqref{eq-a2} 
%and \eqref{eq-a3}
respectively correspond to carry-out and carry-in
%, and \house capacity
constraints.
Equality \eqref{eq-a4} ensures that every \pro $p \in \P$ is sent  exactly once by time $\T-\transit(p)$.
Note that the minimum $\T$ for which the integer linear system is feasible is the optimal completion time for the reallocation problem.  
%Thus if we know in advance an upper bound $T$ for the minimum completion time, then the reallocation problem 
%is to minimize $\sum_{p \in \P}\sum_{\lambda \leq T}  M^\lambda x_{p,\lambda-\transit(p)}$ so that it satisfies \eqref{eq-a1}--\eqref{eq-a5}, 
%where $M$ denotes a positive integer with $M\geq m+1$. 
%In  Section  \ref{bicriteria:sec}, we  show that the minimum completion time is polynomially bounded, and provide approximation algorithms for the reallocation problem by making use of linear relaxazation of the   integer linear system formulation. 

This formulation \eqref{eq-a1}--\eqref{eq-a5} is regarded as 
a problem of finding a feasible solution of
the \emph{2-sided placement problem} \cite{KMRT15}, which  is a generalization of 
the \emph{generalized assignment 
problem}.

Let $N$ denote a set of jobs, and let $M_1$ and $M_2$ be two disjoint sets of machines, where 
each job $k$ in $N$ is assigned to two machines $i \in M_1$ and $j \in M_2$. 
Let $F$ denote the set of possible assignments $F=\{(i,j,k) \in M_1 \times M_2 \times N \mid k \mbox{ can be assigned to $i$ and $j$}\}$.
Each machine $\ell \in M_1 \cup M_2$
has the resource capacity $d(\ell)$, and 
for an assignment $(i,j,k) \in F$,  the amount $s_1(i,j,k)$ (resp., $s_2(i,j,k)$) of resources of a machine $i$ (resp., $j$) is used if a job $k$ is assigned to machines $i \in M_1$ and $j \in M_2$. 
The 2-sided placement problem can be formulated as 
\eqref{eq-b0}--\eqref{eq-b4}:
\begin{align}
%\label{IP0:eq}
 \mbox{minimize} & 
\sum_{(i,j,k) \in F}
 c_{ijk}x_{ijk}
& \label{eq-b0}\\[.08cm] 
\mbox{subject to} &
\sum_{j, k: (i,j,k) \in F} s_1(i,j,k) x_{ijk} \ \leq\ d(i)
& &  \forall  i\in M_1 
 \label{eq-b1}\\[.05cm]
& \sum_{i, k: (i,j,k) \in F} s_2(i,j,k) x_{ijk} \ \leq\ d(j)
& & \forall  j\in M_2
 \label{eq-b2}\\[.05cm]
%& \sum_{p \in \P^+(w)} \sum_{\lambda \geq \tt} \size(p)x_{p\lambda} &&\nonumber\\
%& \hspace*{1cm}+\sum_{p \in \P^-(w)} \sum_{\lambda \leq \tt}\size(p) x_{p,\lambda-\transit(p)}
%\ \leq \ \capa(w)
%&& \forall w\in \W,  \ \forall \tt \in [0,\T]
% \label{eq-a3}\\[.05cm]
& 
\sum_{i,j: (i,j,k) \in F} x_{ijk} \ = \ 1
&& \forall k \in N
 \label{eq-b3}\\[.05cm]
& \hspace*{.5cm}  x_{ijk} \in \{0,1\}
&& \forall  (i,j,k) \in F.
\label{eq-b4}
\end{align}
\noi
Here $x_{ijk}$ and $c_{ijk}$ respectively denote the indicator variable and the
assignment cost  for an assignment
 $(i,j,k) \in F$.
Constraints \eqref{eq-b1} and \eqref{eq-b2} ensure that
the total amount of resources needed for the assignment to a machine $\ell$
is at most the resource capacity $d(\ell)$.

The feasibility of the reallocation problem formulated by  
%$(\ref{IP:eq})$
\eqref{eq-a1}--\eqref{eq-a5}
can be transformed into the one of the $2$-sided placement problem 
%$(\ref{2side:eq})$
\eqref{eq-b1}--\eqref{eq-b4}
as follows. 

\begin{itemize}
\item every \house $w \in \W$ at every time $\tt$ corresponds to both types
 of machines, denoted by $m_1(w,\tt) \in M_1$ and $m_2(w,\tt) \in M_2$, and
 the resource capacities of  $m_1(w,\tt)$ and   $m_2(w,\tt)$ are respectively defined  as $d^+(w)$ and $d^-(w)$. 

\item  every \pro $p \in \P$ corresponds to a job.  
Define $F=\{(i,j,p) \mid  i =m_1(\source(p),\tt),  
j=m_2(\sink(p),\tt+\transit(p)) \mbox{ for } p \in P, \tt \in [0,T-\transit(p)]\}$. 
For $(i,j,p) \in F$,  
let 
 $s_1(i,j,p) =s_2(i,j,p) =\size(p)$.  
 \end{itemize}

Korupolu et al.~\cite{KMRT15} proposed a polynomial-time  algorithm,
based on  an iterative approximation method, 
for the capacity augmentation of the $2$-sided placement problem.
For $i \in M_1$, let 
$s^{\max}_{i}=\max_{j, k: (i,j,k) \in F}s_1(i,j,k)$, and  for $j \in M_2$, let 
$s^{\max}_{j}=\max_{i, k: (i,j,k) \in F}s_2(i,j,k)$.

\begin{theorem}[\cite{KMRT15}]
\label{KMRT:thm}
For the $2$-sided placement problem 
%$(\ref{2side:eq})$
\eqref{eq-b0}--\eqref{eq-b4}, there exists a polynomial-time
algorithm for finding an assignment of jobs in $N$ to
machines in $M_1 \cup M_2$ whose cost is at most the optimal if 
 the resource capacities for $i \in M_1$ and  for  $j\in M_2$ 
 are respectively augmented with $2s_{i}^{\max}$ and $2s_{j}^{\max}$.
 \end{theorem}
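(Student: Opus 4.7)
The plan is to prove the theorem via LP relaxation followed by iterative rounding, in the spirit of the Shmoys--Tardos scheme for the generalized assignment problem (GAP), adapted to the bipartite ``two-sided'' machine structure. First I would relax (\ref{eq-b4}) to $x_{ijk}\in[0,1]$ and solve the resulting LP in polynomial time, obtaining a basic optimal solution $x^*$ of cost $c^*\le\mathrm{OPT}$. Any job $k$ with $x^*_{ijk}=1$ for some $(i,j,k)$ is committed at no cost inflation, and its contribution is removed from all capacities.

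For the remaining jobs with truly fractional support, I would exploit the structure of basic feasible LP solutions: among the constraints (\ref{eq-b1})--(\ref{eq-b3}), only $|M_1|+|M_2|+|N'|$ can be tight, where $N'$ denotes the jobs with fractional support. A standard nonzero-variable count shows the ``support graph'' $G$ on vertex set $M_1\cup M_2$ --- with each job $k\in N'$ contributing an edge from its fractionally-chosen $M_1$-partner to its $M_2$-partner --- is sparse, and in fact admits a pseudoforest decomposition. The rounding then proceeds along this decomposition: within each tree component, orient edges toward a chosen root and, for each job $k$ in $N'$, assign $k$ to the LP-supported pair $(i,j)$ indicated by the orientation, using a dependent-rounding scheme so that the expected integer cost equals the LP cost $c^*$.

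The capacity-augmentation bound follows by a charging argument. A machine $i\in M_1$ receives, from the committed jobs and the integrally assigned LP mass, a load of at most $d(i)$. From the rounding of fractional jobs, $i$ can additionally receive at most two extra jobs: one arriving from the ``tree'' direction of $G$ (an edge of the pseudoforest rounded toward $i$ from a descendant) and, when $i$ lies on the unique cycle of its pseudoforest component, a second job arising from the cycle breakage. Each such additional job has size at most $s_i^{\max}$, yielding total load at most $d(i)+2s_i^{\max}$. The argument for $j\in M_2$ is symmetric and produces the $2s_j^{\max}$ overflow.

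The main obstacle is controlling the coupling between the two sides during rounding. Unlike one-sided GAP, where Shmoys--Tardos achieve a clean $+s_i^{\max}$ by routing jobs along a bipartite tree between machines and jobs, in the two-sided case a rounding choice for job $k$ on the $M_1$-side is forced to agree with the choice on the $M_2$-side (both come from picking a single triple $(i,j,k)$ with $x^*_{ijk}>0$), so the two capacities cannot be treated independently. Establishing that the support graph $G$ really is a pseudoforest after iterative simplifications --- and that the overflow on any machine is therefore confined to at most two extra fractional jobs --- is the technical crux. Once that structural lemma is in place, the cost and capacity bounds follow routinely, and the overall procedure runs in polynomial time because each LP solve strictly reduces either the number of fractional variables or the number of tight capacity constraints.
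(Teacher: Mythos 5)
Your high-level framework (LP relaxation plus a rounding step whose overflow on each machine is charged to at most two items of size at most $s_{i}^{\max}$) is the right one, but the structural lemma you rest the whole argument on is both unproven and, as stated, not true. If the support graph of a basic solution is drawn on the vertex set $M_1\cup M_2$ with one edge per fractional triple, the standard rank count gives at most $|M_1|+|M_2|+|N'|$ support variables (the $|N'|$ job equalities are all tight), which exceeds the number of vertices whenever fractional jobs remain; so this graph need not be a pseudoforest, and no ``iterative simplification'' you have described restores that property. If instead you add the jobs as vertices to recover the vertex count, each triple $(i,j,k)$ becomes a three-vertex hyperedge, and the Shmoys--Tardos tree-walking/dependent rounding you invoke does not transfer: a single rounding choice for $k$ simultaneously touches one vertex of $M_1$ and one of $M_2$, which is exactly the coupling you flag as the crux and then leave open. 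As written, the proof has a hole precisely where the theorem's content lies.

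The route actually taken (following Korupolu et al., realized in the paper as Algorithm \alg{Iterative} and Lemma~\ref{additive:lem}) avoids any global decomposition of the support. One iterates: compute an extreme point of the current LP, permanently fix any variable that is integral, and \emph{delete} a machine constraint as soon as its residual fractional deficit satisfies $\sum_{p}(1-x^*_{p\tt})\leq 2$ over the not-yet-fixed variables; a counting argument on extreme points guarantees one of these two events always occurs. The capacity bound then comes from the algebraic estimate $\sum_{p}\size(p)(1-x^*_{p\tt})\leq \size(p_1)+\size(p_2)\leq 2s_i^{\max}$, where $p_1,p_2$ are the two largest items still fractional at the deletion step. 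Note this is genuinely weaker than your claim of ``at most two extra jobs'': the deficit of $2$ can be spread over many items (e.g.\ four items each with $x^*=1/2$), so the overflow is bounded by two item \emph{sizes}, not by two item \emph{counts}. Your charging argument should be restated in this form, and the one-shot pseudoforest rounding replaced by (or rigorously reduced to) the iterative relaxation.
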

%Algorithm~\ref{iterative:algo} 
Algorithm \alg{Iterative}($\W,\P,d^+,d^-, \size,\transit,\T$)
describes 
the algorithm of  Korupolu et al. for the formulation 
\eqref{eq-a1}--\eqref{eq-a5}. 
%(\ref{IP:eq})
%(after defining an arbitrary linear objective  function)

%
\begin{algorithm}
\caption{{\alg{Iterative}($\W,\P,d^+,d^-,
%\capa,
\size,\transit,\T$)}}
\begin{algorithmic}[1]
\renewcommand{\algorithmicrequire}{\textbf{Input:}}
 \renewcommand{\algorithmicensure}{\textbf{Output:}}
\label{iterative:algo}
\REQUIRE A linear relaxation  $\LP$
%$(\ref{IP:eq})_{LP}$
of formulation
 \eqref{eq-a1}--\eqref{eq-a5} obtained by relaxing every variable $x_{p\tt} \in \{0,1\}$
to $x_{p\tt}\geq 0$, where
%$(\ref{IP:eq})_{LP}$
$\LP$
is feasible and
$X$ denotes the set of all variables.

% A set $X$ of variables of (\ref{IP:eq})

\ENSURE A schedule for $\P$ satisfying (i) and (ii) of Lemma~\ref{additive:lem} for all $w \in \W$ and all $\tt \in [0,\T]$.

\STATE  $X' \leftarrow X$.

\WHILE{$X' \neq \emptyset$}

\STATE Find an extreme point
% $\bm{x}$ 
of  the current
$\LP$.
% $(\ref{IP:eq})_{LP}$.

\IF {$\exists x_{p\tt} \in \{0,1\}$}
\STATE Fix the value of $x_{p\tt}$ to the current value and $X' \leftarrow
X' \sm \{x_{p\tt}\} $.

\ENDIF

\IF {the current LP contains a  constraint in \eqref{eq-a1} with 
$\sum_{p \in \P^+(w)} (1-x_{p\tt}) \leq 2$
 or a constraint in \eqref{eq-a2} with 
 $\sum_{p \in \P^-(w)} (1-x_{p,\tt-\transit(p)}) \leq 2$}
\STATE Remove the corresponding constraint from the current 
$\LP$.
%$(\ref{IP:eq})_{LP}$.

\ENDIF
\ENDWHILE

\STATE Let $\depart(p)=\tt$ if $x_{p\tt}=1$ for all $p\in \P$
and all $\tt \in [0,\T]$, output
$\{\depart(p) \mid p \in \P\}$, and halt.

\end{algorithmic}
\end{algorithm}
%
%
%and $c({\cal S}^*(\Q))$ be its objective function value.
%Since problem $(\ref{IP:eq})_{LP}$ is a relaxation of (\ref{IP:eq}) and
%the algorithm proceeds by relaxing some constraints,
%we can observe that $c({\cal S}^*(\Q))$
%is at most   the optimal value of problem (\ref{IP:eq}).
As shown in \cite{KMRT15},
%Algorithm~\ref{iterative:algo} 
\alg{Iterative}($\W,\P,d^+,d^-, \size,\transit,\T$)
works whenever
the  linear relaxation of a given formulation 
%(\ref{IP:eq})
\eqref{eq-a1}--\eqref{eq-a5}
is feasible.

For a schedule $\varphi$
obtained by 
%Algorithm~\ref{iterative:algo} 
\alg{Iterative}
%($\W,\P,d^+,d^-, \size,\transit,\T$)
with $\T$,
let $\P^+(w,\tt) \subseteq \P$ (resp., $\P^-(w,\tt)$)
 be the set of \prods departing from (resp., arriving at) a \house 
$w \in \W$ at time $\tt \in [0,\T]$.
Let   $p^+_1(w,\tt)$ and $p^+_2(w,\tt)$  respectively denote the  products in $\P^+(w,\tt)$
 with
the largest and second largest size, and let 
 $p^-_1(w,\tt)$ and $p^-_2(w,\tt)$  respectively denote the products in $\P^-(w,\tt)$
 with
the largest and second largest size.

The following lemma gives an upper bound for the capacity augmentation. 
%additive errors of constraints,
Note that it slightly improves 
the statement of Theorem \ref{KMRT:thm}, but it is necessary to obtain the results in the next sections.  
\begin{lemma}\label{additive:lem}
Algorithm 
%\ref{iterative:algo} 
\alg{Iterative}$(\W,\P,d^+,d^-, \size,\transit,\T)$
outputs a schedule that satisfies the following two conditions. 
\begin{romanenumerate}
 \item  $\sum_{p \in \P^+(w,\tt)} \size(p) \leq d^+(w)+ \size(p^+_1(w,\tt))
+\size(p^+_2(w,\tt))$. 

\item $\sum_{p \in \P^-(w,\tt)} \size(p) \leq d^-(w)+ \size(p^-_1(w,\tt))
+\size(p^-_2(w,\tt))$. 
\end{romanenumerate}
\end{lemma}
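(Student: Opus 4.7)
The plan is to analyze, for each pair $(w,\tt) \in \W \times [0,\T]$, the carry-out constraint \eqref{eq-a1} indexed by $(w,\tt)$ at the termination of \alg{Iterative}; the argument establishing (ii) via the carry-in constraints \eqref{eq-a2} is entirely symmetric. I would split into two cases depending on whether this constraint is ever removed from the LP. If it is never removed, then it is preserved by every iteration of the algorithm, and once all variables $x_{p\tt}$ have been rounded to $\{0,1\}$ the constraint itself certifies $\sum_{p \in \P^+(w,\tt)} \size(p) \leq d^+(w)$, which is stronger than (i).

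Otherwise, fix the iteration at which the $(w,\tt)$-constraint is removed. Let $L$ denote the total size of \prods $p$ whose variable $x_{p\tt}$ has already been fixed to $1$ by then, let $F$ be the set of \prods $p$ whose variable $x_{p\tt}$ is still fractional, and set $S := \sum_{p \in F}\size(p)\, x_{p\tt}$, where $x_{p\tt}$ denotes the LP value at that moment. Two facts hold at the instant of removal: the current LP is feasible, so $L + S \leq d^+(w)$; and the removal rule yields $\sum_{p \in F}(1 - x_{p\tt}) \leq 2$. Denoting by $F_1 \subseteq F$ the subset whose variables are eventually rounded to $1$, we have that $\P^+(w,\tt)$ is the disjoint union of $F_1$ with the \prods counted in $L$, so proving (i) reduces to bounding the excess $\sum_{p \in F_1}\size(p) - S$ by $\size(p^+_1(w,\tt)) + \size(p^+_2(w,\tt))$.

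For this, I would rewrite the excess as
\[\sum_{p \in F_1}\size(p)\,(1-x_{p\tt}) \;-\; \sum_{p \in F \setminus F_1}\size(p)\, x_{p\tt},\]
discard the non-positive second term, and bound the first via the following observation: subject to $(1-x_{p\tt}) \in [0,1]$ for every $p \in F_1$ and $\sum_{p \in F_1}(1-x_{p\tt}) \leq 2$, the linear functional $\sum_{p \in F_1} \size(p)(1-x_{p\tt})$ is maximized by placing mass $1$ on the two largest-size elements of $F_1$. Since $F_1 \subseteq \P^+(w,\tt)$, this maximum is at most $\size(p^+_1(w,\tt)) + \size(p^+_2(w,\tt))$. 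Combining with $L+S\leq d^+(w)$ yields (i), and the symmetric argument applied to \eqref{eq-a2} gives (ii).

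The step I expect to be the main obstacle is the concentration argument in the previous paragraph: justifying that the worst-case rounding within a single time slot costs at most the two \emph{largest} sizes among \prods actually departing from $w$ at time $\tt$, rather than the potentially larger values scattered across all of $\P^+(w)$. This is precisely where the containment $F_1 \subseteq \P^+(w,\tt)$ is used, and it is the point at which the lemma refines the uniform bound $2s^{\max}$ coming from Theorem~\ref{KMRT:thm}; the rest of the argument is routine bookkeeping about the iterative rounding loop.
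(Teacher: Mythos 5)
Your proposal is correct and follows essentially the same route as the paper's proof: case on whether the $(w,\tt)$ carry-out constraint is ever removed, use LP feasibility at the removal iteration together with the removal condition $\sum(1-x_{p\tt})\leq 2$, and bound the rounding excess over the still-fractional products rounded to $1$ by the two largest sizes in $\P^+(w,\tt)$ via the fractional-greedy maximization you describe (which the paper carries out as an explicit chain of inequalities). The bookkeeping differs only cosmetically, so no further comparison is needed.
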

\begin{proof}
We only prove (i), since 
(ii) can be shown similarly.
%It suffices to show that 
% $\sum_{p \in \P^+(w,t)} \size(p) \leq d^+(w)+ \size(p_1^*)+
%\size(p_2^*)$ by the maximality of $\size(p_1^*)$ and $\size(p_2^*)$,
% where
%let $p_1^*$ (resp., $p_2^*$) be a \pro in $\P^+(w)$ with the maximum size
%(resp., the second maximum size).
We assume that  the corresponding \outdegree capacity constraint is removed during the $\gamma$th iteration  of the while-loop in
% Algorithm~\ref{iterative:algo}. 
\alg{Iterative}($\W,\P,d^+,d^-, \size,\transit,\T$).
Note that such an iteration  must exist, since otherwise  the \outdegree capacity constraint $\sum_{p \in \P^+(w,\tt)} \size(p) \leq d^+(w)$ is satisfied, which implies (i). 
Let $x^*_{p\tt}$'s denote the values of the LP  computed in the $\gamma$th iteration. 
For $i=0,1$, let $Q_i$ be the set of \prods  $p \in \P^+(w)$
such that the value of  $x_{p\tt}$ has been fixed to $i$ by the $\gamma$th iteration, 
and
$d_1=\sum_{p \in Q_1} \size(p)$.  
Note that $d_1+ \sum_{p \in \P^+(w) \sm (Q_0 \cup Q_1)} \size(p)x^*_{p\tt} \leq  d^+(w)$ and 
 $ \P^+(w,\tt)\subseteq \P^+(w) \sm Q_0$. 
By 
%$\P^+(w,\tt) \subseteq \P^+(w)$, 
$\P^+(w,\tt) \subseteq \P^+(w) \sm Q_0$, 
we have
\begin{equation}\label{remove:eq}
d_1+ \sum_{p \in \P^+(w,\tt) \sm  Q_1} \size(p)x^*_{p\tt} \leq  d^+(w).  
\end{equation}
Since the constraint is removed in the $\gamma$th iteration, we also have
$\sum_{p \in \P^+(w) \sm  (Q_0 \cup Q_1)} (1-x^*_{p\tt})\leq 2$, which again by   $\P^+(w,\tt) \subseteq \P^+(w)\sm Q_0$ implies 
\begin{equation}\label{additive:eq}
 \sum_{p \in \P^+(w,\tt) \sm   Q_1} (1-x^*_{p\tt})\leq 2.
\end{equation}
%(note that during the algorithm, 
% every variable $x_{p\tt}$ satisfies
% $0 \leq x_{p\tt} \leq 1$ 
%by the constraints
%\eqref{eq-a4}).
%in (\ref{IP:eq})).

Observe that
\begin{eqnarray}
\nonumber
\sum_{p \in \P^+(w,\tt)} \size(p) & = &
 d_1+\sum_{p \in \P^+(w,\tt) \sm    Q_1} \size(p) \\
\nonumber
& = & d_1 + \!\!\!\! \sum_{p \in \P^+(w,\tt) \sm   Q_1} \size(p)x^*_{p\tt} 
+ \!\!\!\!\sum_{p \in \P^+(w,\tt) \sm  Q_1} \size(p)(1-x^*_{p\tt})\\
%\nonumber
& \leq & d^+(w) +\sum_{p \in \P^+(w,\tt) \sm   Q_1} \size(p)(1-x^*_{p\tt}), \label{eq--xx0}
\end{eqnarray}
where the last inequality follows from (\ref{remove:eq}).
Let $p_1$ and $p_2$ denote the products in 
$ \P^+(w,\tt) \sm Q_1$ with the largest and second largest sizes, respectively. 
Then we have 
\begin{eqnarray}
\nonumber
\sum_{p \in \P^+(w,\tt) \sm  Q_1} \size(p)(1-x^*_{p\tt}) & = & 
(\size(p_1)- \size(p_2))(1-x^*_{p_1 \tt}) + \size(p_2)(1-x^*_{p_1 \tt})
\\
\nonumber 
&&
+\sum_{p \in \P^+(w,\tt) \sm (Q_1 \cup \{p_1\})} \size(p)(1-x^*_{p\tt})
\\
\nonumber
&\leq & 
 (\size(p_1)- \size(p_2))
 \\
 \nonumber
 &&
 + \size(p_2) 
\sum_{p \in \P^+(w,\tt) \sm  Q_1} (1-x^*_{p\tt})
\\
\nonumber
& \leq &  \size(p_1)+ \size(p_2)  \\
& \leq & \size(p^+_1(w,\tt))+ \size(p^+_2(w,\tt)).\label{eq--xxa}
\end{eqnarray}
Here the second inequality follows from
 (\ref{additive:eq}), and 
 the first and third ones follow from
 the definitions of $p_i$ and $\size(p^+_i(w,\tt))$, respectively.
By (\ref{eq--xx0}) and (\ref{eq--xxa}), we obtain the property (i).
\end{proof}

Let $\T_{\min}$ be the minimum $\T$ for which
the linear relaxation of 
formulation
\eqref{eq-a1}--\eqref{eq-a5}
%problem (\ref{IP:eq})
is feasible, and let $\varphi$ be a schedule obtained by %Algorithm \ref{iterative:algo}.
\alg{Iterative}
with $T=T_{\min}$.
%($\W,\P,d^+,d^-, \size,\transit,\T$).
It is easy to see that $T_{\min}$ is the completion time of $\varphi$, which is at most    
the minimum completion time of the original reallocation problem. 
Moreover, Lemma~\ref{additive:lem} 
implies that $\varphi$ is feasible with the capacity augmentation mentioned in Theorem \ref{bicriteria:th}. 

We thus remain to show that (I) $T_{\min}$ can be computed in polynomial time and (II) the size of linear relaxation with $T=T_{\min}$ in 
Algorithm \alg{Iterative}
%Algorithm \ref{iterative:algo}
is bounded by a polynomial in the input size.
The following lemma proves both (I) and (II). 
%and 
%by using binary search for example, we obtain (I), i.e., $T_\min$ can be computed in polynomial time. 
%%%%%%%%%%%%%%%%%%%%%%%%%%%
We therefore obtain Theorem \ref{bicriteria:th} (i).  

\begin{lemma}\label{feasible1:lem}
A linear relaxation  
of formulation
 \eqref{eq-a1}--\eqref{eq-a5} with $T=T_{\min}$
has a feasible solution $x^*_{p\tt}$ \,$(p\in P, \tt\in [0,T_{\min}-\tau(p)])$ such that 
 $x^*_{p\tt}=0$ for all $p \in P$ and all $\tt \geq 2m$. 
\end{lemma}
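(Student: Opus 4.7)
To prove the lemma, I would exhibit an explicit feasible LP solution at $T=T_{\min}$ whose support lies entirely in departure times $\tt\in[0,2m-1]$.

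First, I would establish the upper bound $T_{\min}\le T^*:=2m-1+\max_{p\in\P}\tau(p)$ by verifying that the uniform candidate $x^\circ_{p\tt}=1/(2m)$ for $\tt\in[0,2m-1]$ (and $0$ otherwise) is feasible in \eqref{eq-a1}--\eqref{eq-a5} at $T=T^*$. Constraint \eqref{eq-a4} is immediate, and for \eqref{eq-a1} at each $(w,\tt)$,
\begin{equation*}
\sum_{p\in \P^+(w)}\size(p)\,x^\circ_{p\tt}\;\le\;\frac{|\P^+(w)|\,d^+(w)}{2m}\;\le\;\frac{m\cdot d^+(w)}{2m}\;\le\;d^+(w),
\end{equation*}
using the standing hypothesis $\size(p)\le d^+(w)$ and $|\P^+(w)|\le m$; constraint \eqref{eq-a2} is verified symmetrically via $\size(p)\le d^-(w)$. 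This already settles the lemma whenever $T_{\min}=T^*$, since the uniform witness has all of its support in $[0,2m-1]$.

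For $T_{\min}<T^*$, I would distinguish two subcases. If $T_{\min}\le 2m-1$ the claim is vacuous, because every LP variable $x_{p\tt}$ satisfies $\tt\le T_{\min}-\tau(p)<2m$, so any feasible solution (whose existence is guaranteed by the definition of $T_{\min}$) automatically fulfils $x^*_{p\tt}=0$ for $\tt\ge 2m$. The interesting subcase is $2m\le T_{\min}<T^*$, where I would establish feasibility of the \emph{compact} LP obtained by imposing the additional constraints $x_{p\tt}=0$ for $\tt\ge 2m$ on \eqref{eq-a1}--\eqref{eq-a5} at $T=T_{\min}$.

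The main obstacle is proving the compact LP's feasibility, since in this subcase the uniform solution above no longer lies in the variable range of the LP at $T_{\min}$. My plan is to extract Hall-type marginal inequalities from any feasible solution $\bar{x}$ of the unrestricted LP at $T_{\min}$ and then invoke a bipartite LP-feasibility argument. Write $R_p=T_{\min}-\tau(p)$. Summing \eqref{eq-a1} over $\tt'\in[0,\tt]$ and noting that every product $p$ with $R_p\le\tt$ is fully placed by time $\tt$ (i.e.\ $\sum_{\tt'\le\tt}\bar{x}_{p\tt'}=1$) yields the marginal bound
\begin{equation*}
\sum_{p\in \P^+(w):\,R_p\le\tt}\size(p)\;\le\;(\tt+1)\,d^+(w)\qquad\text{for all }w\in\W,\ \tt\in[0,2m-2],
\end{equation*}
together with the aggregate bound $\sum_{p\in \P^+(w)}\size(p)\le m\cdot d^+(w)\le 2m\cdot d^+(w)$ at $\tt=2m-1$ which follows from the standing hypotheses; the analogous inequalities hold for carry-in. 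These are precisely the conditions needed for the bipartite capacitated placement structure underlying the compact LP—each product $p$ places its unit mass somewhere in $[0,\min(R_p,2m-1)]$, while each slot $(w,\tt)$ has capacities $d^+(w)$ and $d^-(w)$. A standard max-flow / LP-feasibility argument (in the spirit of the two-sided placement framework of \cite{KMRT15}) then certifies the compact LP feasible, producing the desired solution $x^*$ with $x^*_{p\tt}=0$ for all $\tt\ge 2m$ and completing the proof.
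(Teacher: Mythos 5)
Your reduction to the subcase $2m \le T_{\min} < T^*$ is clean (the uniform witness, the vacuous case, and the marginal inequalities you extract from $\bar{x}$ are all correct), but the final step is a genuine gap: there is no ``standard max-flow / LP-feasibility argument'' that turns your per-side marginal conditions into feasibility of the compact two-sided LP. The system \eqref{eq-a1}--\eqref{eq-a5} is not a flow problem, because each product's choice of departure time simultaneously consumes carry-out capacity at $(\source(p),\tt)$ and carry-in capacity at $(\sink(p),\tt+\transit(p))$; this coupling is exactly what makes the problem a $2$-sided placement instance rather than a transportation problem, and the result of \cite{KMRT15} you cite is a rounding theorem for an LP solution you already have, not a feasibility certificate from cut conditions. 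Concretely, the separate Hall-type conditions can hold on both sides while the joint LP is infeasible: take two products $p_1,p_2$ with $\source(p_i)=w$, $\sink(p_i)=v$, $\size\equiv 1$, $d^+(w)=d^-(v)=1$, departure deadlines $R_{p_1}=0$, $R_{p_2}=1$, and $\transit(p_1)=2$, $\transit(p_2)=1$. Both the carry-out marginals ($1\le 1$, $2\le 2$) and the carry-in marginals hold, yet the out-side forces $x_{p_1,0}=x_{p_2,1}=1$ and both products then arrive at $v$ at time $2$, violating \eqref{eq-a2}. This shows your extracted inequalities are strictly weaker than what the lemma asserts, so the implication you need cannot follow from them alone; you would have to carry more information from the feasible solution $\bar{x}$ of the unrestricted LP than its per-side marginals.

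The paper closes exactly this hole with a local shifting argument: starting from a feasible $\bar{x}$ at $T_{\min}$ with some mass $x^*_{q\eta}>0$ at a time $\eta\ge 2m$, it counts slack in units of \emph{fractional product count} (using the standing assumption $\size(p)\le d^+(w)$ so that total fractional count at most one per slot is automatically size-feasible), observes that $\source(q)$ has at least $m+x^*_{q\eta}$ units of such slack over the $2m$ slots $[0,2m-1]$ while $\sink(q)$ can absorb all but at most $m$ of them, and reroutes the late mass of $q$ into slots where both endpoints simultaneously have room; iterating removes all support at times $\ge 2m$. If you want to keep your ``compact LP'' framing, you would need to prove a feasibility-preservation statement of this kind directly rather than appeal to a cut characterization that does not exist for two-sided constraints.
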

\begin{proof}
Suppose that $x^*_{q\eta} >0$ holds for some $q \in P$ and $\eta \geq 2m$. 
Note that we have $\sum_{p \in P:s(p)=s(q)} \sum_{\tt\leq 2m-1}x^*_{p\tt}$
$\leq m-x^*_{q\eta}$, i.e., 
 at most $m-x^*_{q\eta}$ many products are sent from $s(q)$ during the time interval $[0,2m-1]$. 
Thus warehouse $s(q)$ has room to send at least
\begin{eqnarray*}
%\Theta&=&\{\theta \in \mathbb{Z}_+ \mid \tt \leq 2m-2, \sum_{q \in P:s(q)=s(p)} x^*_{q\tt} < d^+(s(p))\}\\
f&=&\sum_{\tt \in [0,2m-1]}\bigl(1 -\sum_{p \in P:s(p)=s(q)}x^*_{p\tt}\bigr) \geq 2m-(m-x^*_{q\eta})=m+x^*_{q\eta}
\end{eqnarray*}
 many products during the time interval $[0,2m-1]$, where we note that at least one product can be sent at any time.  
 On the other hand, since 
\begin{eqnarray*}
\sum_{\tt \in [0,2m-1]}\sum_{p \in P:t(p)=t(q)}x^*_{p(\tt+\tau(q)-\tau(p))}&\leq &m,  
\end{eqnarray*}
warehouse $t(q)$ has room to receive at least $f-m \geq x^*_{q\eta}$ many products during the time interval corresponding to the room of $s(q)$. Therefore, by sending $x^*_{q\eta}$ portion of product $q$ during the time interval $[0,2m-1]$, 
we can obtain a feasible solution  
$x^{**}_{p\tt}$  such that 
$|\{ (p, \tt) \mid p \in P, \tt\geq 2m,  x^{**}_{p\tt} > 0\}|< \{ (p, \tt) \mid p \in P, \tt\geq 2m,  x^{*}_{p\tt} > 0\}|$. 
By repeatedly applying this procedure, we obtain a desired feasible solution of the linear relaxation  \eqref{eq-a1}--\eqref{eq-a5} with $T=T_{\min}$. 
\end{proof}

%%%%%%%%%%%%%%%%%%%%%%%%%%%

Finally we consider the case of $d^-\equiv +\infty$, since  the case of $d^+\equiv +\infty$ can be treated similarly.
In this case, we have no \indegree  capacity constraints,
and 
our feasibility problem can be represented by 
{\em the  generalized assignment problem}, that is, the problem to   assign  jobs in $N$ to machines in  $M_1$ (i.e., the formulation 
\eqref{eq-b0}, \eqref{eq-b1}, \eqref{eq-b3}, and  \eqref{eq-b4} with $|M_2|=1$). 
Therefore, the following result for the generalized assignment problem can be directly used  to obtain  Theorem \ref{bicriteria:th} (ii).  

\begin{theorem}[\cite{ST93}]
\label{GA:thm}
For the generalized assignment problem 
$($i.e., \eqref{eq-b0}, \eqref{eq-b1}, \eqref{eq-b3}, and  \eqref{eq-b4} with $|M_2|=1)$, 
there exists a polynomial-time
algorithm for finding an assignment of jobs in $N$ to
machines in $M_1$ whose assignment cost is at most the optimal, if capacity constraints for $i \in M_1$ 
 are augmented with  $s_{i}^{\max}$.
 \end{theorem}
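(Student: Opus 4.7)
The plan is to prove this via the classical LP-rounding argument of Shmoys and Tardos. First I would write down the natural LP relaxation of \eqref{eq-b0}, \eqref{eq-b1}, \eqref{eq-b3}, \eqref{eq-b4} (specialized to $|M_2|=1$, so the index $j$ disappears and variables can be written as $x_{ik}$) obtained by replacing $x_{ik}\in\{0,1\}$ with $x_{ik}\ge 0$, and solve it in polynomial time. Let $x^*$ denote an optimal fractional assignment; its cost $C^*$ is a lower bound on the integer optimum. A standard preprocessing removes from the support any pair $(i,k)$ with $s_1(i,k)>d(i)$, so every surviving fractional assignment is individually feasible.

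Next, for each machine $i$ I would sort the jobs $k$ with $x^*_{ik}>0$ in nonincreasing order of $s_1(i,k)$ and greedily partition them into ``slots'' of total fractional mass exactly one (a single job may be split across two consecutive slots at the boundary). The number of slots on $i$ is $k_i\le\lceil\sum_k x^*_{ik}\rceil$. This induces a bipartite graph $H$ whose left side is $N$, whose right side is the disjoint union of all slots over all machines, and whose edge weights—the fractional amount of each job in each slot—form a fractional perfect matching of the job side whose total (edge-weighted) cost equals $C^*$.

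Because the bipartite matching polytope is integral, the fractional perfect matching can be rounded to an integral perfect matching of no greater cost: operationally, one iteratively cancels alternating cycles and paths in the support of $x^*$ by shifting mass along whichever orientation has smaller total cost. This yields an integral assignment of every job to exactly one slot---hence to one machine---of total cost at most $C^*\le \mathrm{OPT}$.

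The load-bound step is where the augmentation $s_i^{\max}$ appears. By the nonincreasing sort and the greedy packing, the maximum $s_1(i,k)$ over jobs in slot $t\ge 2$ is at most the minimum $s_1(i,k)$ in slot $t-1$. Since each slot receives exactly one integral job in the rounding, the integral load on machine $i$ is bounded by
\[
\sum_{t=1}^{k_i}\max_{k\in\mathrm{slot}_t}s_1(i,k)\ \le\ s_i^{\max}+\sum_{t=2}^{k_i}\min_{k\in\mathrm{slot}_{t-1}}s_1(i,k)\ \le\ s_i^{\max}+d(i),
\]
where the last inequality uses that the total fractional mass on $i$ is at most $d(i)/\min$-but more directly, that $\sum_t (\text{fractional load in slot }t)\le d(i)$ dominates $\sum_{t\ge 2}\min_{k\in\mathrm{slot}_{t-1}}s_1(i,k)$. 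The main obstacle I anticipate is justifying the cost-nonincreasing rounding step cleanly: one must argue that the slot graph $H$ has enough structure—in particular that each job's fractional mass lies in at most two slots and the support forms a pseudoforest in the jobs/slots bipartition—so that iterative cycle/path cancellations terminate in polynomial time without increasing cost. This sparsity observation is precisely what makes the Shmoys--Tardos scheme work, and verifying it carefully is the crux of the argument.
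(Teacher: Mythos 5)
The paper does not prove this statement---it is imported verbatim from Shmoys and Tardos \cite{ST93}---and your proposal correctly reconstructs their argument: the sorted slot construction, the integrality of the bipartite perfect-matching polytope for the cost bound, and the telescoping load bound $d(i)+s_i^{\max}$ are exactly the standard proof. One minor remark: the ``crux'' you worry about at the end is not actually needed, since integrality of the bipartite matching polytope already guarantees an integral perfect matching of no greater cost (computable directly as a min-cost matching in polynomial time), and your claim that each job's mass lies in at most two slots holds only per machine, not globally---but neither point affects correctness.
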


%%%%%%%%%%%%%%%%%%

\section{Problem with No Warehouse Capacity Constraint}\label{nocap:sec}

In this section, we consider the
\mainpro with $\capa \equiv +\infty$,
i.e., the case where every \house
has a sufficiently large capacity.
As observed later in Theorem~\ref{hard2:th},
the \mainpro with $\capa \equiv +\infty$
is still strongly NP-hard.
On the other hand, in contrast to the
general cases, we can construct constant-factor algorithms for the problem without relaxing any constraint.
%In this section, we show the following constant-factor approximation algorithms
%for the \mainpro with $\capa \equiv +\infty$.

\begin{theorem}\label{apx:th}
$(i)$ The
\mainpro can be solved in $\mO(n+m\log{m})$ time
if  $\capa \equiv +\infty$ and 
 $d^- \equiv +\infty$ $($or $d^+ \equiv +\infty)$ and 
 %$\size$ 
 \pro size
 is uniform.

\noindent
$(ii)$  The
\mainpro is
 3/2-approximable 
 %in $\mO(m\log{m})$ time
if  $\capa \equiv +\infty$ and 
 $d^- \equiv +\infty$ $($or $d^+ \equiv +\infty)$ and $\transit$ is uniform.

\noindent
$(iii)$ The \mainpro is
 7/4-approximable 
 %in  $\mO(m\log{m})$ time 
if
$\capa \equiv +\infty$ and $d^- \equiv +\infty$ $($or $d^+ \equiv +\infty)$.

\ishii{
\noindent
$(iv)$  The \mainpro is
 4-approximable 
 %in polynomial time
if
$\capa \equiv +\infty$ and 
%$\size$
\pro size
is uniform.
}

\noindent
$(v)$  The \mainpro is
 6-approximable 
 %in polynomial time
if
$\capa \equiv +\infty$.
%\end{description}
\end{theorem}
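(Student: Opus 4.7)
The five parts fall into two groups. In (i)--(iii), one of the carry-in/carry-out capacities is unbounded; combined with $c\equiv+\infty$, the problem decomposes over warehouses, and at each $w$ we independently schedule $P^+(w)$ (or $P^-(w)$) into time slots of capacity $d^+(w)$ (or $d^-(w)$) so as to minimize the per-warehouse completion time $\max_p(\varphi(p)+\tau(p))$; the overall optimum equals the maximum per-warehouse optimum. In (iv) and (v) both carry constraints are active, and I would lift the bicriteria schedule of Theorem~\ref{bicriteria:th}(i) to a feasible one by stretching time and repacking.

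For \textbf{(i)}, assume $\size\equiv 1$; the single-warehouse problem dispatches items at most $d^+(w)$ per step. Sorting $P^+(w)$ by non-increasing $\tau$ and dispatching $d^+(w)$ items per step is optimal by a standard exchange argument, and the overall running time is $\mO(n+m\log m)$. For \textbf{(ii)}, uniform transit $\mu$ reduces the single-warehouse problem to bin packing with bin size $d^+(w)$; First-Fit-Decreasing has absolute approximation ratio $3/2$ (Simchi-Levi), giving at most $(3/2)B^*(w)$ bins, where $B^*(w)$ is the bin-packing optimum. Combining the completion-time bound $(3/2)B^*+\mu-1$ with $\opt\ge B^*+\mu-1$ (from the per-warehouse bin-packing lower bound together with the $\tau_{\min}$ term as in Lemma~\ref{low:lem}) and $\mu\ge 1$ yields $(3/2)\opt$. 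For \textbf{(iii)}, I sort $P^+(w)$ by non-increasing $\tau$ and apply First-Fit into bins of capacity $d^+(w)$, dispatching bin $i$ at time $i-1$; since items are sorted, $\tau_i\le\tau(p_i)$, where $\tau_i$ is the largest $\tau$ in bin $i$ and $p_i$ is the $i$-th item. The completion time is $\max_i(i-1+\tau_i)$. Splitting the analysis on whether $\tau_{\max}$ is above or below a threshold like $\opt/2$, and combining the lower bounds $\opt\ge\tau_{\max}$ and $\opt\ge B^*+\tau_{\min}-1$, yields the $7/4$ bound; this case analysis is the delicate step.

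For \textbf{(iv)} and \textbf{(v)}, I invoke Theorem~\ref{bicriteria:th}(i) to obtain in polynomial time a schedule $\varphi$ with completion time at most $\opt$ such that at every $(w,\theta)$ the departing size is at most $d^+(w)+\sigma^+_1(w)+\sigma^+_2(w)\le 3d^+(w)$ and the arriving size at most $3d^-(w)$. I then define $\varphi'(p)=k\varphi(p)+\delta(p)$ with $\delta(p)\in[0,k-1]$, choosing offsets so that the original carry-out and carry-in constraints hold at each new time step. For \textbf{(v)} take $k=6$: the standard ``two consecutive bins are more than half full'' argument shows that First-Fit packs items of total size $\le 3d^+(w)$, each of size $\le d^+(w)$, into at most $6$ sub-slots of size $d^+(w)$, and symmetrically for arrivals. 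The new completion time satisfies
\begin{align*}
\max_p\bigl(6\varphi(p)+\delta(p)+\tau(p)\bigr)
&=\max_p\bigl(6(\varphi(p)+\tau(p))-5\tau(p)+\delta(p)\bigr)\\
&\le 6\opt-5\tau_{\min}+5\le 6\opt,
\end{align*}
since transit times are positive integers and so $\tau_{\min}\ge 1$. For \textbf{(iv)}, uniform size implies the augmentation is at most $2$ products per step, so $k=4$ suffices for packing into $d^+(w)$-sized sub-slots (for every integer $d^+(w)\ge 1$), and the analogous calculation gives completion time at most $4\opt$. The main obstacle is choosing a single offset $\delta(p)$ that simultaneously resolves carry-out and carry-in: I would formulate this as a bipartite matching / edge-coloring problem on the graph whose edges are the products $p$, connecting the source slot $(s(p),\varphi(p))$ with the sink slot $(t(p),\varphi(p)+\tau(p))$, and appeal to a degree-bounded coloring argument using that both sides have degrees within factor $k$ of the respective capacities.
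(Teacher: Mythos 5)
Your (i) and (ii) are essentially the paper's argument: per-warehouse decomposition into bin packing, FFD with the $3/2$ absolute guarantee, and the lower bound $\opt(w)\ge \opt_{\rm BP}(w)-1+\mu$ combined with $\mu\ge 1$. The first genuine gap is in (iii). You choose the right algorithm (sort by non-increasing $\tau$, then First-Fit), but the analysis you sketch---a threshold case split on $\tau_{\max}$ using only the global bounds $\opt\ge\tau_{\max}$ and $\opt\ge B^*+\tau_{\min}-1$---cannot reach $7/4$: for the bin $\ell$ and product $p$ realizing the completion time $\ell-1+\tau(p)$, these give only $\ell\le\frac74 B^*\le\frac74\opt$ and $\tau(p)\le\opt$, i.e.\ roughly $\frac{11}{4}\opt$, and no global split repairs this. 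The missing idea is a \emph{prefix} argument: restrict attention to the items processed up to $p$. By the online nature of First-Fit, $\ell\le\frac74\,\opt_{\rm BP}(\mbox{prefix})$, and since every prefix item has transit at least $\tau(p)$, the lower bound applies to the prefix as $\opt\ge\opt_{\rm BP}(\mbox{prefix})-1+\tau(p)$; chaining these with $\tau(p)\ge1$ yields $\ell-1+\tau(p)\le\frac74\opt$. (The same prefix claim with ratio $1$ is how the paper proves (i).)

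The second, more serious gap is in (iv)--(v). The refinement $\varphi'(p)=k\varphi(p)+\delta(p)$ destroys the arrival-slot structure when transit times are heterogeneous: the refined arrival time is $k\varphi(p)+\tau(p)+\delta(p)$, so products from \emph{different} original arrival slots can collide (with $k=6$, a product with $\varphi=0,\tau=7$ and one with $\varphi=1,\tau=1$ both arrive at refined time $7+\delta$). Consequently the per-slot bound $d^-(w)+\sigma^-_1+\sigma^-_2$ of Lemma~\ref{additive:lem} gives no control over the refined carry-in constraints, and the ``repack each slot into $k$ sub-slots'' step collapses. Even if slots did not mix, choosing a single $\delta(p)$ that simultaneously balances the departure slot $(s(p),\varphi(p))$ and the arrival slot $(t(p),\varphi(p)+\tau(p))$ is the entire combinatorial difficulty, and appealing to ``a degree-bounded coloring argument'' leaves it unresolved. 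The paper sidesteps both issues: it shifts whole copies of $\varphi$ by multiples of $T_{\min}$, so distinct phases never share a departure or arrival time and each phase inherits the original slot structure; and it exploits the fine structure of the overflow, namely that the set $P_1$ of per-slot largest products, chosen inclusion-minimal, induces a forest on the slot-bipartite graph and hence splits into $3$ feasible sets by a root-to-leaf assignment, $P_2$ (second largest) splits into $2$, and the remainder is feasible outright, giving $3+2+1=6$. For (iv) it gives a separate $4$-way edge partition built from a maximal feasible set and the degree bound $\delta_{H_0}(w)\le d^{\pm}(w)+2$. Your plan for (iv)/(v) would need to be replaced by an argument of this kind.
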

We also show in  Theorem~\ref{binpack1:th} that the problem is inapproximable
within a ratio of $3/2-\varepsilon$ for any $\varepsilon>0$. 
This implies that the approximation ratio of Theorem \ref{apx:th} (ii) is {\em optimal}.

%\noi{\bf Case: $d^-=\infty$:}
\subsection{Case of $d^- \equiv + \infty$}\label{binpackalgo:subsec}
In this section, we  consider the case of $d^- \equiv+\infty$
and  prove  Theorem~\ref{apx:th}~(i)--(iii), 
where the case of $d^+
\equiv +\infty$
can be treated similarly.
Since $d^- \equiv+\infty$ and $\capa \equiv+\infty$,
we have only to consider a  schedule for $\P^+(w)$  independently
  for each \house $w
\in \W$.

In what follows, we
will show that 
  schedules for $\P^+(w)$ which
attain approximation ratios of
 Theorem~\ref{apx:th}\,(i)--(iii)
can be found in polynomial time for every $w \in \W$.

We consider a schedule for $\P^+(w)$;
namely, we consider an instance
$\I_{\rm \RA}=(\W',\P^+(w),$ 
$d^+(w),$
$\infty,\infty,\size,\transit)$ of the
\mainpro,
where $W'=\{w\} \cup \{\sink(p) \mid p \in \P^+(w)\}$ and we regard $\size$ and $\transit$ as those restricted to $\P^+(w)$.
Then, we need to partition $\P^+(w)$ 
into sets of \prods
whose total size is bounded by the \outdegree capacity $d^+(w)$.
Based on this observation,
we can see that the problem consisting of 
$\I_{\rm \RA}$'s has a similar structure
to
problem \textsc{Binpacking} defined below.
 We  construct approximation algorithms 
 corresponding to Theorem~\ref{apx:th}\,(i)--(iii)
 by using the ones  for \textsc{Binpacking}
 as  subroutines.

%we will reduce $\I_{\rm \RA}$ to an instance of  \textsc{Binpacking}, and  apply algorithms for \textsc{Binpacking} to the obtained instance.

\begin{quote}
%\begin{prob1}\label{VC-problem1}
\noindent
%{\rm {\sc 3-Partition  (3Part)}}
{\rm Problem \textsc{Binpacking}}

\noindent
%{\rm INSTANCE:}
Instance:  
$(I,\size_{BP},d):$  A set $I$ of
  items,
a function $\size_{BP}: I \rightarrow \mathbb{R}_+$, and
 a bin with capacity  $d \in \mathbb{R}_+$.

\noindent
%{\rm QUESTION:} 
Output: A packing of all items in $I$ with the minimum number of bins,
 i.e., 
a partition ${\cal J}$
of $I$ with the minimum $|{\cal J}|$
such that for every $J \in {\cal J}$, 
the total size of items in $J$ 
 is at most $d$.
%\qed
%\end{prob1}
\end{quote}

\noi
We construct from $\I_{\rm \RA}$ 
an instance
  $\I_{\rm BP}=(I,\size_{BP},d)$ of   \textsc{Binpacking} as follows.
For each \pro $p_i \in \P^+(w)$, we create an item
$i$ with $\size_{BP}(i)=\size(p_i)$; denote the resulting set of items by
$I$.
Let $d=d^+(w)$ as the capacity of a bin.
Then note that
a subset  $J$ of $I$ can be packed into one bin if and only if
the corresponding set $\{p_i \mid i \in J\}$ of \prods 
 can depart from $w$ simultaneously,
since the \outdegree capacity constraint for $w$ is satisfied. 
Hence, it is not difficult to see that
 $I$ can be packed into $k$ bins
if and only if any product in $\P$ can be sent from $w$  by  time  $k-1$, by mapping a set of items 
in the $\ell$th bin into a set of \prods departing from $w$ at time $\ell-1$.
%Moreover, if $\transit(p)=1$ for all $p \in \P^+(w)$, then
% $I$ can be packed into $k$ bins if and only if we can complete the reallocation of all \prods by time $k$.
%
%
Let  $\opt(w)$ be the minimum completion time of a schedule for
 $\I_{{\rm RA}}$
and $\opt_{{\rm BP}}(w)$ be the minimum number of bins for $\I_{{\rm BP}}$.
We then have the following inequality: 
\begin{equation}\label{binpack:eq}
 \opt(w) \ \geq \ \opt_{{\rm BP}}(w)-1+\min_{p \in \P^+(w)}\transit(p).
\end{equation}
\nop{
By these observations, we will 
obtain
constant-factor approximate solutions for  $\I_{{\rm RA}}$ 
based on those for   $\I_{{\rm BP}}$.
}

We first consider the case where $\transit$ is uniform, i.e., 
$\transit(p)=\transituni$ for all $p \in \P$.
It was shown in \cite{Simchi94}
that  the  so-called 
\textsc{First-Fit Decreasing (FFD)}  algorithm delivers
 in $\mO(|I|\log{|I|})$ time
a feasible solution ${\cal J}=\{J_1, \dots , J_k\}$
 for $\I_{\rm BP}$  with
 $k \leq \frac{3}{2}\opt_{{\rm BP}}(w)$,
where
 $J_\ell$ denotes the set of items packed in the $\ell$th bin
 for $\ell \in [1,k]$.
Let $\depart$ be the schedule for $\I_{{\rm RA}}$ 
such that $\depart(p)=\ell-1$ for every \pro $p \in \P^+(w)$ corresponding to an item in
$J_\ell$.
Its completion time is $k-1+\transituni \leq \frac{3}{2}\opt_{{\rm
BP}}(w)-1+\transituni$
$\leq \frac{3}{2}\opt(w)$ by (\ref{binpack:eq}).
This proves  Theorem~\ref{apx:th}\,(ii).

We next consider the case where $\transit$ is general.
We sort items in $I$ in such a way that
the corresponding  \prods  satisfy
 $\transit(p_1)\geq \transit(p_2) \geq
\cdots \geq \transit(p_{|I|})$.
%by nonincreasing order of $\transit$.
According to this order, we apply the so-called \textsc{First-Fit (FF)} algorithm 
to
 $\I_{\rm BP}$ 
to obtain a feasible solution
${\cal J}=\{J_1, \dots , J_k\}$
 for $\I_{\rm BP}$, 
where 
 $J_\ell$ denotes the set of items packed in the $\ell$th bin
 for $\ell\in [1,k]$.
Here, the \textsc{FF} algorithm packs each item, one by one, into the bin with the
lowest possible index, while opening a new bin if necessary.
It was shown in  \cite{Simchi94} that
$k \leq \frac{7}{4}\opt_{{\rm BP}}(w)$,
while $k=\opt_{{\rm BP}}(w)$ clearly holds when the $\size$ is uniform.
Define $\alpha$  by $1$ if the $\size$ is uniform,
and
$\frac{7}{4}$ otherwise.

Let  $\depart$ be the schedule for $\I_{{\rm RA}}$ 
such that $\depart(p)=\ell-1$ for every \pro $p \in \P^+(w)$ corresponding to an item in
$J_\ell$.
We then claim that the completion time  $\T_w$ for $\depart$ satisfies $\T_w \leq 
%\frac{7}{4}
\alpha
\opt(w)$. 
Since the time complexity of the algorithm is dominated by sorting
items in $I$, it can be implemented in $\mO(|I|\log{|I|})$ time.
Hence the following claim proves  Theorem~\ref{apx:th}\,(i) and (iii).

\begin{claim}
$\T_w \leq 
%\frac{7}{4}
\alpha
\opt(w)$.
\end{claim}
\begin{claimproof}
Let $i \in I$ be an item such that
the corresponding \pro $p \in \P^+(w)$ arrives at $\sink(p)$ at time 
$\T_w$.
Assume that 
%\kawahara{
$i$ is the $j$th item in $I$ and
%}
the \textsc{FF} algorithm puts $i$ in the $\ell$th bin.
%\kawahara{
Let $I_j$ be the set of the first $j$ items in $I$.
%}
By the assumption, we have 
$\T_w=\ell-1+\transit(p)$. 
Let $\depart_{i}$ be the schedule obtained from $\depart$ by restricting the product set $P$ to those corresponding to 
%$[1,i]$ 
$I_j$.
%\kawahara{(*** $\gets$ should be replaced with $I_j$)}. 
We can see that $\depart_{i}$ is the schedule obtained by the FF algorithm for 
%$[1,i]$ 
$I_j$
and $\T_w$ is also the completion time for $\depart_{i}$.  
%Thus,  $j_1-1+\transit(p_\ell)=\Q(w)$ holds also in ${\cal S}_\ell$.
Let 
 $\opt_{{\rm BP},i}$ be the optimal value for $\I_{{\rm BP}}$
restricted to 
%$[1,i]$, 
$I_j$,
and let   
 $\opt_{i}(w)$ be the optimal value for
 $\I_{{\rm RA}}$ restricted to the product set corresponding to 
 %$[1,i]$.
$I_j$.
Note that $\ell \leq 
%\frac{7}{4}
\alpha
\opt_{{\rm BP},i}$.
Since $I$ is sorted as above, 
(\ref{binpack:eq}) implies that 
$\opt_{i}(w) \geq \opt_{{\rm BP},i}-1+\transit(p)$.
\nop{
Thus, we have $\T_w=j_1-1+\transit(p_{i_1})$
$\leq \frac{7}{4}\opt_{{\rm BP},i_1}-1+\transit(p_{i_1})$
$\leq \frac{7}{4}(\opt_{{\rm BP},i_1}-1+\transit(p_{i_1}))$
$\leq \frac{7}{4}\opt_{i_1}(w)$
$\leq \frac{7}{4}\opt(w)$.
}
Therefore,  we have $\T_w=\ell-1+\transit(p)$
$\leq \alpha\opt_{{\rm BP},i}-1+\transit(p)$
$\leq \alpha(\opt_{{\rm BP},i}-1+\transit(p))$
$\leq \alpha\opt_{i}(w)$
$\leq \alpha\opt(w)$, which completes the proof of the claim. 
\end{claimproof}

\nop{

Due to the space limitation,  we  only show that the problem is approximable
within a constant, where the proof
of 
Theorem~\ref{apx:th} can be found in the appendix.
For this, 
we below 
convert a
schedule $\depart$ for the capacity augmentation obtained by 
%the algorithm in Section~\ref{bicriteria:sec}
%Algorithm~\ref{iterative:algo}
algorithm \alg{Iterative}
into a constant-factor approximate schedule
of the original reallocation problem.
For this schedule $\depart$,
let
 $T_{\min}$,
$\P^+(w,\tt)$,  $\P^-(w,\tt)$, $p^+_i(w,\tt)$, and $p^-_i(w,\tt)$, $i=1,2$,
be defined as Section~\ref{bicriteria:sec}.
We here show that a 9-approximate schedule can be easily obtained from $\depart$.

Let us  partition $P$ into three sets  $P_{1}=\{p \in P\mid p=p^+_1(s(p),\depart(p))\}$, 
$P_{2}=\{p \in P\mid p=p^+_2(s(p),\depart(p))\}$, and $P_{3}=P\setminus (P_1 \cup P_2)$. 
We further partition $P_{\alpha}$ ($\alpha=1,2,3$) into 3 sets $P_{\alpha,1}=\{p \in P_\alpha\mid p=p^-_1(t(p),\depart(p)+\tau(p))\}$, 
$P_{\alpha,2}=\{p \in P_\alpha\mid p=p^-_2(t(p),\depart(p)+\tau(p))\}$, and $P_{\alpha,3}=P_\alpha\setminus (P_{\alpha,1} \cup P_{\alpha,2})$. 
We construct a schedule $\psi$ such that 
$\psi(p)=(3\alpha+\beta-4)T_{\min} +\depart(p)$ if $p \in P_{\alpha,\beta}$. 
By definition, the schedule $\psi$ sends any product $p$ in $P_{\alpha,\beta}$  from $s(p)$ to $t(p)$ during the time interval $[(3\alpha+\beta-4)T_{\min},(3\alpha+\beta-3)T_{\min}]$. 
By this, if 
two products $p$ and $p'$ satisfy $\psi(p)=\psi(p')$ or $\psi(p)+\tau(p)=\psi(p')+\tau(p')$, then they belong to the same set $P_{\alpha,\beta}$.
This together with 
%Theorem \ref{bicriteria:th} {\rm (i)} 
Lemma~\ref{additive:lem} 
implies that $\psi$ is feasible with the original reallocation problem.
Since the completion time for $\psi$ is at most $9\T_{\min}$, 
 by Theorem \ref{bicriteria:th} {\rm (i)}, we can conclude that $\psi$ is a $9$-approximate 
feasible schedule. 
%$--$> 9-approximation??
}

\subsection{General cases}
%\noi{\bf General case:}
In this section, we prove Theorem~\ref{apx:th} (iv) and (v).
Let us first show Theorem~\ref{apx:th} (v) by  converting a
schedule $\depart$ for the capacity augmentation obtained by 
%the algorithm in Section~\ref{bicriteria:sec}
%Algorithm~\ref{iterative:algo}
algorithm \alg{Iterative}
%($\W,\P,d^+,d^-, \size,\transit,\T$)
into a 6-approximate schedule of the original reallocation problem. 
%For ${\cal S}^*(\T_{\min})$,
For this schedule $\depart$,
let
 $T_{\min}$,
$\P^+(w,\tt)$,  $\P^-(w,\tt)$, $p^+_i(w,\tt)$, and $p^-_i(w,\tt)$, $i=1,2$,
be defined as Section~\ref{bicriteria:sec}.
We first claim that a 9-approximate schedule can be easily obtained from $\depart$.

Let us  partition $P$ into three sets  $P_{1}=\{p \in P\mid p=p^+_1(s(p),\depart(p))\}$, 
$P_{2}=\{p \in P\mid p=p^+_2(s(p),\depart(p))\}$, and $P_{3}=P\setminus (P_1 \cup P_2)$. 
We further partition $P_{\alpha}$ ($\alpha=1,2,3$) into 3 sets $P_{\alpha,1}=\{p \in P_\alpha\mid p=p^-_1(t(p),\depart(p)+\tau(p))\}$, 
$P_{\alpha,2}=\{p \in P_\alpha\mid p=p^-_2(t(p),\depart(p)+\tau(p))\}$, and $P_{\alpha,3}=P_\alpha\setminus (P_{\alpha,1} \cup P_{\alpha,2})$. 
We construct a schedule $\psi$ such that 
$\psi(p)=(3\alpha+\beta-4)T_{\min} +\depart(p)$ if $p \in P_{\alpha,\beta}$. 
By definition, the schedule $\psi$ sends any product $p$ in $P_{\alpha,\beta}$  from $s(p)$ to $t(p)$ during the time interval $[(3\alpha+\beta-4)T_{\min},
(3\alpha+\beta-3)T_{\min}]$. 
By this, if 
two products $p$ and $p'$ satisfy $\psi(p)=\psi(p')$ or $\psi(p)+\tau(p)=\psi(p')+\tau(p')$, then they belong to the same set $P_{\alpha,\beta}$.
This together with 
%Theorem \ref{bicriteria:th} {\rm (i)} 
Lemma~\ref{additive:lem} 
implies that $\psi$ is feasible with the original reallocation problem.
Since the completion time for $\psi$ is at most $9\T_{\min}$, 
by 
Theorem \ref{bicriteria:th} {\rm (i)}, 
we can conclude that $\psi$ is a $9$-approximate 
feasible schedule. 
\nop{
Similarly, if , i.e., they reach the designated warehouses at the same time, then

and also arrives
We then  partition $P$ into nine sets $P_{\alpha,\beta}$ ($\alpha,\beta=0,1,2$)  
such that $P_{0,0}=\{p \in P\mid \}$ 
P

$\P^+(,\tt)$

We partition $\P^+(w,\tt)$ into 
 $\P^+_0(w,\tt)=\P^+(w,\tt) \sm \{p^+_1(w,\tt),p^+_2(w,\tt)\}$,
  $\P^+_1(w,\tt)=\{p^+_1(w,\tt)\}$, and
  $\P^+_2(w,\tt)=\{p^+_2(w,\tt)\}$ for all $w \in \W$ and $\tt \in [0,\T_{\min}-1]$.

Let $\depart_1$ be the schedule from $\depart$ 
by changing  departure times for $p^+_1(w,\tt)$ and $p^+_2(w,\tt)$ 
as $\depart_1(p^+_1(w,\tt)):=\tt+\T_{\min}$ and $\depart_1(p^+_2(w,\tt)):=\tt+2\T_{\min}$.
By Lemma~\ref{additive:lem}(i),  $\depart_1$
satisfies the \outdegree capacity constraints
 for all $w \in \W$ and all $\tt$,
while the completion time for $\depart_1$ is $3\T_{\min}$.
On the other hand, 
the \indegree  capacity constraints are not necessarily satisfied.
Let $\P_i(\tt;\depart_1)$  denote  the set of \prods departing at time
$\tt \in [(i-1)\T_{\min},i\T_{\min}-1]$ in the schedule $\depart_1$ for $i \in \{1,2,3\}$.

We next partition $\P^-(w,\tt)$ into 
 $\P^-_1(w,\tt)=\P^-(w,\tt) \sm \{p^-_1(w,\tt),p^-_2(w,\tt)\}$,
  $\P^-_2(w,\tt)=\{p^-_1(w,\tt)\}$, and
  $\P^-_3(w,\tt)=\{p^-_2(w,\tt)\}$ for all $w \in \W$ and all
 $\tt \in [1,\T_{\min}]$.
Let $\P_i^-=\bigcup\{\P_i^-(w,\tt) \mid w \in \W, \tt \in [1,\T_{\min}]\}$
for $i \in \{1,2,3\}$.
Note that 
by Lemma~\ref{additive:lem}(ii), we have 
$\sum_{p \in \P^-_i(w,\tt)} \size(p) \leq d^-(w)$ for all $w \in \W$,
all $\tt \in [1, \T_{\min}]$, and all $i \in \{1,2,3\}$.
Hence, we  can obtain a feasible schedule $\depart_2$
% with the completion time $9\T_{\min}$
by  partitioning $\P_i(\tt;\depart_1)$ into three
groups $\P_i(\tt;\depart_1)\cap \P_1^-$,  $\P_i(\tt;\depart_1)\cap \P_2^-$,
and  $\P_i(\tt;\depart_1)\cap \P_3^-$
 for  $i \in \{1,2,3\}$.
Namely, we have
$\depart_2(p)=\tt+\{3(i-1)+j-1\}\T_{\min}$
 for $p \in \P_i(\tt;\depart_1)\cap \P_j^-$ for $i,j \in \{1,2,3\}$.
 The completion time for $\depart_2$ is $9\T_{\min}$.
}

In order to improve the approximation ratio, 
we need a more careful treatment for modifying a schedule $\depart$ for the capacity augmentation given by Theorem \ref{bicriteria:th} {\rm (i)}. 
More precisely, we convert $\depart$  to a feasible
schedule with the completion time $6\T_{\min}$,
by giving the following three feasible schedules (a)--(c).
Here
for the schedule $\depart$,
 $T_{\min}$,
$\P^+(w,\tt)$ and $\P^-(w,\tt)$ are defined in Section~\ref{bicriteria:sec}. 

Products $p^+_i(w,\tt)$ and $p^-_i(w,\tt)$ ($i=1,2$)
are  defined  similarly as in Section~\ref{bicriteria:sec}. 
%\beforekawahara{However, we choose them in such a way that the set $Q_i=\{p^+_i(w,\tt), p^-_i(w,\tt) \mid w\in W, \tt \in [0,T_{\min}]\}$ is (inculsion-wise) minimal.}
%\kawahara{
Let $Q_i=\{p^+_i(w,\tt), p^-_i(w,\tt) \mid w\in W, \tt \in [0,T_{\min}]\}.$
If more than one product in $\P^+(w,\tt)$ (resp., $\P^-(w,\tt)$) has the same $i$th largest size, $Q_i$ is not determined uniquely.  In this case, we choose $p^+_i(w,\tt)$ (resp., $p^-_i(w,\tt)$) so that $Q_i$ is (inculsion-wise) minimal.
%}
Note that such $Q_1$ and $Q_2$ can be computed in polynomial time. 

%and for a product set $Q \subseteq P$, we define $\max Q$ by the set of products $p$ with the maximum size among $Q$, i.e., $\max Q=\arg\!\max \{ \size(p) \mid p \in Q\}$.  
\begin{alphaenumerate}
\item
A feasible schedule $\psi_1$ with
the completion time $3\T_{\min}$ for a product set $P_1=Q_1$.

\item A feasible schedule $\psi_2$  with
the completion time $2\T_{\min}$ for a product set $P_2=Q_2\setminus Q_1$. 
 
\item A feasible schedule $\psi_3$  with
the completion time $\T_{\min}$ for a product set $P_3=\P \sm (\P_1 \cup \P_2)$.
\end{alphaenumerate}
\nop{%%%%%%%%%%%
%%%%%%%%%%%%%
\begin{alphaenumerate}
\item
A feasible schedule $\psi_1$ with
the completion time $3\T_{\min}$ for a product set $\P_1 \subseteq \P$ which contains 
a \pro in $\max \P^+(w,\tt)$ 
and a \pro in $\max\P^-(w,\tt)$  
 for every  $w \in \W$
and $\tt \in [0,\T_{\min}]$.

\item A feasible schedule $\psi_2$  with
the completion time $2\T_{\min}$ for a product set $\P_2 \subseteq \P \sm \P_1$ which contains 
a \pro in $\max(\P^+(w,\tt)  \sm \P_1)$
and  a \pro in $\max(\P^-(w,\tt)  \sm \P_1)$
for every  $w \in \W$
and $\tt \in [0,\T_{\min}]$.
 
\item A feasible schedule $\psi_3$  with
the completion time $\T_{\min}$ for a product set $P_3=\P \sm (\P_1 \cup \P_2)$.
\end{alphaenumerate}
}
%%%%%%%%%%

Note that $\{P_1, P_2, P_3\}$ is a partition of $P$, and hence (a), (b), and (c) imply Theorem~\ref{apx:th} (v), 
since a desired  schedule $\psi^*$ can be obtained by $\psi^*(p)=\psi_1(p)$ if $p \in P_1$, $\psi_2(p)+3T_{\min}$ if  $p \in P_2$, and $\psi_3(p)+5T_{\min}$ if  $p \in P_3$. 
%We also note that $p_1^+(w,\tt)$ is contained in  $\max\P^+(w,\tt)$. 
%However, since $p_1^+(w,\tt)$'s are arbitrarily chosen from $\max\P^+(w,\tt)$'s, 
%no product set  might exist such that
%it contains $p_1^+(w,\tt)$'s and $p_1^-(w,\tt)$'s and has the minimum completion time at most $3T_{\min}$. 

In order to show these statements, let us construct an undirected bipartite graph $H_0=(\tilde{W}_1 \cup
 \tilde{W}_2,E_0)$ as follows. 
Recall that  $H=(\W_1\cup \W_2,E(H))$ is an  undirected bipartite graph
obtained from the \dmgraph $G=(\W,E_\P)$ defined  before Lemma~\ref{lemma-000a}.
%according to the proof of Lemma~\ref{matching:lem}.
For $i=1,2$, we replace every  $w \in \W_i$ 
with its $\T_{\min}+1$ copies 
 $w_{i,\tt}$,
$\tt\in [0,\T_{\min}]$; we denote  by $\tilde{W}_i$ the resulting set of vertices.
For every \pro $p \in \P$,
%For every edge $(w_1, w_2) \in E(G_D)$
%with $w_1=\source(p)$ and $w_2=\sink(p)$
% which corresponds to a \pro $p \in \P$, 
we 
replace the corresponding edge $(\source(p),\sink(p)) \in E(H)$ with 
an undirected edge
  $(\source(p)_{1,\depart(p)},
 \sink(p)_{2,\depart(p)+\transit(p)})$
which connects two vertices corresponding to
its departure and arrival time in
 $\depart$;
we denote by $E_0$ the resulting set of edges.
For simplicity, in the rest of this section, we identify  
products $p$ in $\P$ with edges  $e_p= (\source(p)_{1,\depart(p)},
 \sink(p)_{2,\depart(p)+\transit(p)}) \in E_0$. For example, we write  $\size(e)$ instead of $\size(p)$ if an edge $e$ corresponds to a product $p$.

For (a), we can see the following property on $P_1$.

\begin{lemma}
A graph  
 $H_1=(\tilde{W}_1 \cup \tilde{W}_2, P_1)$ is a forest. 
\end{lemma}
\begin{proof}
Assuming a contrary that $H_1$ contains 
a cycle $C$, we derive a contradiction.  We claim that 
all the edges in $C$ have the same size.
Let $V(C)=\{v_1,v_2,\ldots,v_k\}$, and we assume without loss of generality that $v_1=w_{1,\tt} \in \tilde{W}_1$ and $(v_1,v_2) =p^+_1(w,\tt)$.
Then we have $(v_2,v_3) =  p^-_1(w',\tt')$
for $v_2=w'_{2,\tt'} \in \tilde{W}_2$
%\kawahara{
because otherwise we can show that there exists an edge in $C$ that is neither $p^+_1(\hat{w},\hat{\tt})$ nor $p^-_1(\hat{w},\hat{\tt})$ for some $\hat{w} \in \tilde{W}_1 \cup \tilde{W}_2$ and $\hat{\tt} \in [0,\T_{\min}]$,
which contradicts the construction of $Q_1 (= P_1)$.
%}.
This means that $\size((v_1,v_2))\leq \size((v_2,v_3))$.
Similarly, we have $(v_3,v_4) =  p^+_1(w'',\tt'')$
for $v_3=w''_{1,\tt''} \in \tilde{W}_1$, which implies $\size((v_2,v_3))\leq \size((v_3,v_4))$.
By repeatedly applying this argument, 
we obtain 
 $\size((v_1,v_2)) \leq \size((v_2,v_3)) \leq
\dots \leq \size((v_k,v_1)) \leq \size((v_1,v_2))$, which proves the claim. 
However, this contradicts the minimality of $P_1$ 
%\kawahara{
because the set obtained by removing an edge in $C$ from $Q_1$ still satisfies the requirement of $Q_1$,
%}, 
a contradiction. 
\end{proof}

For a subset $F \subseteq E_0$, 
let $F(w)$ denote the set of edges in $F$ incident to $w$. 
Note that a set $F \subseteq E_0$  represents  a 
schedule with the completion time at most $T_{\min}$, and it is feasible 
if the total size of edges in $F(w)$
is at most $d^+(w)$ (resp., $d^-(w)$)
for every vertex $w \in \tilde{W}_1$ (resp., $w \in \tilde{W}_2$);
we call such an $F$  \emph{feasible}.
For (a), it suffices to show that $P_1$
can be partitioned into three feasible sets.

\begin{lemma}\label{forest:lem}
The set  $P_1$
can be partitioned into three feasible sets in polynomial time.  Hence we can compute in polynomial time  a feasible schedule of $P_1$ with the completion time at most 
$3\T_{\min}$. 
\end{lemma}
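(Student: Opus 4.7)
The plan is to combine a per-vertex capacity bound, derived from Lemma~\ref{additive:lem}, with an algorithmic construction that leverages the forest structure of $H_1$.

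First I would establish the following local claim: at each vertex $v$, write $s_1\geq s_2\geq\cdots$ for the sizes of $P_1$-edges incident to $v$; then $s_1,s_2\leq d(v)$ and $\sum_{i\geq 3} s_i \leq d(v)$, where $d(v)$ denotes the appropriate carry-out or carry-in capacity. The bounds $s_1,s_2\leq d(v)$ are immediate from the assumptions in Section~\ref{preliminaries:sec}. For the tail bound, since $Q_1$ contains the designated $p_1^{\pm}(v)$-edge of size $\sigma_1(v)$, we have $s_1 = \sigma_1(v)$; I would then split on whether the $\sigma_2(v)$-sized edge in $P^{\pm}(v)$ belongs to $P_1$. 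If it does, $s_2=\sigma_2(v)$ and Lemma~\ref{additive:lem} directly yields $\sum_{i\geq 3} s_i \leq d(v)$; if it does not, then the missing $\sigma_2(v)$-item also contributes to the augmentation bound, giving $s_2 + \sum_{i\geq 3} s_i \leq d(v)$ and hence the claim. In particular, at any single vertex, the $P_1$-edges admit a feasible 3-partition by placing $s_1$, $s_2$, and all remaining edges in three separate bins.

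Next, to promote this per-vertex packing into a globally consistent 3-partition, I would exploit the forest structure of $H_1$. Root each tree of $H_1$ at an arbitrary vertex and traverse its edges top-down in BFS order from the root; for each non-root vertex $v$, when choosing a color for the edge $e_v$ from $v$'s parent $u$ to $v$, I would select a color in $\{1,2,3\}$ that is compatible with the partial load already fixed at $u$ and with the local three-way split at $v$ established above.

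The main obstacle is proving that the top-down greedy always extends to a globally feasible partition. The difficulty is that a naive rank-based rule can be inconsistent across an edge's two endpoints, so the algorithm must be careful in its choice. I would maintain the invariant that every processed vertex respects the local three-way split $(s_1, s_2, \text{tail})$. Thanks to the forest's acyclicity, local disagreements cannot cascade into cycles of obstruction; the tail bound $\sum_{i\geq 3} s_i \leq d(v)$ always leaves room in the third color class for \emph{light} edges, and the first two colors accommodate the two heaviest. Polynomial runtime is immediate, since each of the $\mO(|P_1|)$ edges is processed in constant time.
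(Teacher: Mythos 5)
Your proof is correct and follows essentially the same route as the paper's: a per-vertex three-way split guaranteed by Lemma~\ref{additive:lem} (you make the $(s_1,s_2,\mathrm{tail})$ partition and its feasibility explicit, which the paper leaves as an observation), followed by a top-down pass over the rooted forest in which, at each non-root vertex, only the single parent edge is pre-colored, so the three local classes can always be relabeled to match it. The paper phrases this last step as ``without loss of generality the parent edge lies in $R_1(w)$ and in the current $F_1$,'' which is the clean form of the compatibility argument you gesture at.
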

\begin{proof}
By Lemma~\ref{additive:lem}, we can observe that for every $w \in
\tilde{W}_1 \cup \tilde{W}_2$, 
 $P_1(w)$ can be partitioned into three
sets $R_i(w)$ ($i=1,2,3$)  such that
the total size of edges in $R_i(w)$ (namely, $\sum_{e \in R_i(w)}\size(e)$)
is at most $d^+(w)$ (resp., $d^-(w)$) for all $w \in \tilde{W}_1$ (resp., $w \in \tilde{W}_2$).  

Based on this, we prove the lemma by giving an algorithm
for partitioning $P_1$ into three feasible sets
$F_i$, $i=1,2,3$.
First we regard each component $X$ in $H_1$ as a rooted tree with root $r_X$
for a vertex $r_X \in V(X)$ chosen arbitrarily.
We initially let $F_i:=\emptyset$ for $i=1,2,3$, and
 repeat the following procedure for every vertex $w \in V(H_1)$ from the root to leaves
in a top-down way:
\begin{quote}
%\kawahara{
If $w = r_X$ for some $X$,
 update $F_i:=F_i \cup R_i(r_X)$ for $i=1,2,3$.
 Otherwise,
 %}
without loss of generality, 
assume that for the parent $v$ of $w$, $(v,w) \in R_1(w)$ 
and it is contained in the current $ F_1$. 
Update $F_i:=F_i \cup R_i(w)$ for $i=1,2,3$.
\end{quote} 
The resulting sets $F_1,F_2$, and $F_3$ are feasible, and can be computed in polynomial time. 
Moreover, by setting 
$\psi_1(p)=\depart(p)+(i-1)\T_{\min}$ if $p \in F_i$, we obtain a feasible schedule 
with the completion time at most 
$3\T_{\min}$. 
\end{proof}

For (b), 
let  $H_2=(\tilde{W}_1 \cup \tilde{W}_2, P_2)$.
Similarly to the discussion above, 
we can conclude that $H_2$ is a forest and 
$P_2$ can be partitioned into two feasible sets, since $P_2$ is disjoint from $Q_1$. 
\begin{lemma}\label{forest:lem-1}
The set  $P_2$
can be partitioned into two feasible sets in polynomial time.  Hence we can compute in polynomial time  a feasible schedule of $P_2$ with the completion time at most 
$2\T_{\min}$. 
\end{lemma}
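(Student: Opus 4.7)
The plan is to mimic the proof of Lemma~\ref{forest:lem} but exploit the ``second-order'' bound from Lemma~\ref{additive:lem}, which is tight enough to yield a partition of $P_2$ into two feasible sets rather than three.

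First I would show that the bipartite multigraph $H_2=(\tilde{W}_1\cup\tilde{W}_2, P_2)$ is a forest. Assume for contradiction that $H_2$ contains a cycle $C=v_1v_2\cdots v_kv_1$. Each edge of $C$ lies in $P_2=Q_2\setminus Q_1$, so it realizes the designated second-largest size at some endpoint, while not coinciding with any designated largest-size edge $p^\pm_1(\cdot,\cdot)$. Tracing around $C$ exactly as in the proof of Lemma~\ref{forest:lem}, but with $p^\pm_1$ replaced by $p^\pm_2$, one obtains a chain
\[
\size((v_1,v_2)) \ \leq\ \size((v_2,v_3)) \ \leq\ \cdots\ \leq\ \size((v_k,v_1)) \ \leq\ \size((v_1,v_2)),
\]
forcing every edge of $C$ to have the same size. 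This contradicts the inclusion-wise minimality of $Q_2$, since one can reassign a designated $p^\pm_2(w,\tt)$ among these equal-sized edges to another such edge already lying in $Q_2$, strictly shrinking $Q_2$.

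Second, at each vertex $w_{1,\tt}\in\tilde{W}_1$ (the case $w_{2,\tt}\in\tilde{W}_2$ being symmetric via Lemma~\ref{additive:lem}(ii)), I would exhibit a two-part feasible partition of $P_2(w_{1,\tt})\subseteq\P^+(w,\tt)$. Because $p^+_1(w,\tt)\in P_1$ is excluded from $P_2$, Lemma~\ref{additive:lem}(i) yields
\[
\sum_{e\in P_2(w_{1,\tt})}\size(e) \ \leq\ \sum_{e\in\P^+(w,\tt)}\size(e)-\size(p^+_1(w,\tt))\ \leq\ d^+(w)+\size(p^+_2(w,\tt)).
\]
If $p^+_2(w,\tt)\in P_2(w_{1,\tt})$, I place it alone in one part (whose total size is at most $d^+(w)$ by the standing assumption $\size(p)\leq d^+(w)$) and the remaining edges in the other (whose total size is at most $d^+(w)$ by subtracting $\size(p^+_2(w,\tt))$ from the inequality above). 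Otherwise $p^+_2(w,\tt)\in Q_1$, so $P_2(w_{1,\tt})\subseteq\P^+(w,\tt)\setminus\{p^+_1(w,\tt),p^+_2(w,\tt)\}$, whose total size is already at most $d^+(w)$, and all of $P_2(w_{1,\tt})$ fits in a single part.

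Finally I would assemble a global partition $\{F_1,F_2\}$ of $P_2$ via the same top-down tree traversal used in Lemma~\ref{forest:lem}: root each tree of $H_2$ arbitrarily, process vertices from root to leaves, and at each non-root $w$ relabel $\{R_1(w),R_2(w)\}$ so that the parent edge (already in some $F_j$) lies in $R_j(w)$, then update $F_i:=F_i\cup R_i(w)$ for $i\in\{1,2\}$. Setting $\psi_2(p)=\depart(p)+(i-1)T_{\min}$ for $p\in F_i$ gives a feasible schedule with completion time at most $2T_{\min}$, all computed in polynomial time. The main obstacle is Step~1: the designated $p^\pm_2$-structure is weaker than the $p^\pm_1$-structure used in Lemma~\ref{forest:lem} because a $p^+_2(w,\tt)$ may coincide with some $p^-_1(w',\tt')$ at its other endpoint, so the alternating monotone chain around a putative cycle must be derived carefully, together with a precise use of the minimality clause defining $Q_2$.
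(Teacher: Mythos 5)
Your proof is correct and follows essentially the same route as the paper, which simply re-runs the argument of Lemma~\ref{forest:lem} for $P_2$: the forest property via the equal-size chain around a putative cycle and the minimality of $Q_2$, followed by the rooted top-down tree traversal, now with the two-set local decomposition that Lemma~\ref{additive:lem} yields once the largest product $p^+_1(w,\tt)\in Q_1$ is excluded from $P_2$. The obstacle you flag at the end is actually vacuous: an edge of $P_2=Q_2\setminus Q_1$ can never coincide with any $p^{\pm}_1(w',\tt')$, since such a product lies in $Q_1$ and is therefore excluded from $P_2$.
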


As for %Case 
(c),  it is not difficult to see that $P_3$ itself is feasible by Lemma~\ref{additive:lem},
% great !! I have just realized 
%Theorem \ref{bicriteria:th} {\rm (i)},
since $P_3$ is disjoint from $Q_1$ and $Q_2$. 

\begin{lemma}\label{forest:lem-2}
We can compute in polynomial time  a feasible schedule of $P_3$ with the completion time at most 
$\T_{\min}$. 
\end{lemma}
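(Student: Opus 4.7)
The natural plan is to take $\psi_3$ to be simply the restriction of the relaxed schedule $\depart$ returned by \alg{Iterative} with $T=T_{\min}$ (from Theorem~\ref{bicriteria:th}(i)) to the product set $P_3$; that is, set $\psi_3(p)=\depart(p)$ for every $p\in P_3$. Since $\depart$ has completion time $T_{\min}$, the completion time of $\psi_3$ is clearly at most $T_{\min}$, and the construction is polynomial time once $\depart$ and the representative families $Q_1,Q_2$ have been computed (the latter requires only one scan of each slot).

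The substantive step is to check that $\psi_3$ respects the original (unaugmented) carry-out and carry-in capacities; the warehouse-capacity constraints are vacuous in this section since $\capa \equiv +\infty$. By definition $P_1\cup P_2=Q_1\cup Q_2$, so for every $w\in \W$ and every $\tt$, the set $P_3$ excludes both the chosen representative $p^+_1(w,\tt)$ of the largest product in $\P^+(w,\tt)$ and the chosen representative $p^+_2(w,\tt)$ of the second largest, and symmetrically $p^-_1(w,\tt),p^-_2(w,\tt)$ from the arrival slot $\P^-(w,\tt)$. Combining this with Lemma~\ref{additive:lem}(i) gives
\[
  \sum_{p\in \P^+(w,\tt)\cap P_3} \size(p) \;\leq\; \sum_{p\in \P^+(w,\tt)} \size(p) - \size(p^+_1(w,\tt)) - \size(p^+_2(w,\tt)) \;\leq\; d^+(w),
\]
and the analogous bound $\sum_{p\in \P^-(w,\tt)\cap P_3} \size(p)\leq d^-(w)$ follows from Lemma~\ref{additive:lem}(ii).

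There is essentially no real obstacle; the lemma is a direct bookkeeping consequence of how $P_3$ was defined relative to the two largest products at each slot, combined with the additive slack already established in Lemma~\ref{additive:lem}. The only mildly delicate point is that even when several products in $\P^+(w,\tt)$ share the same largest or second-largest size, the bound in Lemma~\ref{additive:lem} only involves the numerical values $\size(p^+_1(w,\tt))$ and $\size(p^+_2(w,\tt))$, which are invariants of the slot; hence the inequality above is insensitive to which specific tied representatives were picked when forming the inclusion-wise minimal sets $Q_1$ and $Q_2$.
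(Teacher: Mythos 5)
Your proof is correct and follows exactly the paper's (very terse) argument: the paper likewise takes $\psi_3$ to be the restriction of $\depart$ to $P_3=P\setminus(Q_1\cup Q_2)$ and observes that feasibility is immediate from Lemma~\ref{additive:lem}, since each departure/arrival slot of $P_3$ is missing its largest and second-largest products. Your explicit handling of ties is a welcome clarification of a point the paper leaves implicit, but the route is the same.
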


From Lemmas  \ref{forest:lem},  \ref{forest:lem-1}, and \ref{forest:lem-2}, 
a 6-approximate feasible schedule of $P$ can be found 
in polynomial time, which proves   Theorem~\ref{apx:th}\,(v).

%%%

\medskip

Finally, 
we can show that
if the \pro size is uniform, then
 the \mainpro is 4-approximable 
 as shown in Lemma~\ref{4apx:lem},
which proves   Theorem~\ref{apx:th}~(iv).

\begin{lemma}\label{4apx:lem}
If the \pro size is uniform, then
$E_0$ can be partitioned into four feasible sets  in polynomial time.
\end{lemma}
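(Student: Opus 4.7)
The plan is to partition $E_0$ into four feasible sets by applying a balanced (Euler-tour) edge-decomposition of $H_0$ into four classes. After normalising so that every $\size(p)=1$ and every capacity $d^+(w), d^-(w)$ is a positive integer, I write $d(v)=d^+(w)$ if $v=w_{1,\tt}$ and $d(v)=d^-(w)$ if $v=w_{2,\tt}$; the standing assumption $\size(p)\le d^+(\source(p)), d^-(\sink(p))$ forces $d(v)\ge 1$, and Lemma~\ref{additive:lem} gives $\delta_{H_0}(v)\le d(v)+2$ at every $v$. Since sizes are uniformly $1$, a subset $F\subseteq E_0$ is feasible precisely when $|F(v)|\le d(v)$ for every $v$, where $F(v)$ denotes the set of edges of $F$ incident to $v$.

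To produce the four classes I would use the classical Euler-tour balanced-splitting procedure: to split a graph into two halves whose degrees at every vertex differ from $\delta(v)/2$ by at most one, augment the graph by adding auxiliary edges between odd-degree vertices \emph{inside each connected component} so that every component becomes Eulerian, extract an Euler tour, colour the tour edges alternately, and discard the auxiliary edges. Applying this splitting twice---once to $H_0$, then once again to each of the two halves---yields, in $\mO(|E_0|)$ time, a partition $E_0 = F_1 \cup F_2 \cup F_3 \cup F_4$ with $|F_i(v)|\le \lceil \delta_{H_0}(v)/4\rceil$ at every vertex $v$ and every $i\in\{1,2,3,4\}$.

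To finish, I only need to verify $\lceil \delta_{H_0}(v)/4\rceil \le d(v)$, i.e.\ $\lceil (d(v)+2)/4\rceil \le d(v)$, which is a short case check: the left side equals $1,1,2$ for $d(v)=1,2,3$ respectively, and for $d(v)\ge 4$ the bound $(d(v)+2)/4 \le d(v)$ is immediate. This gives $\sum_{e\in F_i(v)}\size(e) = |F_i(v)| \le d(v)$ at every $v$, so each $F_i$ is feasible, proving the lemma. The main subtlety I anticipate is the odd-degree handling inside the Euler-splitting step---auxiliary edges must be inserted within each connected component, and removed only after the alternating labelling, so that the $\pm 1$ balance holds at \emph{every} vertex rather than only on average---but this is a standard device, after which the rest of the argument is a routine degree count.
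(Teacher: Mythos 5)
Your overall strategy---reduce feasibility to the degree condition $|F_i(v)|\le d(v)$ and obtain the four classes by a balanced edge decomposition of $H_0$---is a legitimate alternative to the paper's argument, which instead peels off a maximal feasible set $E_1$, then a matching $E_2$ covering the remaining vertices of degree at least three, and then $2$-splits what is left. However, there is a genuine gap in the quantitative claim that drives your proof: two rounds of the general Euler-tour splitting you describe do \emph{not} yield $|F_i(v)|\le \lceil \delta_{H_0}(v)/4\rceil$. Each round only guarantees $|\deg_{G_i}(v)-\deg(v)/2|\le 1$, and the $+1$ genuinely occurs at the start vertex of an Euler tour of odd length; your auxiliary edges, placed arbitrarily between odd-degree vertices inside a component, can destroy bipartiteness and create exactly such tours. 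Compounding two rounds therefore gives only $|F_i(v)|\le \deg(v)/4+3/2$. Concretely, a vertex $v$ with $\delta_{H_0}(v)=3$ can end up with two edges in one class: round one leaves $v$ with a $2$--$1$ split, and in round two the two surviving edges at $v$ may receive the same colour if $v$ is the start vertex of an odd tour (e.g., a path component whose auxiliary edge closes it into an odd cycle). Since Lemma~\ref{additive:lem} permits $\delta_{H_0}(v)=d(v)+2=3$ with $d(v)=1$, and feasibility there forces every class to contain at most one edge at $v$, this is precisely the tight case, and the partition your procedure produces can be infeasible.

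The gap is repairable, but only by a fact you never invoke: $H_0$ is \emph{bipartite}, with sides $\tilde{W}_1$ and $\tilde{W}_2$. For bipartite graphs one can achieve the exact balanced split $\deg_{G_i}(v)\le\lceil \deg(v)/2\rceil$: add one auxiliary vertex on each side, join each to the odd-degree vertices of the opposite side, and pad with parallel edges between the two auxiliary vertices; the augmented graph stays bipartite with all degrees even, so every Eulerian component has an even number of edges and the alternating colouring is exactly balanced at every vertex, with the ceiling appearing only when the discarded auxiliary edge is removed. Iterating (the halves are again bipartite) gives $|F_i(v)|\le\lceil\lceil\deg(v)/2\rceil/2\rceil=\lceil\deg(v)/4\rceil$; this is the equitable edge-colouring theorem for bipartite graphs. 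With that in hand, your closing check $\lceil (d(v)+2)/4\rceil\le d(v)$ for every integer $d(v)\ge 1$ is correct and the lemma follows. So the route is viable and arguably cleaner than the paper's, but as written the central inequality is unproved, and it is false for the procedure you actually describe.
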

\begin{proof}
Similarly to the discussion in Section~\ref{size2:sec}, we assume 
without loss of generality that $\size(p)=1$ for all \prods $p\in \P$,
and both of $d^+$ and $d^-$ are integral.
It follows from  Lemma~\ref{additive:lem} that 
\begin{equation}\label{4apx:eq}
\delta_{H_0}(w) \leq d^+(w)+2~ \ \ \ (\mbox{resp., }d^-(w)+2) 
\end{equation}
holds  for all $w \in \tilde{W}_1$ (resp., $\tilde{W}_2$). 
Here we recall that $\delta_{H_0}(w)$ is the degree of $w$ in $H_0$. 
In what follows, we construct four feasible sets $E_1$, $E_2$, $E_3$, and  $E_4$ which forms a partition of $E_0$. 

Let $E_1 \subseteq E_0$ be a maximal feasible set of edges,
and $H_1=(\tilde{W}_1 \cup \tilde{W}_2, E_0 \sm E_1)$.
We then claim that no two vertices of degree at least three are adjacent in   $H_1$.  
%for every vertex $v$ in $H'_1$ 
%with $\delta_{H_1'}(v)\geq 3$,
%with degree at least three,
%any neighbor of  $v$ has degree at most two.
Indeed, if $H_1$ would have an edge $(v_1,v_2)$ 
with $\delta_{H_1}(v_i)\geq 3$,
$i=1,2$, then  $E_1 \cup \{(v_1,v_2)\}$ would be feasible by
\eqref{4apx:eq},  contradicting the maximality of $E_1$.
Let $E_2$ 
be a  set of edges in $E_0 \sm E_1$ obtained 
 by arbitrarily choosing one edge from $(E_0 \sm E_1)(w)$
for each vertex $w$ with $\delta_{H_1}(w)\geq 3$.
%with degree at least three.
Then it follows from the claim that $E_2$ is a matching of $H_1$,  and hence $E_2$ is feasible. 
% and we have $d^+(v) \geq 1$ and $d^-(v)\geq 1$ for all $v \in \tilde{W}_1 \cup \tilde{W}_2$
%by assumption.
%Let  $H_2=(\tilde{W}_1 \cup \tilde{W}_2, E_0 \sm (E_1 \cup E_2))$.
%For proving the lemma,
%it suffices to
%show
%We will show 
Let us then partition 
$F=E_0 \sm (E_1 \cup E_2)$
into two
sets $E_3$ and $E_4$  such that
$E_3 \cap F(w), E_4 \cap F(w)\not=\emptyset$ for each vertex $w$ with $|F(w)|\geq 2$. 
Namely,  $F(w)$ is partitioned into two nonempty sets by $E_3$ and $E_4$ if it contains at least two edges.
%every vertex of degree at least two 
%in $H_2$ has both of 
%an edge in $E_3$ and an edge in $E_4$ incident to it.

We first show that such sets $E_3$ and $E_4$ are both feasible.
By symmetry, we only show the feasibility of $E_3$. 
Let $w$ be a vertex in $\tilde{W}_1$ 
 which is incident to  
 at least two edges in $E_3$. 
Then $E_4$ contains 
 at least one edge incident to $w$ by the definition of $E_3$ and $E_4$.
Also note that by  $|(E_0\setminus E_1)(w)|\geq |F(w)|\geq 3$,
$E_2$ contains an edge incident to $w$.
Hence we have $|E_3(w)|\leq \delta_{H_0}(w)-2$, which is at most $d^+(w)$ by  \eqref{4apx:eq}.
Similarly, we can see that 
$|E_3(w)|\leq d^-(w)$
for any $w \in \tilde{W}_2$. 
Therefore, $E_3$ is feasible.

We next show that such sets $E_3$ and $E_4$ can be found in the following manner:

\smallskip

\begin{description}
\item[(i)] Initialize $J=F$ and $F_3, F_4:=\emptyset$.

\item[(ii)] While $J$ contains a cycle $C=e_1,e_2, \dots, e_\ell$, 
update $F_3:=F_3 \cup \{e_{2k-1} \mid k \in [1,\ell/2]\}$,
$F_4:=F_4 \cup \{e_{2k} \mid k \in [1,\ell/2]\}$,
and $J:=J \sm C$.
%where 
%edges $e_1,e_2, \ldots, e_{\ell}$
%appear consecutively in $C$.
 
\item[(iii)]  For each component $X$ of the forest  $(\tilde{W}_1 \cup \tilde{W}_2,J)$, arbitrarily take a vertex
$r_X$ with degree one as a root of $X$, and regard $X$ as a rooted directed tree. 
Let $F_3'$  denote the set of edges in $F$ corresponding to directed edges
from $\tilde{W}_1$ to $\tilde{W}_2$ in the directed trees, and
let $F_4'=J\sm F_3'$.
Let $E_3 :=F_3 \cup F_3'$ and $E_4:=F_4 \cup  F_4'$.
\end{description}

\smallskip
\noi
Note that every cycle in (ii) consists of an even number of edges
since
$F$ forms a bipartite graph.
Hence if a vertex $w$ appears in some cycle in (ii), the property of 
$F_3 \cap F(w), F_4 \cap F(w)\not=\emptyset$ is satisfied. 
For the other vertices $w$ with $|F(w)|\geq 2$, let $J^*$ be the set $J$ in (iii). Then (iii) constructs sets $F_3'$ and $F_4'$ such that $F'_3 \cap J^*(w), F'_4 \cap J^*(w)\not=\emptyset$. 
Therefore, the resulting $E_3=F_3 \cup F'_3$ and $E_4=F_4 \cup F_4'$ satisfy the desired property.

Since all the sets $E_1$, $E_2$, $E_3$, and $E_4$ can be computed in polynomial time, the proof is completed. 
\end{proof}

\section{Intractability Results}\label{hardness:sec}

%%%%%%%%%%%%%%%%%%%

In this section, we investigate the intractability of the \mainpro.
In Section~\ref{hardness1:subsec},
we show that  deciding whether
the \mainpro with $|\W|=2$ or $\size\in\{1,2\}$
is feasible or not
is strongly {\rm NP}-complete.
It follows that the feasibility  is
para-NP-complete parameterized by $|\W|$
or the number of types of \prods.
In Section~\ref{hardness2:subsec},
we consider the case of
$\capa \equiv +\infty$, and
we show that even the case of $|\W|=2$ is 
strongly  NP-hard 
(and hence para-NP-hard parameterized by $|\W|$),
and that
the problem is inapproximable within a ratio
of $3/2-\varepsilon$ for any $\varepsilon>0$.
We also show that 
even the case of uniform \pro size is strongly NP-hard, in contrast to that
the case of uniform \pro size and transit time is polynomially solvable
as shown in Section~\ref{size2:sec}.

\subsection{General cases}\label{hardness1:subsec}

%We here show that even testing if \mainpro is
%feasible is strongly NP-hard.

%Let $\W$ and $\P$ be sets of \houses and \prods, respectively.
We first show that even if $|\W|=2$ and $d^-=+\infty$, 
 the feasibility  of the \mainpro is strongly NP-complete 
by a reduction from \textsc{3-Partition},
which is known to be strongly NP-complete \cite[p.224]{garey79npc}.

\med

\begin{quote}
%\begin{prob1}\label{VC-problem1}
\noindent
%{\rm {\sc 3-Partition  (3Part)}}
{\rm Problem \textsc{3-Partition}}

\noindent
%{\rm INSTANCE:}
Instance:  
$(\{x_1,x_2,\ldots,x_{3m}\},B):$  A set of
 $3m$ positive integers $x_1,x_2,\ldots,x_{3m}$ 
and an integer $B$
such that $\sum_{i \in [1,3m]}x_i=mB$ and $B/4 < x_i < B/2$ for each $i
\in [1,3m]$.

\noindent
%{\rm QUESTION:} 
Question: 
Is there  a partition $\{X_1, X_2, \ldots, X_m\}$ of
$[1,3m]$ such that $\sum_{i \in X_j}x_i$ $=B$ for each $j
\in [1,m]$?
%\qed
%\end{prob1}
\end{quote}

\begin{theorem}\label{hard1:th}
It is strongly NP-complete to decide whether 
  the \mainpro
is feasible 
even if we have $|\W|=2$,  $d^- \equiv +\infty$
 $($resp., $d^+ \equiv +\infty)$, and
$\transit \equiv 1$,
and  $d^+$  $($resp., $d^-)$ 
and $\capa$
 are both uniform.
\end{theorem}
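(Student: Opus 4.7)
\textbf{Proof plan for Theorem~\ref{hard1:th}.}
The plan is to reduce from \textsc{3-Partition} for the variant with $d^-\equiv +\infty$ (the $d^+\equiv +\infty$ case is symmetric). Given an instance $(\{x_1,\ldots,x_{3m}\},B)$ of \textsc{3-Partition}, construct a reallocation instance with two \houses $W_1,W_2$ as follows. For each $i\in[1,3m]$ create a \pro $p_i$ with $\source(p_i)=W_1$, $\sink(p_i)=W_2$, $\size(p_i)=x_i$, and $\transit(p_i)=1$. Additionally, create $m$ ``filler'' \prods $q_1,\ldots,q_m$ with $\source(q_j)=W_2$, $\sink(q_j)=W_1$, $\size(q_j)=B$, and $\transit(q_j)=1$. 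Set $d^+(W_1)=d^+(W_2)=B$, $d^-\equiv +\infty$, and $\capa(W_1)=\capa(W_2)=mB$. Note that $\transit\equiv 1$, $d^+$ is uniform, $\capa$ is uniform, and the initial and terminal totals at each \house are exactly $mB$, so the necessary conditions in Section~\ref{preliminaries:sec} are met.

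The forward direction is immediate: from a partition $\{X_1,\dots,X_m\}$ with $\sum_{i\in X_j}x_i=B$, schedule $\depart(p_i)=j-1$ for $i\in X_j$ and $\depart(q_j)=j-1$. At each step exactly $B$ leaves each \house and exactly $B$ arrives at each \house, so the content of every \house stays at $mB$, and $d^+$ is tight. For the reverse direction, let $a_t$ (resp.\ $b_t$) be the total size sent from $W_1$ (resp.\ $W_2$) at time $t$ by a hypothetical feasible schedule. Expanding the \house capacity constraint at $W_1$ and $W_2$ at every $\theta$ (using $\transit\equiv 1$, initial loads $mB$, and $\capa=mB$) yields
\begin{equation*}
\sum_{t\le\theta-1}b_t \;\le\; \sum_{t\le\theta-1}a_t \quad\text{and}\quad \sum_{t\le\theta-1}a_t \;\le\; \sum_{t\le\theta-1}b_t,
\end{equation*}
so $a_t=b_t$ for every $t$. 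Since $\size(q_j)=B=d^+(W_2)$ and the $q_j$'s are the only \prods at $W_2$, we have $b_t\in\{0,B\}$, hence $a_t\in\{0,B\}$. Any $t$ with $a_t=B$ corresponds to a subset of $\{p_1,\dots,p_{3m}\}$ summing to $B$, and the condition $B/4<x_i<B/2$ forces each such subset to have exactly three elements. Because $\sum_t a_t=mB$, there are exactly $m$ such time steps, and the resulting triples form a valid 3-partition.

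For membership in NP, it suffices to observe that a feasible schedule can always be assumed to have completion time $\le m$ (since at most one $q_j$ can leave per time step, and any ``idle'' time step may be deleted), so the schedule admits a polynomial-size certificate. Combined with strong NP-completeness of \textsc{3-Partition}, this gives strong NP-completeness of the feasibility problem under the stated restrictions.

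The main obstacle is the reverse direction: one needs the \house capacity bound $\capa(W_i)=mB$ to be tight on \emph{both} sides so that the warehouse-capacity inequalities sandwich the two partial sums into equality $a_t=b_t$. Without this two-sided tightness, the schedule would have slack and could desynchronize the $p_i$-dispatches from the $q_j$-dispatches, breaking the extraction of a 3-partition. The construction above is designed precisely to enforce this synchronization while respecting the constraints $d^-\equiv +\infty$, $\transit\equiv 1$, and uniformity of $d^+$ and $\capa$ required by the theorem.
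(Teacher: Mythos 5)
Your reduction is exactly the paper's: the same two-warehouse \textsc{3-Partition} gadget with $3m$ products of sizes $x_i$ going one way, $m$ products of size $B$ going the other, $d^+\equiv B$, $d^-\equiv+\infty$, $\capa\equiv mB$, and $\transit\equiv 1$. The only difference is presentational — your partial-sum argument ($a_t=b_t$ for all $t$) rigorously fills in what the paper dismisses as "obvious" in the reverse direction — so the proposal is correct and takes essentially the same approach.
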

\begin{proof}
We here show only the case where $d^+$ is uniform and $d^-\equiv +\infty$.
The case  where $d^-$ is uniform and $d^+\equiv +\infty$ can be treated
similarly. 
 Take an  instance $\I_{\rm 3PART}=(\{x_1,x_2,\ldots,x_{3m}\},B)$ 
of \textsc{3-Partition} such that $x_i$ is polynomial in $m$
for $i \in [1,3m]$.
From the $\I_{\rm 3PART}$, we construct an instance
$\I_{\rm \RA}=(\W,\P,d^+,d^-,\capa,\size,\transit)$ of the \mainpro as follows.
Let $\W=\{w_1,w_2\}$, $d^+(w_1)=d^+(w_2)=B$, 
 $d^-(w_1)=d^-(w_2)=+\infty$,
and
 $\capa(w_1)=\capa(w_2)=mB$.
Let
 $\P_1$  be the set of $3m$ \prods $p_i$, $i\in [1,3m]$,
such that every \pro $p_i \in \P_1$ satisfies $\source(p_i)=w_1$,
 $\sink(p_i)=w_2$, and $\size(p_i)=x_i$.
Let  $\P_2$  be the set of $m$ \prods %$p$
such that every \pro $p \in \P_2$ satisfies $\source(p)=w_2$,
 $\sink(p)=w_1$, and $\size(p)=B$, 
and $\P=\P_1\cup \P_2$.
Let $\transit(p)=1$ for all $p \in \P$.
Note that $\I_{\rm \RA}$ can be constructed from
 $\I_{\rm 3PART}$
in polynomial time.
Obviously, it is only possible to exchange 3 \prods in $\P_1$ with
the total size $B$ for 
one \pro in $\P_2$ at each time because $B/4 < \size(p_i) < B/2$
for every $i\in [1,3m]$.
It follows that there exists a feasible schedule for the \mainpro
if and only if  $\I_{\rm 3PART}$ is a yes-instance of  \textsc{3-Partition}.
Thus, the theorem is proved.
\end{proof}

%
%Moreover, we can show that even if $\size(p) \in \{1,2\}$ for all $p \in \P$,
%then the feasibility  of the \mainpro is strongly NP-complete
%in a similar way to \cite[Theorem 3.2]{miwa2000np}.

Moreover, we show that even if $\size(p) \in \{1,2\}$ for all $p \in \P$,
then the feasibility  of the \mainpro is strongly NP-complete.
Let $\I_{\rm \RA}=(\W,\P,d^+,d^-,\capa,\size, \transit)$
be  the instance
 of the \mainpro defined in the proof of Theorem~\ref{hard1:th}.
We will convert $\I_{\rm \RA}$ into an equivalent instance 
of the \mainpro with 
$\size\in \{1,2\}$ in a similar way to \cite[Theorem 3.2]{miwa2000np}.

Let $G=(\W,E_\P)$ be the \dmgraph of $\I_{\rm \RA}$.
Consider a directed edge $e=(u, v) \in E_\P$
corresponding to a \pro $p \in \P$
 where $\{u,v\}=\{w_1,w_2\}$;
note that if $u=w_1$ and $v=w_2$ (resp., $u=w_2$ and $v=w_1$),
then
   $p \in \P_1$ (resp., $\P_2$)  satisfies $\size(p)=x_i$ for some $i\in [1,3m]$
(resp.,  $\size(p)=B$).
We first create a set $U_p \cup V_p$ of $2x$ new vertices 
with $|U_p|=|V_p|=x$, where
let $x=\size(p)$  (note that $x~(=\size(p))$ is an integer).
Let $T_{u,p}$ (resp., $T_{v,p}$) 
be an in-tree (resp., out-tree) obtained by introducing some new 
vertices and directed edges
  so that
$U_p$ (resp., $V_p$) 
is the set of leaves and the in-degree (resp., out-degree) 
of every vertex not in $U_p$
(resp., $V_p$) is exactly two,
where a directed tree is called an \emph{in-tree} (resp., \emph{out-tree})
if the out-degree (resp., in-degree)
 of every vertex except its root is exactly one.
Note that such a tree $T_{u,p}$ (resp., $T_{v,p}$) can be constructed by
pairing two vertices with out-degree (resp., in-degree)
 zero from leaves to the root.
We denote the root of $T_{u,p}$ (resp., $T_{v,p}$) by $r_{u,p}$
(resp., $r_{v,p}$).
We then construct the graph $G_p$
% with $V(D_a)=\{u,v\}\cup V(T_{u,a}) \cup V(T_{v,a})$
in the following manner:
\begin{alphaenumerate}
\item We add a directed edge from $u$ to every vertex in $U_p$,
 a directed edge from every vertex in $V_p$ to $v$, and a directed edge 
$(r_{u,p},r_{v,p})$.

\item  We divide every vertex  $w \in V(T_{u,p}) \cup V(T_{v,p})
\sm(U_p \cup V_p \cup \{r_{u,p},r_{v,p}\})$
into two vertices $w'$ and $w''$,
 replace every directed edge entering $w$ with one entering $w'$,
replace every directed edge leaving $w$ with one leaving $w''$,
and add a directed edge $(w',w'')$.
\end{alphaenumerate}
For simplicity, we refer to a \pro corresponding to
a directed edge $e' \in E(G_p)$, its size, and
its transit time
as a \pro $e'$, the size of $e'$, 
and the transit time of $e'$, respectively. 
Let $W_{u,p}$ (resp.,  $W_{v,p}$)  denote the set of vertices generated
in (b) by dividing every vertex 
$w \in V(T_{u,p}) 
\sm(U_p \cup \{r_{u,p}\})$ 
(resp., $V(T_{v,p}) 
\sm(V_p \cup \{r_{v,p}\})$),
and $E'$ denote
 the set of the directed edges added in (b).
Let $\size(e')=2$ for every \pro $e'\in E' \cup \{(r_{u,p},r_{v,p})\}$
and   $\size(e')=1$ for all the other \prods $e' \in E(G_p) \sm (E' \cup \{(r_{u,p},r_{v,p})\})$.
%Let $\size(p')=2$ for every \pro $p'$ corresponding to
%an arc in $A' \cup \{(r_{u,a},r_{v,a})\}$
%and $\size(p')=1$ for every \pro  $p'$ corresponding to any other arc in
%$A(D_a)$.
Let $\capa(w)=2$ for  all $w \in W_{u,p} \cup W_{v,p}
 \cup \{r_{u,p},r_{v,p}\}$
and $\capa(w)=1$
 for all  $w \in U_p \cup V_p$.
Let $d^+(w)=B$ and $d^-(w)=\infty$ for all $w \in V(G_p)\sm \{u,v\}$.
Let $\transit(e)=1$ for all $e \in E(G_p)$.
Figure~\ref{size12:fig} shows an example of $G_p$ for a directed edge $e=(u,v)$
corresponding to $p \in \P$
with $\size(p)=5$.

\begin{figure}[t]
 \centering
 \includegraphics[width=0.9\textwidth]{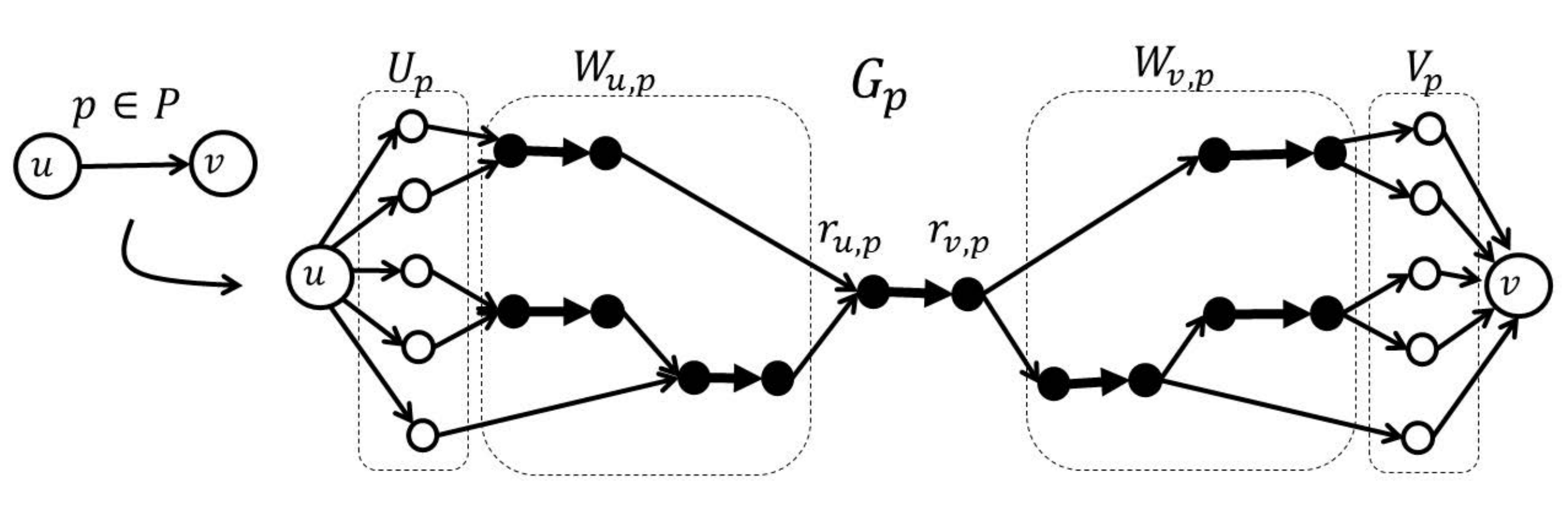}
\caption{Illustration of a graph $G_p$ for a
directed edge 
 $e=(u,v)$
corresponding to $p \in \P$
with $\size(p)=5$.
We have $\capa(w)=2$ for every vertex $w \in W_{u,p} \cup W_{v,p} \cup \{r_{u,p},r_{v,p}\}$,
 drawn as a black circle, while $\capa(w)=1$ for  all $w \in U_p \cup V_p$,
drawn as a white circle.
For every directed edge in $E' \cup \{(r_{u,p},r_{v,p})\}$, drawn as a bold arrow, 
its size  is two,
while the size of all the other directed edges is one.  
}
\label{size12:fig}
\end{figure}

The following lemma implies that
the graph $G_p$ plays the same role as $e=(u,v) \in E_\P$ corresponding to
$p$, where $E_u$ denotes the set of edges incident to vertices in $U_p \cup W_{u,p}   \cup \{u,r_{u,p}\}$.

\begin{lemma}\label{size12:lem}
If $v$ has  vacancy  at least $x$, then
all \prods in $E(G_p)$ can depart simultaneously. 
If $v$ has  vacancy  less than $x$, then
no \pro in 
%$E(G_p[U_p \cup W_{u,p}\cup \{u,r_{u,p},r_{v,p}\}])$
$E_u$
can depart.
\end{lemma}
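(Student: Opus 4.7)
The plan is to verify the two halves separately.

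For Part 1, I would exhibit the schedule $\depart(e)=\theta$ for every $e \in E(G_p)$ at a common departure time $\theta$, and check the three capacity constraints at each vertex of $G_p$. Each split pair $(w', w'')$ is self-balancing by construction: at $w'$, the size-$2$ edge $(w', w'')$ departs at $\theta$ and is replaced by two incoming size-$1$ tree edges arriving at $\theta+1$, so the content stays within capacity $2$; the analysis at $w''$ and at the two roots $r_{u,p}$, $r_{v,p}$ is symmetric. Each leaf in $U_p \cup V_p$ swaps a single size-$1$ outgoing edge for a single size-$1$ incoming edge, fitting in capacity $1$. The carry-out and carry-in constraints inside $G_p$ are trivial since $d^- \equiv \infty$ and $d^+(w) \geq 2$ is more than any single departure. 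Finally, the $x$ size-$1$ products arriving at $v$ at time $\theta+1$ fit in the assumed vacancy of at least $x$.

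For Part 2, I would argue by contradiction: assume some edge $e_0 \in E_u$ departs at time $\theta$. The core is two monotonicity chains. First, in $T_{u,p}$, for each split vertex $w \in W_{u,p}$ let $\alpha_w$ denote the departure time of $(w', w'')$, and set $\alpha_r := \depart((r_{u,p}, r_{v,p}))$. Examining $w'$'s content at time $\alpha_w$ forces both incoming tree edges from its children to depart at time $\geq \alpha_w$, while examining $w''$'s content at time $\alpha_w + 1$ forces the outgoing edge from $w''$ to its parent to depart at time $\leq \alpha_w$. Chaining these constraints up to $r_{u,p}$ and using the analogous analysis at the root yields $\alpha_r \leq \alpha_w$ for every split vertex $w$, and more generally $\depart(e) \geq \alpha_r$ for every $e \in E_u$ (including the $(u,c)$ edges, where the capacity-$1$ leaf $c \in U_p$ forces $\delta_c \geq \gamma_c$). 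Applied to $e = e_0$, this gives $\alpha_r \leq \theta$.

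Next, in $T_{v,p}$, the arrival of $(r_{u,p}, r_{v,p})$ at $r_{v,p}$ at time $\alpha_r + 1$ fills capacity $2$, which forces the two outgoing tree edges of $r_{v,p}$ to depart at $\leq \alpha_r$. Propagating this constraint down through each split pair $(q', q'') \in W_{v,p}$ yields $\alpha_q \leq \alpha_r$ for every descendant split vertex, and for each leaf $c \in V_p$ the capacity-$1$ constraint forces the edge $(c, v)$ to depart at $\zeta_c \leq \alpha_r$. Hence all $x$ products in $V_p \to v$ arrive at $v$ by time $\alpha_r + 1 \leq \theta + 1$, contributing a total of $x$ to $v$'s content at that moment and contradicting the assumed vacancy strictly less than $x$.

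The main technical obstacle lies in the two cascading analyses: one must carefully track which edges contribute to the content of $w'$ versus $w''$ at each moment, and note that the direction of monotonicity of the $\alpha$-values flips between the in-tree $T_{u,p}$ (where $\alpha$ is smallest at the root) and the out-tree $T_{v,p}$ (where $\alpha$ is again controlled from the root but decreases toward the leaves). Together these two chains force full synchronization around the bottleneck edge $(r_{u,p}, r_{v,p})$, exactly reflecting the atomicity of moving an abstract size-$x$ product from $u$ to $v$.
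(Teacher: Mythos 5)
Your proof is correct and uses the same mechanism as the paper's: the tight capacities at the split vertices and at the leaves of $U_p\cup V_p$ force the departure times of all gadget edges to synchronize around the bottleneck edge $(r_{u,p},r_{v,p})$, which is exactly what the paper's observation (\ref{size12:eq}) encodes. The only difference is directional — the paper propagates ``cannot depart'' backward from a blocked $V_p$-leaf through $T_{v,p}$ and then $T_{u,p}$ in a single-time-step argument, while you run the contrapositive forward with explicit departure-time inequalities over a full schedule; both chains amount to the same capacity bookkeeping.
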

\begin{proof}
The former case clearly holds.
Consider the latter case. 
Since $v$ has  vacancy less than $x$,
 there exists a vertex $w' \in V_p$ such that $(w',v)$ cannot depart.
Here notice that  from construction of $G_p$,
\begin{equation}\label{size12:eq}
 \begin{array}{lll}
&& \mbox{every \pro  $e'\in E(G_p) \sm \{(w,v)\mid w \in V_p\}$ can arrive at $\sink(e')$ only if 
  }\\
&& \mbox{all \prods initially located in $\sink(e')$ have departed.  }
 \end{array}
\end{equation}
Indeed,  (i) for every  $e' \in E'$, we have $\capa(\sink(e'))=2$
and (ii) for every  $e' \in E(G_p) \sm  (E' \cup \{(w,v)\mid w \in V_p\})$,
either $\capa(\sink(e'))=1$ holds
or  some \pro in $E'$ is initially located at $\sink(e')$
   and  $\capa(\sink(e'))=2$.
By (\ref{size12:eq}), every \pro in $E(G_p)$ on the path from $r_{u,p}$
to $w'$ cannot also depart.
Since $(r_{u,p},r_{v,p})$ cannot depart, it follows 
 again by (\ref{size12:eq}) that 
 no \pro in 
 %$E(G_p[U_p \cup W_{u,p}\cup \{u,r_{u,p}\}])$ 
 $E_u \sm \{(r_{u,p},r_{v,p})\}$
 can depart.
\end{proof}

Let $G'$ be the graph obtained from $G$ by replacing each edge in $E_\P$
corresponding to $p \in \P$
with $G_p$, and  $\I'_{\rm \RA}$ be the corresponding instance of the
\mainpro.
By Lemma~\ref{size12:lem}, we can observe that at each time,
 it is only possible to carry out \prods 
in $E(G_{p_1})\cup E(G_{p_2}) \cup E(G_{p_3})$
$ \cup E(G_{p_4})$ 
with 
   $p_i \in \P_1$ 
 for $i\in \{1,2,3\}$,
 $p_4\in \P_2$,
and $\sum_{i=1}^3\size(p_i)=\size(p_4)~(=B)$.
Thus,  $\I'_{\rm \RA}$ is equivalent to  $\I_{\rm \RA}$.
Note that the size of 
$\I'_{\rm \RA}$ is polynomial in $m$, since every $x_i$,
$i\in [1,3m]$, is
polynomial in $m$.
Hence, 
$\I'_{\rm \RA}$ can be constructed from $\I_{\rm \RA}$ in
polynomial time.
It follows that 
 the feasibility  of  $\I'_{\rm \RA}$ is also strongly NP-complete.
%,even if $\size(p) \in \{1,2\}$ for all $p \in \P$.

\begin{theorem}\label{size12:th}
It is strongly {\rm NP}-complete to decide whether the \mainpro is feasible
  even if we have   $d^- \equiv+\infty$
 $($resp., $d^+ \equiv +\infty)$,  
$\size \in \{1,2\}$, and 
$\transit\equiv 1$,
and  $d^+$  $($resp., $d^-)$
is
 uniform.
\end{theorem}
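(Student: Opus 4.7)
The plan is to treat the theorem as the formal wrap-up of the construction already built in the paragraphs preceding it, turning the chain of gadget arguments into a polynomial reduction from \textsc{3-Partition}. I would start from the instance $\I_{\rm \RA}=(\W,\P,d^+,d^-,\capa,\size,\transit)$ produced by the proof of Theorem~\ref{hard1:th} (so $|\W|=2$, $\transit\equiv 1$, $d^-\equiv +\infty$, and both $d^+$ and $\capa$ uniform), and build $\I'_{\rm \RA}$ by replacing every directed edge $e=(u,v)\in E_\P$ corresponding to a \pro $p$ with the gadget $G_p$ defined above and depicted in Figure~\ref{size12:fig}. I would then check the syntactic hypotheses: all sizes are in $\{1,2\}$ by construction of $G_p$; all transit times equal $1$; $d^-\equiv +\infty$ is preserved because we only set $d^-(w)=+\infty$ at the new vertices; and $d^+\equiv B$ uniformly because we set $d^+(w)=B$ at every new internal vertex and $d^+(w_1)=d^+(w_2)=B$ already held.

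The correctness of the reduction rests entirely on Lemma~\ref{size12:lem}: inside a single gadget $G_p$ with $x=\size(p)$, the \house capacities $1$ on $U_p\cup V_p$ and $2$ on $W_{u,p}\cup W_{v,p}\cup\{r_{u,p},r_{v,p}\}$, together with the relation between the bold edges in $E'\cup\{(r_{u,p},r_{v,p})\}$ and the unit edges, force the entire $u$-side $E_u$ to fire in a single simultaneous ``burst'' whose admission at $v$ requires a vacancy of at least $x$. Consequently, at every time step the set of \prods simultaneously leaving $u$ and $v$ in $\I'_{\rm \RA}$ corresponds bijectively to a set of original \prods of $\I_{\rm \RA}$ whose total sizes balance across the $(w_1,w_2)$ cut exactly as an exchange of a single $\P_2$-\pro against three $\P_1$-\prods summing to $B$. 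This matches the admissible moves in $\I_{\rm \RA}$ step by step, so feasibility of $\I'_{\rm \RA}$ is equivalent to feasibility of $\I_{\rm \RA}$, which by Theorem~\ref{hard1:th} is equivalent to $\I_{\rm 3PART}$ being a yes-instance of \textsc{3-Partition}.

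For complexity, since \textsc{3-Partition} is strongly NP-complete, I may assume each $x_i$ (and $B$) is bounded by a polynomial in $m$; hence each gadget $G_p$ contains $\mO(\size(p))$ vertices and edges, and $\I'_{\rm \RA}$ has overall size polynomial in $m$ and can be built in polynomial time. For membership in NP, a feasible schedule, if one exists, can be taken to complete within a polynomial number of time steps (e.g.\ using the same bounded-horizon argument as in Lemma~\ref{feasible1:lem}), and feasibility of a given schedule is checkable in polynomial time against the three capacity constraints. Combining these with the equivalence above yields strong NP-completeness.

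The step I would scrutinize most carefully is the verification of Lemma~\ref{size12:lem}'s second clause in the combined instance, namely that replacing each $e\in E_\P$ by $G_p$ does not create parasitic firings: when $v$ has vacancy less than $x$, property~\eqref{size12:eq} together with the \house capacity $2$ at every divided vertex must propagate the blockage all the way up to the root and then to $E_u$. This is the genuine content of the gadget, and once it is checked, the symmetric case $d^+\equiv +\infty$ is obtained by reversing all arcs in $G_p$ (swapping the roles of in-trees and out-trees) and repeating the argument verbatim.
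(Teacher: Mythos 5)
Your proposal follows the paper's own argument essentially verbatim: a reduction from \textsc{3-Partition} via the instance of Theorem~\ref{hard1:th}, replacement of each demand edge by the gadget $G_p$ with sizes in $\{1,2\}$, correctness via the two clauses of Lemma~\ref{size12:lem}, and polynomiality of the construction from the strong NP-completeness of \textsc{3-Partition}. The approach and all key steps coincide with the paper's proof, so there is nothing further to add.
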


\noi
We 
finally 
remark  as a corollary of Theorems~\ref{hard1:th} and
\ref{size12:th} 
that
%deciding whether  the \mainpro
%is feasible is para-NP-complete
% parameterized by each of $|\W|$ and the number of types of \prods. 
%\
we have the following results about the
para-NP-completeness.

\begin{corollary}\label{paraNPhard1:cor}
Deciding whether  the \mainpro
is feasible is para-NP-complete
 parameterized by each of $|\W|$ and the number of types of \prods. 
\end{corollary}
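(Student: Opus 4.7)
The plan is to observe that para-NP-completeness with respect to a parameter $k$ is, by definition, established as soon as one exhibits NP-completeness on instances in which $k$ takes a fixed constant value. Consequently, both claims of the corollary should follow essentially for free from the two hardness results already proved in this subsection, and the only real work is to match the parameter definitions with the restricted instance families produced there.

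First I would handle the parameterization by $|\W|$. Theorem~\ref{hard1:th} reduces \textsc{3-Partition} to the feasibility question on instances that use exactly two warehouses, with $d^-\equiv +\infty$, $\transit\equiv 1$, and uniform $d^+$ and $\capa$. In particular, for the constant value $|\W|=2$, the feasibility problem is already NP-complete (membership in NP is immediate, since a schedule of polynomial length can be guessed and verified in polynomial time by checking the three capacity constraints at each integer time). This is exactly what is required for para-NP-completeness parameterized by $|\W|$.

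Next I would handle the parameterization by the number of types of \prods, where I take two \prods to be of the same \emph{type} when they share the same size and the same transit time. Theorem~\ref{size12:th} gives NP-completeness already under $\size(p)\in\{1,2\}$ and $\transit(p)=1$ for every $p\in\P$; hence in those instances there are at most two product types, a constant, so the problem is NP-complete at a fixed value of the parameter, which is precisely para-NP-completeness in this parameterization. Membership in NP is again straightforward from the discrete-time model.

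The only point that needs a little care, and which I would make explicit, is the choice of the notion of ``type''. If one insists that the source and the sink warehouses of a product are also part of its type, then the number of types could grow with the instance, and the claim would no longer be a direct consequence of Theorem~\ref{size12:th}. I would therefore state at the outset that the parameter is the number of distinct $(\size,\transit)$-pairs occurring in $\P$; under this natural reading the two theorems line up perfectly with the two parameterizations and the corollary follows with no further argument.
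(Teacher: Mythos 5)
Your proposal is correct and follows exactly the paper's intended argument: the paper derives the corollary directly from Theorems~\ref{hard1:th} and \ref{size12:th} by observing that NP-completeness at a fixed parameter value ($|\W|=2$, respectively at most two product types via $\size\in\{1,2\}$ and $\transit\equiv 1$) is precisely para-NP-completeness. Your explicit clarification of what counts as a product ``type'' is a sensible addition but does not change the substance.
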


%Also, we can show the following intractability results for the problem
%with no \house capacity constraints (i.e., $\capa\equiv +\infty$).

\subsection{Case of $\capa \equiv +\infty$}\label{hardness2:subsec}

We can observe that
even the case of $\capa \equiv \infty$  is
strongly NP-hard.
For two instances $\I_{\rm 3PART}$ of  \textsc{3-Partition}
 and $\I_{\rm \RA}$ of the \mainpro
in the proof of Theorem~\ref{hard1:th},
it is not difficult to see that 
there exists a schedule for  $\I_{\rm \RA}$ whose
  completion time is at most $m$ if and only if
$\I_{\rm 3PART}$ is a yes-instance of  \textsc{3-Partition}.
This follows since for completing the reallocation of all \prods
by time $m$,
it is only possible to exchange 3 \prods in $\P_1$ with
the total size $B$ for 
one \pro in $\P_2$ at each time.
Note that these arguments  need the \outdegree/\indegree  capacity
 constraints
but
not the \house capacity constraints.
Also, note that we can easily obtain a feasible solution for
$\I_{\rm \RA}$ since 
every \house has a sufficiently large capacity.
%we have no \house capacity constraints.
Hence, we have the following theorem.

\begin{theorem}\label{hard2:th}
The \mainpro
is strongly {\rm NP}-hard even if we have $|\W|=2$,  
$d^- \equiv +\infty$
 $($resp., $d^+ \equiv +\infty)$, 
$\capa \equiv +\infty$ and 
$\transit \equiv 1$
and  $d^+$  $($resp., $d^-)$
and $\capa$
 are both uniform.
Hence, it is para-{\rm NP}-hard parameterized by $|\W|$ even if $\capa \equiv +\infty$.
\end{theorem}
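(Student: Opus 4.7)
Plan: I would prove Theorem~\ref{hard2:th} by a direct reduction from \textsc{3-Partition}, reusing essentially the same construction as in the proof of Theorem~\ref{hard1:th}, but now interpreting the reduction as one that produces a strong NP-hardness threshold on the completion time rather than on feasibility. The key insight the authors hint at is that the construction of Theorem~\ref{hard1:th} uses only the carry-out (or carry-in) and carry-out-capacity constraint to force the partition; the warehouse-capacity constraint is not used to enforce infeasibility but only to bound the state. Hence setting $c\equiv +\infty$ still leaves the problem hard when one measures the optimal completion time.

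First I would take an arbitrary instance $\mathcal{I}_{\rm 3PART}=(\{x_1,\ldots,x_{3m}\},B)$ of \textsc{3-Partition}, where each $x_i$ is polynomially bounded in $m$, and I would build the instance $\mathcal{I}_{\rm RA}$ exactly as in the proof of Theorem~\ref{hard1:th}: two warehouses $w_1,w_2$, uniform carry-out capacity $d^+(w_1)=d^+(w_2)=B$, carry-in capacity $d^-\equiv +\infty$, warehouse capacity $c\equiv +\infty$ (the only change), $\tau\equiv 1$, products $P_1=\{p_1,\ldots,p_{3m}\}$ with $s(p_i)=w_1$, $t(p_i)=w_2$, $size(p_i)=x_i$, and products $P_2$ with $m$ products of size $B$ going from $w_2$ to $w_1$. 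The target completion time in the decision version is $T^*=2$, or equivalently one asks whether the minimum completion time is $\le m+\tau=m+1$; I will target $T\le m+1$.

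The main claim to prove is: the minimum completion time of $\mathcal{I}_{\rm RA}$ is at most $m+1$ if and only if $\mathcal{I}_{\rm 3PART}$ is a yes-instance. For the ``if'' direction, given a partition $\{X_1,\ldots,X_m\}$, at time $j-1$ ($j\in[1,m]$) I send the three products of $P_1$ indexed by $X_j$ out of $w_1$ (total departing size $B$) together with the $j$-th product of $P_2$ out of $w_2$ (departing size $B$); the carry-out constraints at both warehouses are met with equality, the carry-in constraints are vacuous, and completion time is $m-1+1=m\le m+1$. For the ``only if'' direction, I observe that over the $m$ time steps $0,\ldots,m-1$ the total size that may leave $w_1$ is at most $mB$, which equals $\sum_i x_i$, so every time step must ship exactly $B$ total size from $w_1$; since $B/4<x_i<B/2$ forces exactly three $P_1$-products per step, the partition of $[1,3m]$ induced by departure times is a valid \textsc{3-Partition}. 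The symmetric case ($d^+\equiv +\infty$, $d^-$ uniform) follows by reversing the roles of arrivals and departures (reversing all edges of the demand graph and shifting by $\tau=1$).

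The only conceptual obstacle is verifying that the argument is genuinely independent of the warehouse-capacity constraint: in the original Theorem~\ref{hard1:th} proof, warehouse capacity is set to $mB$, which is itself already non-binding given that products depart and arrive in matched batches; relaxing it to $+\infty$ changes nothing about which batches can be shipped simultaneously. Since $|P|=4m$ and all sizes are polynomially bounded in $m$, the reduction is polynomial and produces a strongly NP-hard optimization problem. The final sentence about para-NP-hardness parameterized by $|W|$ follows immediately, since $|W|=2$ is fixed throughout the reduction.
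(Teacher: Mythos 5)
Your overall approach---reusing the instance from the proof of Theorem~\ref{hard1:th} with $\capa\equiv+\infty$ and converting the feasibility question into a threshold question on the completion time---is exactly the route the paper takes. However, there is an off-by-one error in your decision threshold that, as literally stated, breaks the ``only if'' direction. You claim that the minimum completion time is at most $m+\transit=m+1$ if and only if the \textsc{3-Partition} instance is a yes-instance, but your counting argument (``over the $m$ time steps $0,\ldots,m-1$ the total size that may leave $w_1$ is at most $mB$'') presupposes that all departures occur by time $m-1$, which is what completion time at most $m$ guarantees (last departure at time $m-1$, arrival at time $m$), not what completion time at most $m+1$ guarantees. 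Under the threshold $m+1$, products may also depart at time $m$, so $w_1$ has $m+1$ departure slots of total capacity $(m+1)B>mB$, and nothing forces each slot to carry exactly $B$.

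This is not merely a presentational slip. Since $\capa\equiv d^-\equiv+\infty$, the two warehouses decouple and the minimum completion time of the constructed instance equals $\max\{k,m\}$, where $k$ is the minimum number of capacity-$B$ bins needed to pack $\{x_1,\ldots,x_{3m}\}$; a no-instance of \textsc{3-Partition} can have $k=m+1$. For example, with $m=2$, $B=100$, and sizes $30,30,30,30,31,49$ there is no $3$-partition (no triple containing $49$ sums to $100$), yet the items pack into the three bins $\{30,30,30\}$, $\{30,31\}$, $\{49\}$, giving completion time $3=m+1$. Hence the reduction with threshold $m+1$ accepts a no-instance. The fix is immediate: use the threshold $m$, which is exactly the bound your forward construction achieves ($m-1+\transit=m$) and the one the paper uses. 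With that correction the remainder of your argument, including the symmetric case and the para-NP-hardness remark, is sound and matches the paper's proof.
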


We next show the inapproximability of the problem 
by a reduction from \textsc{Binpacking},
which is known to be 
inapproximable within a ratio of $3/2-\varepsilon$ for any $\varepsilon>0$  (e.g., see
\cite{Simchi94}).

 Take an  instance $\I_{\rm BP}=(I=\{ i \in[1,|I|]\},\size_{BP}, d)$ 
 of \textsc{Binpacking}.
In  an opposite way to Section~\ref{binpackalgo:subsec},
 we construct from the $\I_{\rm BP}$ an instance
$\I_{\rm \RA}=(\W,\P,d^+,d^-,\infty,\size,\transit)$ of the \mainpro as follows.
Let $\W=\{u\} \cup \{w_i \mid i \in [1,|I|]\}$, $d^+(w)=d$ and 
 $d^-(w)=\infty$ for all $w \in \W$.
For every item $i \in I$, we create a \pro $p_i$
with $\source(p_i)=u$, $\sink(p_i)=w_i$, and
$\size(p_i)=\size_{BP}(i)$; denote the resulting set of \prods by $\P$.
Let $\transit(p)=1$ for all $p \in \P$.
%Then,     note that
%a subset  $I'$ of $I$ can be packed into one bin if and only if
%the corresponding set $\{p_i \mid i \in I'\}$ of \prods 
% can depart from $u$ simultaneously,
%since the \outdegree capacity constraints for $u$ are satisfied. 
%Hence, it is not difficult to see that
Similarly to the observations in Section~\ref{binpackalgo:subsec},
 $I$ can be packed into $k$ bins
if and only if the reallocation of all \prods in $\P$ can be
 completed  at  time  $k$.
 %, by mapping a set of products 
%in $k'$-th bin into a set of \prods departing from $u$ at time $k'-1$.
Thus, we have the following theorem, where we note
 that $\I_{\RA}$ can be constructed from
 $\I_{\rm BP}$
in polynomial time and that the case of $d^+ \equiv \infty$ can be treated similarly.

\begin{theorem}\label{binpack1:th}
The \mainpro
is inapproximable within a ratio
 of $3/2-\varepsilon$ for any $\varepsilon>0$ in polynomial time unless ${\rm P}={\rm NP}$, even in the case
 where
 $d^+$ is
 uniform,  $d^- \equiv +\infty$
  $($or $d^+ \equiv +\infty$,
   $d^-$ is uniform$)$,
 $c \equiv +\infty$,  and
$\transit \equiv 1$. 
\end{theorem}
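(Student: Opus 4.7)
The plan is to finish the reduction already sketched just before the theorem statement by checking the equivalence of the two problems carefully and then transferring the standard $(3/2-\varepsilon)$-inapproximability of \textsc{Binpacking} through it. First, I would verify that the minimum completion time of $\I_{\RA}$ equals the minimum number of bins for $\I_{\rm BP}$. Since $d^-\equiv +\infty$ and $\capa\equiv +\infty$, the only nontrivial capacity constraint is the \outdegree constraint at the unique source \house $u$ (each sink $w_i$ receives exactly one \pro). Combined with $\transit\equiv 1$, a feasible schedule with completion time $T$ is exactly a partition $\P=\P_0\cup \P_1\cup\cdots\cup \P_{T-1}$ indexed by departure time $\tt\in[0,T-1]$ satisfying $\sum_{p\in \P_\tt}\size(p)\le d^+(u)=d$ for each $\tt$. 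Under the bijection $p_i\leftrightarrow i$ this is precisely a packing of $I$ into $T$ bins of capacity $d$, so $\opt_{\rm RA}(\I_{\RA})=\opt_{\rm BP}(\I_{\rm BP})$, and moreover any schedule of completion time $T$ is effectively convertible into a packing of size $T$ (and vice versa) in linear time.

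Second, I would invoke the classical \textsc{Partition}-based hardness of \textsc{Binpacking}: deciding whether a given instance admits a 2-bin packing is NP-hard, so no polynomial-time algorithm can distinguish the case $\opt_{\rm BP}=2$ from $\opt_{\rm BP}\geq 3$ unless $\mathrm{P}=\mathrm{NP}$. This is exactly what produces the $(3/2-\varepsilon)$ threshold for \textsc{Binpacking}.

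Composing these two ingredients, suppose for contradiction that a polynomial-time $(3/2-\varepsilon)$-approximation algorithm $\mathcal{A}$ exists for the reallocation problem for some $\varepsilon>0$. Given an arbitrary instance $\I_{\rm BP}$ of \textsc{Binpacking} obtained from a \textsc{Partition} instance, construct $\I_{\RA}$ in polynomial time as above and run $\mathcal{A}$ on it. If $\opt_{\rm BP}(\I_{\rm BP})=2$, then $\opt_{\rm RA}(\I_{\RA})=2$ and $\mathcal{A}$ returns a schedule of completion time at most $(3/2-\varepsilon)\cdot 2 < 3$, hence of completion time exactly $2$ by integrality, which yields a 2-bin packing. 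If $\opt_{\rm BP}(\I_{\rm BP})\ge 3$, then $\opt_{\rm RA}(\I_{\RA})\ge 3$ and no 2-bin packing exists. Thus $\mathcal{A}$ distinguishes the two cases in polynomial time, contradicting $\mathrm{P}\ne \mathrm{NP}$. The symmetric case where $d^+\equiv+\infty$ and $d^-$ is uniform is handled by an analogous reduction in which all \prods share a common sink \house and distinct source \houses, with the \indegree capacity playing the role of bin capacity.

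I do not foresee a serious obstacle; the only points that require care are (i) the use of integrality of the completion time to convert the approximation gap $<3$ into the exact value $2$, and (ii) confirming that the reduced instance $\I_{\RA}$ respects the assumptions stated in Section~\ref{preliminaries:sec} (namely $\size(p_i)\le d^+(u)$ for each $i$, which holds automatically because the corresponding item fits in some bin in any packing of $\I_{\rm BP}$).
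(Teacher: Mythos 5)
Your proposal is correct and follows essentially the same route as the paper: the identical reduction from \textsc{Binpacking} (one source \house $u$, distinct sinks $w_i$, $d^+\equiv d$, $d^-\equiv\infty$, $\transit\equiv 1$), with completion time $k$ corresponding exactly to a $k$-bin packing. The only difference is that you unpack the standard \textsc{Partition}-based gap argument ($\opt=2$ versus $\opt\ge 3$) that the paper cites as a known black-box inapproximability of \textsc{Binpacking}; the reduction and its analysis are otherwise the same.
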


%%%%%%%%%%%%%%ishii210922

We finally show that 
 the case of uniform \pro size is strongly NP-hard, in contrast to that
the case of uniform \pro size and transit time is polynomially solvable
as shown in Section~\ref{size2:sec}. 
Namely, we have the following theorem.

\begin{theorem}\label{flowshop:th}
The \mainpro
is strongly {\rm NP}-hard even if
 $|\W|=2$,
all of $d^+$, $d^-$, and 
%$\size$ 
\pro size
are uniform, 
and $\capa=+\infty$.
\end{theorem}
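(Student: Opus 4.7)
The plan is to reduce from a strongly NP-hard two-machine flow-shop problem with unit processing times and job-dependent time lags, namely $F2\mid p_{ij}=1,\,\ell_j\mid C_{\max}$. In that problem, each of $n$ jobs requires a unit-length operation on machine $M_1$, a mandatory waiting period $\ell_j\in\mathbb{Z}_+$, and then a unit-length operation on $M_2$, with each machine processing at most one job at a time; strong NP-hardness follows from a standard block-and-separator reduction from \textsc{3-Partition}.

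Given such an instance with lags $\ell_1,\ldots,\ell_n$, I would construct a reallocation instance with $\W=\{w_1,w_2\}$, $\size\equiv 1$, $d^+\equiv d^-\equiv 1$, and $\capa\equiv +\infty$. For each job $j$, introduce a product $p_j$ with $\source(p_j)=w_1$, $\sink(p_j)=w_2$, and transit time $\transit(p_j)=\ell_j+1$; no product travels from $w_2$ to $w_1$. All data is clearly polynomial in the flow-shop input.

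The correspondence is direct. The carry-out constraint at $w_1$ (with $d^+=\size=1$) forces the dispatch times $\depart(p_j)$ to be distinct integers, which is exactly the one-at-a-time processing of jobs on $M_1$. The carry-in constraint at $w_2$ forces the arrival times $\depart(p_j)+\transit(p_j)$ to be distinct integers, which is exactly the one-at-a-time processing on $M_2$. The gap $\transit(p_j)-1=\ell_j$ between dispatch and arrival matches the mandatory lag of the flow-shop job, and the remaining constraints of the reallocation problem are vacuous here ($\capa\equiv+\infty$ and no reverse-direction products). Therefore reallocation schedules with completion time $T$ correspond bijectively to flow-shop schedules of makespan $T+1$, and the strong NP-hardness of the flow-shop problem transfers to our setting.

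The main obstacle is aligning the time semantics of the two models (integer dispatch times in the reallocation model versus continuous job start times in the flow-shop model, and the $\pm 1$ offset between the two makespan definitions) and citing, or reproving, the strong NP-hardness of $F2\mid p_{ij}=1,\,\ell_j\mid C_{\max}$ in exactly the form required. If one prefers a self-contained argument that avoids the flow-shop citation, the reduction can be replayed directly from \textsc{3-Partition} inside the reallocation model: encode each number $x_i$ as an ``item'' product with transit time $K+x_i$ for a large constant $K$, and introduce $m-1$ ``separator'' products with much larger transit times that carve both the dispatch timeline at $w_1$ and the arrival timeline at $w_2$ into $m$ windows of length $B$, so that packing every item into a tight target makespan is possible iff the $x_i$'s partition into triples summing to $B$.
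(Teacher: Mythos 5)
Your proposal is correct and matches the paper's own proof essentially verbatim: the paper also reduces from \textsc{Two-Machine Flowshop with Delays} with $p_{1j}=p_{2j}=1$ (citing Yu, Hoogeveen, and Lenstra for its strong NP-hardness), uses the identical construction ($\W=\{w_1,w_2\}$, $\size\equiv 1$, $d^+\equiv d^-\equiv 1$, $\capa\equiv\infty$, $\transit(p_j)=\ell_j+1$), and establishes the same correspondence between makespan $T$ for the flowshop and completion time $T-1$ for the reallocation instance. The only difference is presentational: the paper treats the flowshop hardness as a black-box citation rather than re-deriving it from \textsc{3-Partition} as your fallback sketch suggests.
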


We prove this theorem by a reduction 
from the problem so-called \textsc{Two-Machine Flowshop with Delays (TMFD)}
 (e.g., see \cite{YHL04}).
In Problem \textsc{TMFD}, we are given two machines $M_1$ and $M_2$,
and a set $J$ of jobs. 
Every job $j \in J$ consists of two operations with an intermediate delay
$\ell_j \in \mathbb{Z}_+$;
the first (resp., second) operation is executed by $M_1$ (resp., $M_2$)
and
the time interval between the completion time of the first one
and the starting time of the second one is exactly $\ell_j$.
%where $\ell_j$ is a nonnegative integer.
Processing the first (resp., second) operation of  job $j$ takes
$p_{1j}$ (resp., $p_{2j}$), where $p_{ij}$ is a positive integer.
It follows that the completion time of job $j$ starting the first
operation
at time $\varphi(j)$ is $\varphi(j)+p_{1j}+\ell_j+p_{2j}$.
Each machine can process at most one job at any time.
The objective of \textsc{TMFD} is to find a schedule of all jobs in $J$
whose
completion time, i.e.,
$\max_{j \in J}\{\varphi(j)+p_{1j}+\ell_j+p_{2j}\}$ is minimized.
It was shown  that \textsc{TMFD} is strongly NP-hard
even if $p_{1j}=p_{2j}=1$ for all $j \in J$  \cite{YHL04}.

\begin{theorem}[\cite{YHL04}]
Problem  \textsc{TMFD} is strongly NP-hard
even if $p_{1j}=p_{2j}=1$ for all jobs $j \in J$.
\end{theorem}

 Take an  instance $\I_{\rm TMFD}=(M_1,M_2,J, \{\ell_j \mid j \in
 J\})$ 
of Problem \textsc{TMFD} 
such that $p_{1j}=p_{2j}=1$ for all $j \in J$ and
 each of $\ell_{j}$ is polynomial in $|J|$.
From the $\I_{\rm TMFD}$, we construct an instance
$\I_{\rm \RA}=(\W,\P,d^+,d^-,\capa,\size,\transit)$ of the \mainpro as follows.
Let $\W=\{w_1,w_2\}$, $d^+(w_1)=d^+(w_2)=d^-(w_1)=d^-(w_2)=1$, 
and
 $\capa(w_1)=\capa(w_2)=\infty$.
Let
 $\P$  be the set of  \prods $p_j$, $j\in [1,|J|]$,
such that every \pro $p_j \in \P$ satisfies $\source(p_j)=w_1$,
 $\sink(p_j)=w_2$,  $\size(p_j)=1$, and $\transit(p_j)=\ell_j+1$.
Note that $\I_{\rm \RA}$ can be constructed from
 $\I_{\rm TMFD}$
in polynomial time.
For proving Theorem~\ref{flowshop:th}, we will show that
there exists a schedule for  $\I_{\rm TMFD}$
whose completion time is at most $T$
if and only if
there exists a schedule for  $\I_{\rm \RA}$
whose completion time is at most $T-1$.

Assume that there exists a schedule $\depart'$ for  $\I_{\rm TMFD}$
whose completion time is at most $T$;
let $\depart'(j)$ denote the time when job $j \in J$ starts 
the first operation in the schedule $\depart'$.
Then, job $j$ starts the second operation at time
$\depart'(j)+\ell_j+1$
by $p_{1j}=1$.
Since each machine can process at most one job at any time,
we have
\begin{equation}\label{flowshop1:eq}
 \depart'(j)\neq \depart'(j') \mbox{ and }
\depart'(j)+\ell_j+1 \neq \depart'(j')+\ell_{j'}+1
\end{equation}
for every two distinct jobs $j,j' \in J$. 
Note that $\max_{j \in J}\{\depart'(j)+\ell_j+2\} \leq T$ by $p_{2j}=1$.
Let $\depart$ be the schedule for  $\I_{\rm \RA}$
such that $\depart(p_j)=\depart'(j)$ for $p_j \in \P$.
Then, \pro $p_j$ arrives at $w_2$ at time $\depart'(j)+\ell_j+1$
by $\transit(p_j)=\ell_j+1$.
By \eqref{flowshop1:eq} and $\size(p_j)=1$, 
 $\depart$ satisfies the \outdegree and \indegree capacity constraints.
By $\capa(w_1)=\capa(w_2)=\infty$, it follows that $\depart$ is feasible.
The completion time for $\depart$ is
$\max_{p_j \in \P}\{\depart'(j)+\ell_j+1\}\leq T-1$.

Assume that
 there exists a schedule $\depart$ for  $\I_{\rm \RA}$
whose completion time is at most $T-1$.
Then since $d^+(w_1)=d^-(w_2)=1$ and $\size(p_j)=1$ and $\transit(p_j)=\ell_j+1$ for  $p_j\in \P$,
it follows by
the \outdegree and \indegree capacity constraints 
that
\begin{equation}\label{flowshop2:eq}
 \depart(p_j)\neq \depart(p_{j'}) \mbox{ and }
\depart(p_j)+\ell_j+1 \neq \depart(p_{j'})+\ell_{j'}+1
\end{equation}
for every two distinct \prods $p_j,p_{j'} \in \P$. 
Note that $\max_{p_j \in \P}\{\depart(p_j)+\ell_j+1\}\leq T-1$.
Let $\depart'$ be the schedule for  $\I_{\rm TMFD}$
such that  job $j \in J$ starts 
the first operation at time $\depart(p_j)$.
Then, job $j$ starts the second operation at time
$\depart(p_j)+\ell_j+1$
and completes it at time 
$\depart(p_j)+\ell_j+2$.
Since each machine processes at most one job at any time in $\depart'$
by \eqref{flowshop2:eq} and $p_{1j}=p_{2j}=1$  for all $j \in J$, 
it follows that $\depart'$ is a feasible schedule for 
 $\I_{\rm TMFD}$.
The completion time for $\depart'$ is
$\max_{j \in J}\{\depart(p_j)+\ell_j+2\} \leq T$.

\bibliography{./mobility}

\nop{

\newpage

\newpage
\appendix

%\pagestyle{plain}

%\pagenumbering{Roman}

\markboth{APPENDIX}{APPENDIX}
\section{APPENDIX}
{\bf This appendix provides the proofs of the results
that have been omitted due to space reasons. They may be read to the
discretion of the program committee. }

\medskip

\subsection{Proof of Lemma~\ref{low:lem}}

By \outdegree  capacity constraints, for any warehouse $w \in W$, we need at least  
$\lceil \frac{\sum_{p \in
\P^+(w)}\size(p)}{d^+(w)} \rceil$ steps to send all the products $p \in P^+(w)$. Thus the minimum completion time is at least $\lceil \frac{\sum_{p \in
\P^+(w)}\size(p)}{d^+(w)} \rceil +\min_{p \in \P}\{ \transit(p)\} -1$. 
Similarly, by \indegree  capacity constraints, 
the minimum  completion time is at least $\lceil \frac{\sum_{p \in
\P^-(w)}\size(p)}{d^-(w)} \rceil +\min_{p \in \P}\{ \transit(p)\} -1$, which proves the lemma. 
\qed

\subsection{Proof of Lemma~\ref{lemma-000a}}

%Let $V' \subseteq V$ be the set of vertices  in $D$ corresponding to
%the end-vertices of  edges in $E'$. 
For a perfect matching  $M^*$ in $H^*$,  let 
$M=M^* \cap E(H)$ and let $H[M]$ be the subgraph of $H$ induced by $M$. 
%Note that any edge $(w_1,w_2) \in E(H^*)$ added to $H$ can be regarded as self-loop in the demand graph 
Then we note that $M$ is a matching of $H$ such that 
the degree of $w_1$ with respect to $H[M]$
is equal to that of $w_2$ for any $w \in W$. 
Hence, $M^*$ corresponds to  vertex-disjoint simple cycles in the demand graph $G$.
\qed

\nop{

\begin{algorithm}
\caption{{\alg{Uniform}($\W,\P,d^+,d^-,\capa,\size,\transit$)}}
\begin{algorithmic}[1]
\renewcommand{\algorithmicrequire}{\textbf{Input:}}
 \renewcommand{\algorithmicensure}{\textbf{Output:}}
\label{matching:algo}
\REQUIRE An instance of the \mainpro
with uniform 
 $\size$ and uniform $\transit$.
%where $\transit \equiv \ell$.

% A set $X$ of variables of (\ref{IP:eq})

\ENSURE A schedule for $\P$  with the completion time
$\Q_{\max}+\transituni-1$,
where  
$\Q_{\max}=\max_{w \in \W}\{\max\{\lceil \frac{\sum_{p \in \P^+(w)}\size(p)}{d^+(w)} \rceil, $ 
$
\lceil \frac{\sum_{p \in \P^-(w)}\size(p)}{d^-(w)} \rceil\}\}$.

\FOR{$w\in \W$}

\STATE Divide $w$ into $d_w$ \houses $w_i$, $i \in [1, d_w]$.

\STATE Number \prods in $\P^+(w)$ as $p_j$, $j \in [1,|\P^+(w)|]$,
and let  $\source(p_j)=w_{\lceil \frac{j}{\Q_{\max}} \rceil}$ 
%with $i = \lceil \frac{j}{p} \rceil$ 
for every
 $p_j \in \P^+(w)$.

\STATE Number \prods in $\P^-(w)$ as $p_j$, $j \in [1,|\P^-(w)|]$
and let  $\sink(p_j)=w_{\lceil \frac{j}{\Q_{\max}} \rceil}$ 
%with $i = \lceil \frac{j}{p} \rceil$
 for every
 $p_j \in \P^-(w)$.

\ENDFOR

\COMMENT{Denote the resulting set of \houses and \prods by $\tilde{\W}$ and
 $\tilde{\P}$,
respectively.}

\WHILE{$\exists w \in \tilde{\W}$ with $|\tilde{\P}^+(w)| \neq |\tilde{\P}^-(w)|$}

\STATE Add to $\tilde{\P}$ an extra \pro $p$ with $\source(p)=u$ and 
$\sink(p)=v$
for some $u,v \in \tilde{\W}$ with $|\tilde{\P}^+(u)| < |\tilde{\P}^-(u)|$ and
$|\tilde{\P}^+(v)| > |\tilde{\P}^-(v)|$.

\ENDWHILE

\STATE Compute a partition  $\{\P_i \mid i \in [1,\Q_{\max}]\}$   of
$\P$ such that 
in the corresponding \dmgraph $G=(\tilde{\W},E_{\tilde{\P}})$,
the set of directed edges in $E_{\tilde{\P}}$
 corresponding to $\P_i$
induces a family of 
 vertex-disjoint
 simple cycles in $G$.

\STATE Let $\depart(p)=i-1$ for all $p \in \P_i$, $i \in [1,\Q_{\max}]$, and output $\{\depart(p) \mid p \in \P\}$. 
\end{algorithmic}
\end{algorithm}

}

\subsection{Proof of Theorem~\ref{2apx:th}}

Let $\depart$ be the schedule obtained by
%Algorithm~\ref{matching:algo}.
\alg{Uniform}($\W,\P,d^+,d^-,\capa,\size,\transit$).
Then,  $\depart$ satisfies the \house capacity constraints, since
it is based on a cycle decomposition.
Also, by construction and $d^-\equiv \infty$, the carry-out and carry-in capacity constraints
are satisfied.
This means that $\depart$ is feasible.
Observe that the completion time of $\depart$ is at most $\Q_{\max}+\max_{p  \in \P}\transit(p)$.
Since $\Q_{\max}$ and $\max_{p  \in \P}\transit(p)$ are both lower
bounds on the minimum completion time, it follows that 
$\depart$ is a 2-approximate schedule.
Therefore, %we have the following theorem, 
the theorem is proved,
where
the case of $d^+\equiv +\infty$ can be treated similarly. 
\qed

\subsection{Proof of Lemma~\ref{additive:lem}}

We only prove (i), since 
(ii) can be shown similarly.
%It suffices to show that 
% $\sum_{p \in \P^+(w,t)} \size(p) \leq d^+(w)+ \size(p_1^*)+
%\size(p_2^*)$ by the maximality of $\size(p_1^*)$ and $\size(p_2^*)$,
% where
%let $p_1^*$ (resp., $p_2^*$) be a \pro in $\P^+(w)$ with the maximum size
%(resp., the second maximum size).
We assume that  the corresponding \outdegree capacity constraint is removed during the $\gamma$th iteration  of the while-loop in
% Algorithm~\ref{iterative:algo}. 
\alg{Iterative}($\W,\P,d^+,d^-, \size,\transit,\T$).
Note that such an iteration  must exist, since otherwise  the \outdegree capacity constraint $\sum_{p \in \P^+(w,\tt)} \size(p) \leq d^+(w)$ is satisfied, which implies (i). 
Let $x^*_{p\tt}$'s denote the values of the LP  computed in the $\gamma$th iteration. 
For $i=0,1$, let $Q_i$ be the set of \prods  $p \in \P^+(w)$
such that the value of  $x_{p\tt}$ has been fixed to $i$ by the $\gamma$th iteration, 
and
$d_1=\sum_{p \in Q_1} \size(p)$.  
Note that $d_1+ \sum_{p \in \P^+(w) \sm (Q_0 \cup Q_1)} \size(p)x^*_{p\tt} \leq  d^+(w)$ and 
 $ \P^+(w,\tt)\subseteq \P^+(w) \sm Q_0$. 
By $\P^+(w,\tt) \subseteq \P^+(w)$, we have
\begin{equation}\label{remove:eq}
d_1+ \sum_{p \in \P^+(w,\tt) \sm (Q_0 \cup Q_1)} \size(p)x^*_{p\tt} \leq  d^+(w).  
\end{equation}
Since the constraint is removed in the $\gamma$th iteration, we also have
$\sum_{p \in \P^+(w) \sm  (Q_0 \cup Q_1)} (1-x^*_{p\tt})\leq 2$, which again by   $\P^+(w,\tt) \subseteq \P^+(w)$ implies 
\begin{equation}\label{additive:eq}
 \sum_{p \in \P^+(w,\tt) \sm  (Q_0 \cup Q_1)} (1-x^*_{p\tt})\leq 2.
\end{equation}
%(note that during the algorithm, 
% every variable $x_{p\tt}$ satisfies
% $0 \leq x_{p\tt} \leq 1$ 
%by the constraints
%\eqref{eq-a4}).
%in (\ref{IP:eq})).

Observe that
\begin{eqnarray}
\nonumber
\sum_{p \in \P^+(w,\tt)} \size(p) & = &
 d_1+\sum_{p \in \P^+(w,\tt) \sm  (Q_0 \cup Q_1)} \size(p) \\
\nonumber
& = & d_1 + \!\!\!\! \sum_{p \in \P^+(w,\tt) \sm  (Q_0 \cup Q_1)} \size(p)x^*_{p\tt} 
+ \!\!\!\!\sum_{p \in \P^+(w,\tt) \sm (Q_0 \cup Q_1)} \size(p)(1-x^*_{p\tt})\\
%\nonumber
& \leq & d^+(w) +\sum_{p \in \P^+(w,\tt) \sm  (Q_0 \cup Q_1)} \size(p)(1-x^*_{p\tt}), \label{eq--xx0}
\end{eqnarray}
where the last inequality follows from (\ref{remove:eq}).
Let $p_1$ and $p_2$ denote the products in 
$ \P^+(w,\tt) \sm (Q_0 \cup Q_1)$ with the largest and second largest sizes, respectively. 
Then we have 
\begin{eqnarray}
\nonumber
\sum_{p \in \P^+(w,\tt) \sm (Q_0 \cup Q_1)} \size(p)(1-x^*_{p\tt}) & = & 
(\size(p_1)- \size(p_2))(1-x^*_{p_1 \tt}) + \size(p_2)(1-x^*_{p_1 \tt})
\\
\nonumber 
&&
+\sum_{p \in \P^+(w,\tt) \sm (Q_0 \cup Q_1 \cup \{p_1\})} \size(p)(1-x^*_{p\tt})
\\
\nonumber
&\leq & 
 (\size(p_1)- \size(p_2))
 \\
 \nonumber
 &&
 + \size(p_2) 
\sum_{p \in \P^+(w,\tt) \sm (Q_0 \cup Q_1)} (1-x^*_{p\tt})
\\
\nonumber
& \leq &  \size(p_1)+ \size(p_2)  \\
& \leq & \size(p^+_1(w,\tt))+ \size(p^+_2(w,\tt)),\label{eq--xxa}
\end{eqnarray}
Here the second inequality follows from
 (\ref{additive:eq}), and 
 the first and third ones follow from
 definition of $p_i$ and $\size(p^+_i(w,\tt))$, respectively.
By (\ref{eq--xx0}) and (\ref{eq--xxa}), we obtain the property (i).
\qed

\subsection{Proof of Lemma~\ref{feasible1:lem}}

Suppose that $x^*_{q\eta} >0$  for some $q \in P$ and $\eta \geq 2m$. 
Note that  $\sum_{p \in P:s(p)=s(q)} \sum_{\tt\leq 2m-1}x^*_{p\tt}$
$\leq m-x^*_{q\eta}$, i.e., 
 at most $m-x^*_{q\eta}$ many products are sent from $s(q)$ during the time interval $[0,2m-1]$. 
Thus warehouse $s(q)$ has room to sent at least
\begin{eqnarray*}
%\Theta&=&\{\theta \in \mathbb{Z}_+ \mid \tt \leq 2m-2, \sum_{q \in P:s(q)=s(p)} x^*_{q\tt} < d^+(s(p))\}\\
f&=&\sum_{\tt \in [0,2m-1]}\bigl(1 -\sum_{p \in P:s(p)=s(q)}x^*_{p\tt}\bigr) \geq 2m-(m-x^*_{q\eta})=m+x^*_{q\eta}
\end{eqnarray*}
 many products during the time interval $[0,2m-1]$, where we note that at least one product can be sent at any time.  
 On the other hand, since 
\begin{eqnarray*}
\sum_{\tt \in [0,2m-1]}\sum_{p \in P:t(p)=t(q)}x^*_{p(\tt+\tau(q)-\tau(p))}&\leq &m,  
\end{eqnarray*}
warehouse $t(q)$ has room to receive at least $f-m=x^*_{q\eta}$ many products during the time interval corresponding to the room of $s(q)$. Therefore, by sending $x^*_{q\eta}$ portion of product $q$ during the time interval $[0,2m-1]$, 
we can obtain a feasible solution  
$x^{**}_{p\tt}$  such that 
$|\{ (p, \tt) \mid p \in P, \tt\geq 2m,  x^{**}_{p\tt} > 0\}|< \{ (p, \tt) \mid p \in P, \tt\geq 2m,  x^{*}_{p\tt} > 0\}|$. 
By repeatedly applying this procedure, we obtain a desired feasible solution of the linear relaxation  \eqref{eq-a1}--\eqref{eq-a5} with $T=T_{\min}$. 
\qed

\subsection{Proof of Theorem~\ref{apx:th}(i)--(iii)}\label{binpacking:appendix}

We here  consider the case of $d^- \equiv+\infty$
and  prove  Theorem~\ref{apx:th}~(i)--(iii), 
where the case of $d^+
\equiv +\infty$
can be treated similarly.
Since $d^- \equiv+\infty$ and $\capa \equiv+\infty$,
we have only to consider a  schedule for $\P^+(w)$  independently
  for each \house $w
\in \W$.
\nop{
Note that $\max_{w \in \W}\{\Q(w, {\cal S})\}$ is the
 completion time for the \mainpro,
where $\Q(w,{\cal S}) \equiv \max_{p \in \P^+(w)}\{\depart(p;{\cal S})+\transit(p)\}$ denotes the completion time for $\P^+(w)$ in a schedule
${\cal S}$.
}
In what follows, we
will show that 
  schedules for $\P^+(w)$ which
attain approximation ratios of
 Theorem~\ref{apx:th}\,(i)--(iii)
can be found in polynomial time for every $w \in \W$.

We consider a schedule for $\P^+(w)$;
namely, we consider an instance
$\I_{\rm \RA}=(\W',\P^+(w),$ 
$d^+(w),$
$\infty,\infty,\size,\transit)$ of the
\mainpro,
where $W'=\{w\} \cup \{\sink(p) \mid p \in \P^+(w)\}$ and we regard $\size$ and $\transit$ as those restricted to $\P^+(w)$.
Then, we need to partition $\P^+(w)$ 
into sets of \prods
whose total size is bounded by the \outdegree capacity $d^+(w)$.
Based on this observation,
we can see that the problem consisting of 
$\I_{\rm \RA}$'s has a similar structure
to
problem \textsc{Binpacking} defined below.
 We  construct approximation algorithms 
 corresponding to Theorem~\ref{apx:th}\,(i)--(iii)
 by using the ones  for \textsc{Binpacking}
 as  subroutines.

%we will reduce $\I_{\rm \RA}$ to an instance of  \textsc{Binpacking}, and  apply algorithms for \textsc{Binpacking} to the obtained instance.

\begin{quote}
%\begin{prob1}\label{VC-problem1}
\noindent
%{\rm {\sc 3-Partition  (3Part)}}
{\rm Problem \textsc{Binpacking}}

\noindent
%{\rm INSTANCE:}
Instance:  
$(I,\size_{BP},d):$  A set $I$ of
  items,
a function $\size_{BP}: I \rightarrow \mathbb{R}_+$, and
 a bin with capacity  $d \in \mathbb{R}_+$.

\noindent
%{\rm QUESTION:} 
Output: A packing of all items in $I$ with the minimum number of bins,
 i.e., 
a partition ${\cal J}$
of $I$ with the minimum $|{\cal J}|$
such that for every $J \in {\cal J}$, 
the total size of items in $J$ 
 is at most $d$.
%\qed
%\end{prob1}
\end{quote}

\noi
We construct from $\I_{\rm \RA}$ 
an instance
  $\I_{\rm BP}=(I,\size_{BP},d)$ of   \textsc{Binpacking} as follows.
For each \pro $p_i \in \P^+(w)$, we create an item
$i$ with $\size_{BP}(i)=\size(p_i)$; denote the resulting set of items by
$I$.
Let $d=d^+(w)$ as the capacity of a bin.
Then note that
a subset  $J$ of $I$ can be packed into one bin if and only if
the corresponding set $\{p_i \mid i \in J\}$ of \prods 
 can depart from $w$ simultaneously,
since the \outdegree capacity constraint for $w$ is satisfied. 
Hence, it is not difficult to see that
 $I$ can be packed into $k$ bins
if and only if any product in $\P$ can be sent from $w$  by  time  $k-1$, by mapping a set of items 
in the $\ell$th bin into a set of \prods departing from $w$ at time $\ell-1$.
%Moreover, if $\transit(p)=1$ for all $p \in \P^+(w)$, then
% $I$ can be packed into $k$ bins if and only if we can complete the reallocation of all \prods by time $k$.
%
%
Let  $\opt(w)$ be the minimum completion time of a schedule for
 $\I_{{\rm RA}}$
and $\opt_{{\rm BP}}(w)$ be the minimum number of bins for $\I_{{\rm BP}}$.
We then have the following inequality: 
\begin{equation}\label{binpack:eq}
 \opt(w) \ \geq \ \opt_{{\rm BP}}(w)-1+\min_{p \in \P^+(w)}\transit(p).
\end{equation}
\nop{
By these observations, we will 
obtain
constant-factor approximate solutions for  $\I_{{\rm RA}}$ 
based on those for   $\I_{{\rm BP}}$.
}

We first consider the case where $\transit$ is uniform, i.e., 
$\transit(p)=\transituni$ for all $p \in \P$.
It was shown in \cite{Simchi94}
that  the  so-called 
\textsc{First-Fit Decreasing (FFD)}  algorithm delivers
 in $\mO(|I|\log{|I|})$ time
a feasible solution ${\cal J}=\{J_1, \dots , J_k\}$
 for $\I_{\rm BP}$  with
 $k \leq \frac{3}{2}\opt_{{\rm BP}}(w)$,
where
 $J_\ell$ denotes the set of items packed in the $\ell$th bin
 for $\ell \in [1,k]$.
Let $\depart$ be the schedule for $\I_{{\rm RA}}$ 
such that $\depart(p)=\ell-1$ for every \pro $p \in \P^+(w)$ corresponding to an item in
$J_\ell$.
Its completion time is $k-1+\transituni \leq \frac{3}{2}\opt_{{\rm
BP}}(w)-1+\transituni$
$\leq \frac{3}{2}\opt(w)$ by (\ref{binpack:eq}).
This proves  Theorem~\ref{apx:th}\,(ii).

We next consider the case where $\transit$ is general.
We sort items in $I$ in such a way that
the corresponding  \prods  satisfy
 $\transit(p_1)\geq \transit(p_2) \geq
\cdots \geq \transit(p_{|I|})$.
%by nonincreasing order of $\transit$.
According to this order, we apply the so-called \textsc{First-Fit (FF)} algorithm 
to
 $\I_{\rm BP}$ 
to obtain a feasible solution
${\cal J}=\{J_1, \dots , J_k\}$
 for $\I_{\rm BP}$, 
where 
 $J_\ell$ denotes the set of items packed in the $\ell$th bin
 for $\ell\in [1,k]$.
Here, the \textsc{FF} algorithm packs each item, one by one, into the bin with the
lowest possible index, while opening a new bin if necessary.
It was shown in  \cite{Simchi94} that
$k \leq \frac{7}{4}\opt_{{\rm BP}}(w)$,
while $k=\opt_{{\rm BP}}(w)$ clearly holds when the $\size$ is uniform.
Define $\alpha$  by $1$ if the $\size$ is uniform,
and
$\frac{7}{4}$ otherwise.

Let  $\depart$ be the schedule for $\I_{{\rm RA}}$ 
such that $\depart(p)=\ell-1$ for every \pro $p \in \P^+(w)$ corresponding to an item in
$J_\ell$.
We then claim that the completion time  $\T_w$ for $\depart$ satisfies $\T_w \leq 
%\frac{7}{4}
\alpha
\opt(w)$. 
Since the time complexity of the algorithm is dominated by sorting
items in $I$, it can be implemented in $\mO(|I|\log{|I|})$ time.
Hence the following claim proves  Theorem~\ref{apx:th}\,(i) and (iii).

\begin{claim}
$\T_w \leq 
%\frac{7}{4}
\alpha
\opt(w)$.
\end{claim}
\begin{claimproof}
Let $i \in I$ be an item such that
the corresponding \pro $p \in \P^+(w)$ arrives at $\sink(p)$ at time 
$\T_w$.
Assume that 
the \textsc{FF} algorithm puts $i$ in the $\ell$th bin. By the assumption, we have 
$\T_w=\ell-1+\transit(p)$. 
Let $\depart_{i}$ be the schedule obtained from $\depart$ by restricting the product set $P$ to those corresponding to $[1,i]$. 
We can see that $\depart_{i}$ is the schedule obtained by the FF algorithm for $[1,i]$ and $\T_w$ is also the completion time for $\depart_{i}$.  
%Thus,  $j_1-1+\transit(p_\ell)=\Q(w)$ holds also in ${\cal S}_\ell$.
Let 
 $\opt_{{\rm BP},i}$ be the optimal value for $\I_{{\rm BP}}$
restricted to $[1,i]$, and let   
 $\opt_{i}(w)$ be the optimal value for
 $\I_{{\rm RA}}$ restricted to  product set corresponding to $[1,i]$.
Note that $\ell \leq 
%\frac{7}{4}
\alpha
\opt_{{\rm BP},i}$.
Since $I$ is sorted as above, 
(\ref{binpack:eq}) implies that 
$\opt_{i}(w) \geq \opt_{{\rm BP},i}-1+\transit(p)$.
\nop{
Thus, we have $\T_w=j_1-1+\transit(p_{i_1})$
$\leq \frac{7}{4}\opt_{{\rm BP},i_1}-1+\transit(p_{i_1})$
$\leq \frac{7}{4}(\opt_{{\rm BP},i_1}-1+\transit(p_{i_1}))$
$\leq \frac{7}{4}\opt_{i_1}(w)$
$\leq \frac{7}{4}\opt(w)$.
}
Therefore,  we have $\T_w=\ell-1+\transit(p)$
$\leq \alpha\opt_{{\rm BP},i}-1+\transit(p)$
$\leq \alpha(\opt_{{\rm BP},i}-1+\transit(p))$
$\leq \alpha\opt_{i}(w)$
$\leq \alpha\opt(w)$, which completes the proof of the claim. 
\end{claimproof}

\qed

\subsection{Proof of Theorem~\ref{apx:th}\,(iv) and (v)}

Let us first show Theorem~\ref{apx:th} (v). In order to improve the approximation ratio
mentioned in Section~\ref{nocap:sec}, we need a more careful treatment for modifying a schedule $\depart$ for the capacity augmentation given by Theorem \ref{bicriteria:th} {\rm (i)}. 
More precisely, we convert $\depart$  to a feasible
schedule with the completion time $6\T_{\min}$,
by giving the following three feasible schedules (a)--(c).
Here
for the schedule $\depart$,
 $T_{\min}$,
$\P^+(w,\tt)$ and $\P^-(w,\tt)$ are defined in Section~\ref{bicriteria:sec}. 
Products $p^+_i(w,\tt)$ and $p^-_i(w,\tt)$ ($i=1,2$)
are  defined  similarly as in Section~\ref{bicriteria:sec}. 
However,  we choose them in such a way that the set $P_i=\{p^+_i(w,\tt), p^-_i(w,\tt) \mid w\in W, \tt \in [0,T_{\min}]\}$ is (inculsion-wise) minimal. Note that such $P_1$ and $P_2$ can be computed in polynomial time. 

%and for a product set $Q \subseteq P$, we define $\max Q$ by the set of products $p$ with the maximum size among $Q$, i.e., $\max Q=\arg\!\max \{ \size(p) \mid p \in Q\}$.  
\begin{alphaenumerate}
\item
A feasible schedule $\psi_1$ with
the completion time $3\T_{\min}$ for a product set $P_1$.

\item A feasible schedule $\psi_2$  with
the completion time $2\T_{\min}$ for a product set $P_2$. 
 
\item A feasible schedule $\psi_3$  with
the completion time $\T_{\min}$ for a product set $P_3=\P \sm (\P_1 \cup \P_2)$.
\end{alphaenumerate}
\nop{%%%%%%%%%%%
%%%%%%%%%%%%%
\begin{alphaenumerate}
\item
A feasible schedule $\psi_1$ with
the completion time $3\T_{\min}$ for a product set $\P_1 \subseteq \P$ which contains 
a \pro in $\max \P^+(w,\tt)$ 
and a \pro in $\max\P^-(w,\tt)$  
 for every  $w \in \W$
and $\tt \in [0,\T_{\min}]$.

\item A feasible schedule $\psi_2$  with
the completion time $2\T_{\min}$ for a product set $\P_2 \subseteq \P \sm \P_1$ which contains 
a \pro in $\max(\P^+(w,\tt)  \sm \P_1)$
and  a \pro in $\max(\P^-(w,\tt)  \sm \P_1)$
for every  $w \in \W$
and $\tt \in [0,\T_{\min}]$.
 
\item A feasible schedule $\psi_3$  with
the completion time $\T_{\min}$ for a product set $P_3=\P \sm (\P_1 \cup \P_2)$.
\end{alphaenumerate}
}
%%%%%%%%%%

Note that $\{P_1, P_2, P_3\}$ is a partition of $P$, and hence (a), (b), and (c) imply Theorem~\ref{apx:th} (v), 
since a desired  schedule $\psi^*$ can be obtained by $\psi^*(p)=\psi_1(p)$ if $p \in P_1$, $\psi_2(p)+3T_{\min}$ if  $p \in P_2$, and $\psi_3(p)+5T_{\min}$ if  $p \in P_3$. 
%We also note that $p_1^+(w,\tt)$ is contained in  $\max\P^+(w,\tt)$. 
%However, since $p_1^+(w,\tt)$'s are arbitrarily chosen from $\max\P^+(w,\tt)$'s, 
%no product set  might exist such that
%it contains $p_1^+(w,\tt)$'s and $p_1^-(w,\tt)$'s and has the minimum completion time at most $3T_{\min}$. 

In order to show these statements, let us construct an undirected bipartite graph $H_0=(\tilde{W}_1 \cup
 \tilde{W}_2,E_0)$ as follows. 
Recall that  $H=(\W_1\cup \W_2,E(H))$ is an  undirected bipartite graph
obtained from the \dmgraph $G=(\W,E_\P)$ defined  before Lemma~\ref{lemma-000a}.
%according to the proof of Lemma~\ref{matching:lem}.
For $i=1,2$, we replace every  $w \in \W_i$ 
with its $\T_{\min}+1$ copies 
 $w_{i,\tt}$,
$\tt\in [0,\T_{\min}]$; we denote  by $\tilde{W}_i$ the resulting set of vertices.
For every \pro $p \in \P$,
%For every edge $(w_1, w_2) \in E(G_D)$
%with $w_1=\source(p)$ and $w_2=\sink(p)$
% which corresponds to a \pro $p \in \P$, 
we 
replace the corresponding edge $(\source(p),\sink(p)) \in E(H)$ with 
an undirected edge
  $(\source(p)_{1,\depart(p)},
 \sink(p)_{2,\depart(p)+\transit(p)})$
which connects two vertices corresponding to
its departure and arrival time in
 $\depart$;
we denote by $E_0$ the resulting set of edges.
For simplicity, in the rest of this section, we identify  
products $p$ in $\P$ with edges  $e_p= (\source(p)_{1,\depart(p)},
 \sink(p)_{2,\depart(p)+\transit(p)}) \in E_0$. For example, we write  $\size(e)$ instead of $\size(p)$ if an edge $e$ corresponds to a product $p$.

For (a), we can see the following property on $P_1$.

\begin{lemma}
A graph  
 $H_1=(\tilde{W}_1 \cup \tilde{W}_2, P_1)$ is a forest. 
\end{lemma}
\begin{proof}
Assuming a contrary that $H_1$ contains 
a cycle $C$, we derive a contradiction.  We claim that 
all the edges in $C$ have the same size.
Let $V(C)=\{v_1,v_2,\ldots,v_k\}$, and we assume without loss of generality that $v_1=w_{1,\tt} \in \tilde{W}_1$ and $(v_1,v_2) =p^+_1(w,\tt)$.
Then we have $(v_2,v_3) =  p^-_1(w',\tt')$
for $v_2=w'_{2,\tt'} \in \tilde{W}_2$.
This means that $\size(v_1,v_2)\leq \size(v_2,v_3)$.
By applying the same argument, 
we see that

Hence,  $(v_1,v_2)$ attains the largest size in
 $E_{H_0}(v_2)$, and  we have $\size(v_1,v_2) \geq \size(v_2,v_3)$.
Similarly,
  $(v_2,v_3)$ attains the largest size in
 $E_{H_0}(v_3)$, and hence we have $\size(v_2,v_3) \geq \size(v_3,v_4)$,
since the edge $(v_2,v_3)$ does not correspond to   $p^-_1(w',\tt')$.
By repeating these observations, 
we have $\size(v_1,v_2) \geq \size(v_2,v_3) \geq
\cdots $ $\geq \size(v_k,v_1) \geq \size(v_1,v_2)$.
It follows that  every  edge in $C$ has  the same size and 
attains the largest size for its end-vertices, which proves the claim.

By this claim, even if we delete one edge in any cycle, 
there remains
 an edge attaining the largest size in $E_{H_0}(w)$
 for every  $w \in \tilde{W}_1 \cup \tilde{W}_2$.
Hence, we can obtain a required forest by repeating one edge from it
 whenever a cycle exists.
\end{proof}

Note that a set $E' \subseteq E_0$ of edges corresponds to a feasible
schedule
if the total size of edges in $E_{H_0}(w)\cap E'$
is at most $d^+(w)$ (resp., $d^-(w)$)
for every vertex $w \in \tilde{W}_1$ (resp., $w \in \tilde{W}_2$);
we call such an $E'$  \emph{feasible}.
For (a), it suffices to show that $E(H_1)$
can be partitioned into three feasible sets of edges.

\begin{lemma}\label{forest:lem}
The set  $E(H_1)$ of edges
can be partitioned into three feasible sets of edges.
Hence, we can obtain a schedule for $E(H_1)$ with the completion time
$3\T_{\min}$. 
\end{lemma}
\begin{proof}
By Lemma~\ref{additive:lem}, we can observe that for every $w \in
\tilde{W}_1 \cup \tilde{W}_2$, 
 $E_{H_1}(w)$ can be partitioned into three
sets $E_i(w)$, $i=1,2,3$, of edges so that
the total size of edges in $E_i(w)$ (namely, $\sum_{e \in
E_i(w)}\size(e)$)
is at most $d^+(w)$ (resp., $d^-(w)$) for all $w \in \tilde{W}_1$ (resp., $w \in \tilde{W}_2$).  

Based on this, we prove the lemma by giving an algorithm
for partitioning $E(H_1)$ into three feasible sets
$F_i$, $i=1,2,3$.
First we regard each component $X$ in $H_1$ as a rooted tree with root $r_X$
for a vertex $r_X \in V(X)$ chosen arbitrarily.
We initially let $F_i:=\emptyset$ for $i=1,2,3$, and
 repeat the following procedure for every vertex $w \in V(H_1)$ from the root to leaves
in a top-down way:
\begin{quote}
 Without loss of generality, assume that if the parent $v$ of $w$
exists, then  $(v,w) \in F_1 \cap E_1(w)$ holds.
Let  $F_i:=F_i \cup (E_i(w) \sm \{(v,w)\})$ for $i=1,2,3$.
\end{quote} 
The resulting sets $F_1,F_2$, and $F_3$ are feasible.
It follows that we can obtain a schedule $\depart_1$
for $E(H_1)$
with the completion time
$3\T_{\min}$
by letting  departure times for $p \in F_1$
(resp., $p \in F_2$, resp., $p \in F_3$)
as $\depart_1(p)=\depart(p)$
(resp., $\depart_1(p)=\depart(p)+\T_{\min}$, resp.,  $\depart_1(p)=\depart(p)+2\T_{\min}$).
\end{proof}

Let  $H'_2=(\tilde{W}_1 \cup \tilde{W}_2, E_0 \sm E(H_1))$.
Similarly to the above, there exists a subgraph
$H_2$ of $H'_2$ such that 
$H_2$ is a forest and
 $E(H_2)$ includes 
an edge  in $E_{H_2'}(w)$  with the largest size 
 for every  $w \in \tilde{W}_1 \cup \tilde{W}_2$.
Note that such an edge in $E_{H_2'}(w)$
 with the largest size attains at most the second largest
in $E_{H_0}(w)$,
since $E(H_1)$ contains an edge in $E_{H_0}(w)$ with the largest size
for every $w \in \tilde{W}_1 \cup \tilde{W}_2$. 
Hence, 
Lemma~\ref{additive:lem} implies that for every  $w \in
\tilde{W}_1 \cup \tilde{W}_2$, 
 $E_{H_2}(w)$ can be partitioned into two
sets $E_i(w)$, $i=1,2$, of edges so that
the total size of edges in $E_i(w)$ 
is at most $d^+(w)$ (resp., $d^-(w)$) for all $w \in \tilde{W}_1$ (resp., $w \in \tilde{W}_2$).  
In a similar way to Lemma~\ref{forest:lem},
it follows that we can obtain a schedule for $E(H_2)$ with the
completion time $2\T_{\min}$.

Finally, we consider 
$H_3=(\tilde{W}_1 \cup \tilde{W}_2, E_0 \sm (E(H_1) \cup E(H_2)))$.
Since $E(H_3)$ contains neither an edge with the largest size nor one
with the second largest size in $E_{H_0}(w)$ for every $w \in \tilde{W}_1 \cup
 \tilde{W}_2$,
Lemma~\ref{additive:lem} implies that $E(H_3)$ is feasible.
Consequently, $E(H_0)$ can be partitioned into 6 feasible sets of edges,
from which we can obtain a schedule for $\P$ 
with the completion time $6\T_{\min}$.

Note that such a 6-approximate schedule $\depart'$ can be found by applying the above discussion to $H_0[E_0]$.
Since  $|V(H_0[E_0])|=\mO(m)$
by $|E_0|=m$, it follows that
$\depart'$ can be obtained in polynomial time,
which proves   Theorem~\ref{apx:th}(v).

%%%

\ishii{

Furthermore, we can show that
if \pro size is uniform, then
 the \mainpro is 4-approximable 
 as shown in Lemma~\ref{4apx:lem},
which proves   Theorem~\ref{apx:th}~(iv).

\begin{lemma}\label{4apx:lem}
If \pro size is uniform, then
$E_0$ can be partitioned into four feasible sets of edges.
\end{lemma}
\begin{proof}
Similarly to the discussion in Section~\ref{size2:sec}, we assume 
without loss of generality that $\size(p)=1$ for all \prods $p\in \P$,
and both of $d^+$ and $d^-$ are integral.
Note that by Lemma~\ref{additive:lem}, 
\begin{equation}\label{4apx:eq}
\delta_{H_0}(v) \leq d^+(v)+2~ \ \ \ (\mbox{resp., }d^-(v)+2) 
\end{equation}
holds  for all $v \in \tilde{W}_1$ (resp., $\tilde{W}_2$). 

Let $E_1 \subseteq E_0$ be a maximal feasible set of edges,
and $H_1=(\tilde{W}_1 \cup \tilde{W}_2, E_0 \sm E_1)$.
We then claim that no two vertices of degree at least three are adjacent in   $H_1$.  
%for every vertex $v$ in $H'_1$ 
%with $\delta_{H_1'}(v)\geq 3$,
%with degree at least three,
%any neighbor of  $v$ has degree at most two.
Indeed, if $H_1$ would have an edge $(v_1,v_2)$ 
with $\delta_{H_1}(v_i)\geq 3$,
$i=1,2$, then  $E_1 \cup \{(v_1,v_2)\}$ would be feasible by
\eqref{4apx:eq},  contradicting the maximality of $E_1$.
Let $E_2$ 
be a  set of edges in $E(H_1)$ obtained 
 by picking up one edge incident to
 $v$
for each vertex $v$ in $H_1$ 
with $\delta_{H_1}(v)\geq 3$.
%with degree at least three.
Note that $E_2$ is feasible, since $E_2$ is a matching of $H_1$
 and we have $d^+(v) \geq 1$ and $d^-(v)\geq 1$ for all $v \in \tilde{W}_1 \cup \tilde{W}_2$
by assumption.

Let  $H_2=(\tilde{W}_1 \cup \tilde{W}_2, E_0 \sm (E_1 \cup E_2))$.
For proving the lemma,
it suffices to
show
%We will show 
that $E(H_2)$  can be partitioned into two feasible sets of
 edges.
For this, we will  show that $E(H_2)$ can be partitioned into two
sets $E_3$ and $E_4$ of edges such that
every vertex in $H_2$ with degree at least two has both of 
an edge in $E_3$ and an edge in $E_4$ incident to it.

We first show that such sets $E_3$ and $E_4$ are both feasible.
Let $v$ be a vertex in $\tilde{W}_1$ which has
 at least two edges in $E_3$  incident to it.
Then, 
 at least one edge in $E_4$ is incident to $v$ by $\delta_{H_2}(v)\geq 2$.
Also note that by $\delta_{H_1}(v)\geq \delta_{H_2}(v)\geq 3$,
there exists an edge in $E_2$  incident to $v$.
Hence,  the number of edges in $E_3$ incident to $v$ is at most 
$\delta_{H_0}(v)-2$, which is at most $d^+(v)$ by  \eqref{4apx:eq}.
Similarly, we can see that the number of edges in $E_3$ incident to
$v \in \tilde{W}_2$ is at most $d^-(v)$, and hence $E_3$ is feasible.
It can be shown similarly that  $E_4$ is feasible.

Such sets $E_3$ and $E_4$ can be found in the following manner:
\begin{alphaenumerate}
\item Let $F:=E(H_2)$ and $F_i:=\emptyset$ for $i=3,4$.

\item While there exists a cycle $C$ in $(\tilde{W}_1 \cup \tilde{W}_2,F)$, 
let $F_3:=F_3 \cup \{e_{2k-1} \mid k \in [1,|E(C)|/2]\}$,
$F_4:=F_4 \cup \{e_{2k} \mid k \in [1,|E(C)|/2]\}$,
and $F:=F \sm E(C)$, where edges $e_1$, $e_2, \ldots, e_{|E(C)|}$
appear consecutively in $C$.
 
\item  We regard each component $X$ in
the forest  $(\tilde{W}_1 \cup \tilde{W}_2,F)$ as a directed rooted tree with root $r_X$
for a vertex $r_X \in V(X)$ with degree one chosen arbitrarily.
Let $F_3'$  denote the set of edges in $F$ corresponding to directed edges
from $\tilde{W}_1$ to $\tilde{W}_2$ in the directed trees, and
$F_4'=F\sm F_3'$.
Let $E_3 :=F_3 \cup F_3'$ and $E_4:=F_4 \cup F_4'$.
\end{alphaenumerate}

\noi
Note that every cycle consists of an even number of edges
since
$H_2$ is bipartite.
Also it is not difficult to see from construction that in (c),
every vertex with degree at least two has both of an edge in $F_3'$
and an edge in $F_4'$ incident to it.
Thus, the resulting sets $E_3$ and $E_4$ are required ones.
\end{proof}
}

\subsection{Proof of Theorem~\ref{hard1:th}}

We here show only the case where $d^+$ is uniform and $d^-\equiv \infty$
by a reduction from \textsc{3-Partition},
which is known to be strongly NP-complete \cite[p.224]{garey79npc} (the case  where $d^-$ is uniform and $d^+\equiv \infty$ can be treated
similarly).

\med

\begin{quote}
%\begin{prob1}\label{VC-problem1}
\noindent
%{\rm {\sc 3-Partition  (3Part)}}
{\rm Problem \textsc{3-Partition}}

\noindent
%{\rm INSTANCE:}
Instance:  
$(\{x_1,x_2,\ldots,x_{3m}\},B):$  A set of
 $3m$ positive integers $x_1,x_2,\ldots,x_{3m}$ 
and an integer $B$
such that $\sum_{i \in [1,3m]}x_i=mB$ and $B/4 < x_i < B/2$ for each $i
\in [1,3m]$.

\noindent
%{\rm QUESTION:} 
Question: 
Is there  a partition $\{X_1, X_2, \ldots, X_m\}$ of
$[1,3m]$ such that $\sum_{i \in X_j}x_i$ $=B$ for each $j
\in [1,m]$?
%\qed
%\end{prob1}
\end{quote}

 Take an  instance $\I_{\rm 3PART}=(\{x_1,x_2,\ldots,x_{3m}\},B)$ 
of \textsc{3-Partition} such that $x_i$ is polynomial in $m$
for $i \in [1,3m]$.
From the $\I_{\rm 3PART}$, we construct an instance
$\I_{\rm \RA}=(\W,\P,d^+,d^-,\capa,\size,\transit)$ of the \mainpro as follows.
Let $\W=\{w_1,w_2\}$, $d^+(w_1)=d^+(w_2)=B$, 
 $d^-(w_1)=d^-(w_2)=\infty$,
and
 $\capa(w_1)=\capa(w_2)=mB$.
Let
 $\P_1$  be the set of $3m$ \prods $p_i$, $i\in [1,3m]$,
such that every \pro $p_i \in \P_1$ satisfies $\source(p_i)=w_1$,
 $\sink(p_i)=w_2$, and $\size(p_i)=x_i$.
Let  $\P_2$  be the set of $m$ \prods $p$
such that every \pro $p \in \P_2$ satisfies $\source(p)=w_2$,
 $\sink(p)=w_1$, and $\size(p)=B$, 
and $\P=\P_1\cup \P_2$.
Let $\transit(p)=1$ for all $p \in \P$.
Note that $\I_{\rm \RA}$ can be constructed from
 $\I_{\rm 3PART}$
in polynomial time.
Obviously, it is only possible to exchange 3 \prods in $\P_1$ with
the total size $B$ for 
one \pro in $\P_2$ at each time because $B/4 < \size(p_i) < B/2$
for every $i\in [1,3m]$.
It follows that there exists a feasible schedule for the \mainpro
if and only if  $\I_{\rm 3PART}$ is a yes-instance of  \textsc{3-Partition}.
Thus, the theorem is proved.
\qed

\subsection{Proof of Theorem~\ref{size12:th}}

Let $\I_{\rm \RA}=(\W,\P,d^+,d^-,\capa,\size, \transit)$
be  the instance
 of the \mainpro defined in the proof of Theorem~\ref{hard1:th}.
We will convert $\I_{\rm \RA}$ into an equivalent instance 
of the \mainpro with 
$\size\in \{1,2\}$ in a similar way to \cite[Theorem 3.2]{miwa2000np}.

Let $G=(\W,E_\P)$ be the \dmgraph of $\I_{\rm \RA}$.
Consider a directed edge $e=(u, v) \in E_\P$
corresponding to a \pro $p \in \P$
 where $\{u,v\}=\{w_1,w_2\}$;
note that if $u=w_1$ and $v=w_2$ (resp., $u=w_2$ and $v=w_1$),
then
   $p \in \P_1$ (resp., $\P_2$)  satisfies $\size(p)=x_i$ for some $i\in [1,3m]$
(resp.,  $\size(p)=B$).
We first create a set $U_p \cup V_p$ of $2x$ new vertices 
with $|U_p|=|V_p|=x$, where
let $x=\size(p)$  (note that $x~(=\size(p))$ is an integer).
Let $T_{u,p}$ (resp., $T_{v,p}$) 
be an in-tree (resp., out-tree) obtained by introducing some new 
vertices and directed edges
  so that
$U_p$ (resp., $V_p$) 
is the set of leaves and the in-degree (resp., out-degree) 
of every vertex not in $U_p$
(resp., $V_p$) is exactly two,
where a directed tree is called an \emph{in-tree} (resp., \emph{out-tree})
if the out-degree (resp., in-degree)
 of every vertex except its root is exactly one.
Note that such a tree $T_{u,p}$ (resp., $T_{v,p}$) can be constructed by
pairing two vertices with out-degree (resp., in-degree)
 zero from leaves to the root.
We denote the root of $T_{u,p}$ (resp., $T_{v,p}$) by $r_{u,p}$
(resp., $r_{v,p}$).
We then construct the graph $G_p$
% with $V(D_a)=\{u,v\}\cup V(T_{u,a}) \cup V(T_{v,a})$
in the following manner:
\begin{alphaenumerate}
\item We add a directed edge from $u$ to every vertex in $U_p$,
 a directed edge from every vertex in $V_p$ to $v$, and a directed edge 
$(r_{u,p},r_{v,p})$.

\item  We divide every vertex  $w \in V(T_{u,p}) \cup V(T_{v,p})
\sm(U_p \cup V_p \cup \{r_{u,p},r_{v,p}\})$
into two vertices $w'$ and $w''$,
 replace every directed edge entering $w$ with one entering $w'$,
replace every directed edge leaving $w$ with one leaving $w''$,
and add a directed edge $(w',w'')$.
\end{alphaenumerate}
For simplicity, we refer to a \pro corresponding to
a directed edge $e' \in E(G_p)$, its size, and
its transit time
as a \pro $e'$, the size of $e'$, 
and the transit time of $e'$, respectively. 
Let $W_{u,p}$ (resp.,  $W_{v,p}$)  denote the set of vertices generated
in (b) by dividing every vertex 
$w \in V(T_{u,p}) 
\sm(U_p \cup \{r_{u,p}\})$ 
(resp., $V(T_{v,p}) 
\sm(V_p \cup \{r_{v,p}\})$),
and $E'$ denote
 the set of added directed edges in (b).
Let $\size(e')=2$ for every \pro $e'\in E' \cup \{(r_{u,p},r_{v,p})\}$
and   $\size(e')=1$ for all other \prods $e' \in E(G_p)$.
%Let $\size(p')=2$ for every \pro $p'$ corresponding to
%an arc in $A' \cup \{(r_{u,a},r_{v,a})\}$
%and $\size(p')=1$ for every \pro  $p'$ corresponding to any other arc in
%$A(D_a)$.
Let $\capa(w)=2$ for  all $w \in W_{u,p} \cup W_{v,p}
 \cup \{(r_{u,p},r_{v,p})\}$
and $\capa(w)=1$
for for all  $w \in U_p \cup V_p$.
Let $d^+(w)=B$ and $d^-(w)=\infty$ for all $w \in V(G_p)\sm \{u,v\}$.
Let $\transit(e)=1$ for all $e \in E(G_p)$.
Figure~\ref{size12:fig} shows an example of $G_p$ for a directed edge $e=(u,v)$
corresponding to $p \in \P$
with $\size(p)=5$.

\begin{figure}[t]
 \centering
 \includegraphics[width=0.9\textwidth]{size12}
\caption{Illustration of a graph $G_p$ for a
directed edge 
 $e=(u,v)$
corresponding to $p \in \P$
with $\size(p)=5$.
We have $\capa(w)=2$ for every vertex $w \in W_{u,p} \cup W_{v,p} \cup \{r_{u,p},r_{v,p}\}$,
 drawn as a black circle, while $\capa(w)=1$ for  all $w \in U_p \cup V_p$,
drawn as a white circle.
For every directed edge in $E' \cup \{(r_{u,p},r_{v,p})\}$, drawn as a bold arrow, 
its size  is two,
while the size of all other directed edges is one.  
}
\label{size12:fig}
\end{figure}

The following lemma implies that
the graph $G_p$ plays the same role as $e=(u,v) \in E_\P$ corresponding to
$p$.

\begin{lemma}\label{size12:lem}
If $v$ has  vacancy  at least $x$, then
all \prods in $E(G_p)$ can depart simultaneously. 
If $v$ has  vacancy  less than $x$, then
no \pro in $E(G_p[U_p \cup W_{u,p}\cup \{u,r_{u,p},r_{v,p}\}])$
can depart.
\end{lemma}
\begin{proof}
The former case clearly holds.
Consider the latter case. 
Since $v$ has  vacancy less than $x$,
 there exists a vertex $w' \in V_p$ such that $(w',v)$ cannot depart.
Here notice that  from construction of $G_p$,
\begin{equation}\label{size12:eq}
 \begin{array}{lll}
&& \mbox{every \pro  $e'\in E(G_p) \sm \{(w,v)\mid w \in V_p\}$ can arrive at $\sink(e')$ only if 
  }\\
&& \mbox{all \prods initially located in $\sink(e')$ have departed.  }
 \end{array}
\end{equation}
Indeed,  (i) for every  $e' \in E'$, we have $\capa(\sink(e'))=2$
and (ii) for every  $e' \in E(G_p) \sm  (E' \cup \{(w,v)\mid w \in V_p\})$,
either $\capa(\sink(e'))=1$ holds
or  some \pro in $E'$ is initially located at $\sink(e')$
   and  $\capa(\sink(e'))=2$.
By (\ref{size12:eq}), every \pro in $E(G_p)$ on the path from $r_{u,p}$
to $w'$ cannot also depart.
Since $(r_{u,p},r_{v,p})$ cannot depart, it follows 
 again by (\ref{size12:eq}) that 
 no \pro in $E(G_p[U_p \cup W_{u,p}\cup \{u,r_{u,p}\}])$ can depart.
\end{proof}

Let $G'$ be the graph obtained from $G$ by replacing each edge in $E_\P$
corresponding to $p \in \P$
with $G_p$, and  $\I'_{\rm \RA}$ be the corresponding instance of the
\mainpro.
By Lemma~\ref{size12:lem}, we can observe that at each time,
 it is only possible to carry out \prods 
in $E(G_{p_1})\cup E(G_{p_2}) \cup E(G_{p_3})$
$ \cup E(G_{p_4})$ 
with 
   $p_i \in \P_1$ 
 for $i\in \{1,2,3\}$,
 $p_4\in \P_2$,
and $\sum_{i=1}^3\size(p_i)=\size(p_4)~(=B)$.
Thus,  $\I'_{\rm \RA}$ is equivalent to  $\I_{\rm \RA}$.
Note that the size of 
$\I'_{\rm \RA}$ is polynomial in $m$, since every $x_i$,
$i\in [1,3m]$, is
polynomial in $m$.
Hence, 
$\I'_{\rm \RA}$ can be constructed from $\I_{\rm \RA}$ in
polynomial time.
It follows that 
 the feasibility  of  $\I'_{\rm \RA}$ is also strongly NP-complete.
 \qed

\subsection{Proof of Theorem~\ref{hard2:th}}

For two instances $\I_{\rm 3PART}$ of  \textsc{3-Partition}
 and $\I_{\rm \RA}$ of the \mainpro
in the proof of Theorem~\ref{hard1:th},
it is not difficult to see that 
there exists a schedule for  $\I_{\rm \RA}$ whose
  completion time is at most $m$ if and only if
$\I_{\rm 3PART}$ is a yes-instance of  \textsc{3-Partition}.
This follows since for completing the reallocation of all \prods
by time $m$,
it is only possible to exchange 3 \prods in $\P_1$ with
the total size $B$ for 
one \pro in $\P_2$ at each time.
Note that these arguments  need the \outdegree/\indegree  capacity
 constraints
but not
 the \house capacity constraints.
%Also, note that we can easily obtain a feasible solution for
%$\I_{\rm \RA}$ since 
%every \house has a sufficiently large %capacity.
\qed

\subsection{Proof of Theorem~\ref{binpack1:th}}

We  show the theorem
by a reduction from \textsc{Binpacking},
which is known to be 
inapproximable within a ratio of $3/2-\varepsilon$ for any $\varepsilon>0$  (e.g., see
\cite{Simchi94}).

 Take an  instance $\I_{\rm BP}=(I=\{ i \in[1,|I|]\},\size_{BP}, d)$ 
 of \textsc{Binpacking}.
In  an opposite way to 
%Section~\ref{binpackalgo:subsec},
Section~\ref{binpacking:appendix},
 we construct from the $\I_{\rm BP}$ an instance
$\I_{\rm \RA}=(\W,\P,d^+,d^-,\infty,\size,\transit)$ of the \mainpro as follows.
Let $\W=\{u\} \cup \{w_i \mid i \in [1,|I|]\}$, $d^+(w)=d$ and 
 $d^-(w)=\infty$ for all $w \in \W$.
For every item $i \in I$, we create a \pro $p_i$
with $\source(p_i)=u$, $\sink(p_i)=w_i$, and
$\size(p_i)=\size_{BP}(i)$; denote the resulting set of \prods by $\P$.
Let $\transit(p)=1$ for all $p \in \P$.
%Then,     note that
%a subset  $I'$ of $I$ can be packed into one bin if and only if
%the corresponding set $\{p_i \mid i \in I'\}$ of \prods 
% can depart from $u$ simultaneously,
%since the \outdegree capacity constraints for $u$ are satisfied. 
%Hence, it is not difficult to see that
Similarly to the observations in %Section~\ref{binpackalgo:subsec},
Section~\ref{binpacking:appendix},
 $I$ can be packed into $k$ bins
if and only if the reallocation of all \prods in $\P$ can be
 completed  at  time  $k$.
 %, by mapping a set of products 
%in $k'$-th bin into a set of \prods departing from $u$ at time $k'-1$.
Thus, 
%we have the following theorem
the theorem is proved, where we
note
 that $\I_{\RA}$ can be constructed from
 $\I_{\rm BP}$
in polynomial time and that the case of $d^+ \equiv \infty$ can be treated similarly.
\qed

\subsection{Proof of Theorem~\ref{flowshop:th}}

We prove the theorem by a reduction 
from the problem so-called \textsc{Two-Machine Flowshop with Delays (TMFD)}
 (e.g., see \cite{YHL04}).
In Problem \textsc{TMFD}, we are given two machines $M_1$ and $M_2$,
and a set $J$ of jobs. 
Every job $j \in J$ consists of two operations with an intermediate delay
$\ell_j \in \mathbb{Z}_+$;
the first (resp., second) operation is executed by $M_1$ (resp., $M_2$)
and
the time interval between the completion time of the first one
and the starting time of the second one is exactly $\ell_j$.
%where $\ell_j$ is a nonnegative integer.
Processing the first (resp., second) operation of  job $j$ takes
$p_{1j}$ (resp., $p_{2j}$), where $p_{ij}$ is a positive integer.
It follows that the completion time of job $j$ starting the first
operation
at time $\varphi(j)$ is $\varphi(j)+p_{1j}+\ell_j+p_{2j}$.
Each machine can process at most one job at any time.
The objective of \textsc{TMFD} is to find a schedule of all jobs in $J$
whose
completion time, i.e.,
$\max_{j \in J}\{\varphi(j)+p_{1j}+\ell_j+p_{2j}\}$ is minimized.
It was shown  that \textsc{TMFD} is strongly NP-hard
even if $p_{1j}=p_{2j}=1$ for all $j \in J$  \cite{YHL04}.

\begin{theorem}[\cite{YHL04}]
Problem  \textsc{TMFD} is strongly NP-hard
even if $p_{1j}=p_{2j}=1$ for all jobs $j \in J$.
\end{theorem}

 Take an  instance $\I_{\rm TMFD}=(M_1,M_2,J, \{\ell_j \mid j \in
 J\})$ 
of Problem \textsc{TMFD} 
such that $p_{1j}=p_{2j}=1$ for all $j \in J$ and
 each of $\ell_{1j}$ is polynomial in $|J|$.
From the $\I_{\rm TMFD}$, we construct an instance
$\I_{\rm \RA}=(\W,\P,d^+,d^-,\capa,\size,\transit)$ of the \mainpro as follows.
Let $\W=\{w_1,w_2\}$, $d^+(w_1)=d^+(w_2)=d^-(w_1)=d^-(w_2)=1$, 
and
 $\capa(w_1)=\capa(w_2)=\infty$.
Let
 $\P$  be the set of  \prods $p_j$, $j\in [1,|J|]$,
such that every \pro $p_j \in \P$ satisfies $\source(p_j)=w_1$,
 $\sink(p_j)=w_2$,  $\size(p_j)=1$, and $\transit(p_j)=\ell_j+1$.
Note that $\I_{\rm \RA}$ can be constructed from
 $\I_{\rm TMFD}$
in polynomial time.
For proving Theorem~\ref{flowshop:th}, we will show that
there exists a schedule for  $\I_{\rm TMFD}$
whose completion time is at most $T$
if and only if
there exists a schedule for  $\I_{\rm \RA}$
whose completion time is at most $T-1$.

Assume that there exists a schedule $\depart'$ for  $\I_{\rm TMFD}$
whose completion time is at most $T$;
let $\depart'(j)$ denote the time when job $j \in J$ starts 
the first operation in the schedule $\depart'$.
Then, job $j$ starts the second operation at time
$\depart'(j)+\ell_j+1$
by $p_{1j}=1$.
Since each machine can process at most one job at any time,
we have
\begin{equation}\label{flowshop1:eq}
 \depart'(j)\neq \depart'(j') \mbox{ and }
\depart'(j)+\ell_j+1 \neq \depart'(j')+\ell_{j'}+1
\end{equation}
for every two distinct jobs $j,j' \in J$. 
Note that $\max_{j \in J}\{\depart'(j)+\ell_j+2\} \leq T$ by $p_{2j}=1$.
Let $\depart$ be the schedule for  $\I_{\rm \RA}$
such that $\depart(p_j)=\depart'(j)$ for $p_j \in \P$.
Then, \pro $p_j$ arrives at $w_2$ at time $\depart'(j)+\ell_j+1$
by $\transit(p_j)=\ell_j+1$.
By \eqref{flowshop1:eq} and $\size(p_j)=1$, 
 $\depart$ satisfies the \outdegree and \indegree capacity constraints.
By $\capa(w_1)=\capa(w_2)=\infty$, it follows that $\depart$ is feasible.
The completion time for $\depart$ is
$\max_{p_j \in \P}\{\depart'(j)+\ell_j+1\}\leq T-1$.

Assume that
 there exists a schedule $\depart$ for  $\I_{\rm \RA}$
whose completion time is at most $T-1$.
Then since $d^+(w_1)=d^-(w_2)=1$ and $\size(p_j)=1$ and $\transit(p_j)=\ell_j+1$ for  $p_j\in \P$,
it follows by
the \outdegree and \indegree capacity constraints 
that
\begin{equation}\label{flowshop2:eq}
 \depart(p_j)\neq \depart(p_{j'}) \mbox{ and }
\depart(p_j)+\ell_j+1 \neq \depart(p_{j'})+\ell_{j'}+1
\end{equation}
for every two distinct \prods $p_j,p_{j'} \in \P$. 
Note that $\max_{p_j \in \P}\{\depart(p_j)+\ell_j+1\}\leq T-1$.
Let $\depart'$ be the schedule for  $\I_{\rm TMFD}$
such that  job $j \in J$ starts 
the first operation at time $\depart(p_j)$.
Then, job $j$ starts the second operation at time
$\depart(p_j)+\ell_j+1$
and completes it at time 
$\depart(p_j)+\ell_j+2$.
Since each machine processes at most one job at any time in $\depart'$
by \eqref{flowshop2:eq} and $p_{1j}=p_{2j}=1$  for all $j \in J$, 
it follows that $\depart'$ is a feasible schedule for 
 $\I_{\rm TMFD}$.
The completion time for $\depart'$ is
$\max_{j \in J}\{\depart(p_j)+\ell_j+2\} \leq T$.
\qed

}
\end{document}